\newcommand{\pd}[2]{\frac{\partial#1}{\partial#2}} 
\newcommand{\E}{\mathbb{E}} 
\newcommand{\p}{\mathbb{P}} 
\newcommand{\eps}{\varepsilon} 
\DeclareMathOperator*{\st}{st}
\DeclareMathOperator*{\cx}{cx}
\DeclareMathOperator*{\icx}{icx}
\DeclareMathOperator*{\Bin}{Bin}
\DeclareMathOperator{\R}{\mathbb{R}}
\DeclareMathOperator{\mcF}{\mathcal{F}}
\def\bfmath#1{\mathchoice
        {\mbox{\boldmath$#1$}}%
        {\mbox{\boldmath$#1$}}%
        {\mbox{\boldmath$\scriptstyle#1$}}%
        {\mbox{\boldmath$\scriptscriptstyle#1$}}}%
\def\bfa{\bfmath{a}}
\def\bfu{\bfmath{u}}
\def\bfx{\bfmath{x}}
\def\bfy{\bfmath{y}}
\def\bfA{\bfmath{A}}
\def\bfF{\bfmath{F}}
\def\bfM{\bfmath{M}}
\def\bfP{\bfmath{P}}
\def\bfR{\bfmath{R}}
\def\bfX{\bfmath{X}}
\def\bfZ{\bfmath{Z}}
\def\cA{{\mathcal A}}
\def\cB{{\mathcal B}}
\def\cF{{\mathcal F}}
\newtheorem{theorem}{Theorem}[section]
\newtheorem{lemma}[theorem]{Lemma}
\newtheorem{corollary}[theorem]{Corollary}
\newtheorem{conjecture}[theorem]{Conjecture}
\newtheorem{definition}{Definition}  
\begin{document} 


\title{Coexistence in preferential attachment networks}
\author{
	Ton\'ci Antunovi\'c
	\thanks{University of California, Los Angeles; \texttt{tantunovic@math.ucla.edu}.}
	\and
	Elchanan Mossel
	\thanks{University of Pennsylvania and University of California, Berkeley; \texttt{mossel@wharton.upenn.edu}; supported by NSF grant DMS 1106999 and by DOD ONR grant N000141110140.}
	\and
	Mikl\'os Z. R\'acz
	\thanks{University of California, Berkeley; \texttt{racz@stat.berkeley.edu}; supported by a UC Berkeley Graduate Fellowship, by NSF grant DMS 1106999 and by DOD ONR grant N000141110140.}}
\date{\today}

\maketitle


\begin{abstract}
We introduce a new model of competition on growing networks. This extends the preferential attachment model, 
with the key property that node choices evolve simultaneously with the network. 
When a new node joins the network, it chooses neighbours by preferential attachment, and selects its type based on the number of initial neighbours of each type. 
The model is analysed in detail, and 
in particular, we determine the possible proportions of the various types in the limit of large networks. 
An important qualitative feature we find is that, in contrast to many current theoretical models, often several competitors will coexist. 
This matches empirical observations in many real-world networks. 
\end{abstract} 

\bigskip\noindent
{\bf 2010 Mathematics Subject Classification:}
Primary: 05C80; Secondary: 60C05, 60G99



\section{Introduction} 

A major challenge in  complex networks is understanding the interplay between the evolution of the network and the dynamical processes that take place on it. 
Many networks evolve dynamically, e.g., 
the citation graph grows every day with new papers being published, 
and friendships are created and broken every minute. 
The changes in network structure are closely related to 
the processes on these nodes, e.g.,  
the content of a Facebook page is correlated with the friendship dynamics. 

In the past fifteen years there have been many studies on processes on networks~\cite{barrat2008dynamical}, e.g., epidemic spreading~\cite{pastor2001epidemic}, evolutionary games~\cite{ohtsuki2006simple}, and information cascades~\cite{watts2002simple}. 
However, most considered the network as fixed, and then studied the process of interest on static graphs. 
This static viewpoint hides the fact that the networks and the processes on them \emph{coevolve}. 
Although the study of such coevolution was initiated over a decade ago~\cite{skyrms2000dynamic}, it is only recently beginning to be explored in greater depth (see~\cite{gross2008adaptive,holme2012temporal} and references therein), 
and thus many questions still remain. 
In particular, in the context of type adoption on networks, an important open problem is to understand the phenomenon of coexistence of competing types.

Here, we present a 
\emph{simple} connection which couples the growth of a network and nodal dynamics. In particular, we focus on \emph{type adoption dynamics}, where each node has a single type from a finite set of types. When a new node joins the network, both its connections to the existing nodes and its type are influenced by the current structure of the network. As a particular instance of such a general model, we consider the dynamics where the new node chooses its connections according to linear preferential attachment~\cite{barabasi1999emergence,bollobas2004diameter,BeBoChSa:14}, and then chooses its type based on how many of its neighbours are of a certain type; see Fig.~\ref{fig:model} for an illustration.

\begin{figure}[h!]
\centering
\includegraphics[width=0.35\textwidth]{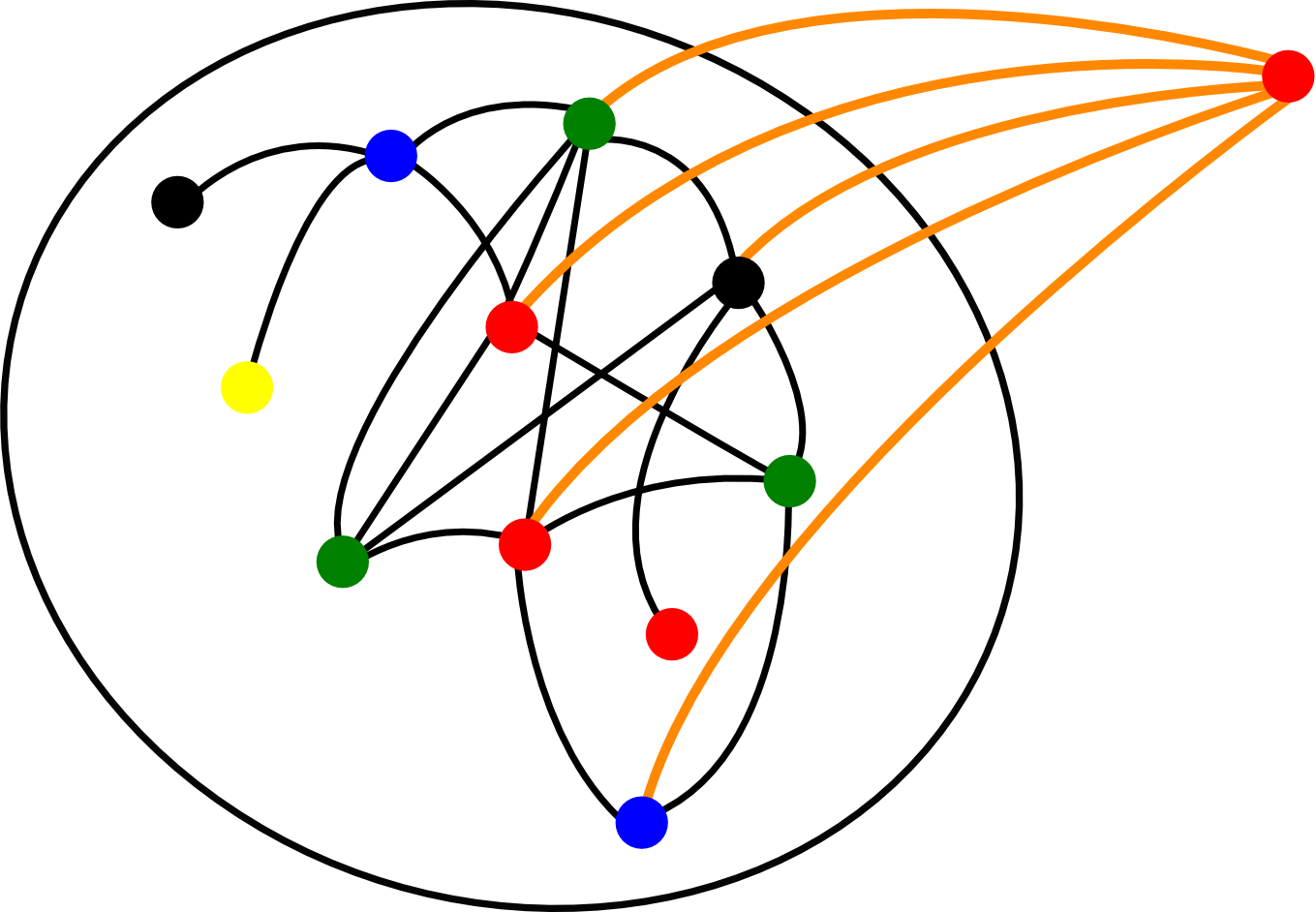}
\caption{\textbf{Illustration of the model.} Each node in the initial graph has a colour (type) from a finite set of colours. At each time step a new node is added to the graph and connected to $m$ existing nodes according to linear preferential attachment (here $m = 5$). When the new node joins the graph it also adopts a colour: it picks its colour according to a probability distribution which depends on the colours of its initial neighbours. See Section~\ref{sec:model} for details.}
\label{fig:model}
\end{figure}

This model is of interest in many cases where preferential attachment is a good representation of the evolution of the network structure and where competition between types  occurs as soon as the node joins the network. 
A natural example is the network of scientific papers linked by citations~\cite{redner1998popular}. 
Often there are two opposing viewpoints on an issue, leading to competition between the perspectives. 
Another example is that of product adoption via word-of-mouth recommendations on social networks, such as a new cell-phone user choosing a cell-phone provider based on her friends' decisions.

A key feature of this model is the simplicity of its analysis. 
We explicitly calculate the possible ratios of the types in the limit of large networks. 
The most important property of the model is that for many settings of the parameters, none of the types dominate, which matches empirical observations in many current networks. 
These results thus provide a theoretical understanding of coexistence of types in preferential attachment networks. 
They should be compared to results on other models of competition on scale-free networks where coexistence is rarely achieved, and typically the ``winner takes all''~\cite{prakash2012winner,deijfen2013winner}. 
(See also~\cite{antunovic2011competing,lelarge2012diffusion} for results on related models.) 
The chief difference between this model and related ones in economics~\cite{arthur1990positive,arthur1994increasing}, and marketing~\cite{banerjee2004word} is the explicit modeling of the underlying network structure.

We next describe the model and our results in more detail.

\subsection{Model}\label{sec:model} 

For simplicity, we describe the model in the case of two types, which we refer to as red and blue colours (in the following, we use the terms ``type'' and ``colour'' interchangeably). 
The model naturally generalises to any number of types; 
see Section~\ref{sec:many} for a description and results. 
The main feature of the model is that it incorporates and couples two processes: a network growing process and a type adoption process. 

We consider a natural variant of the linear preferential attachment model~\cite{barabasi1999emergence,bollobas2004diameter,BeBoChSa:14} as the network growing process.\footnote{The preferential attachment model was introduced by Barab\'asi and Albert~\cite{barabasi1999emergence}; 
however, as pointed out by Bollob\'as and Riordan~\cite{bollobas2004diameter}, 
the definition in~\cite{barabasi1999emergence} left some details of the process up to interpretation. 
There are several natural ways in which to define the process precisely: 
one option is the model introduced in~\cite{bollobas2004diameter}, 
while here we consider a slighly different variant (the ``independent model'' in~\cite{BeBoChSa:14}).} 
Starting from an initial graph $G_0$, at each time step an additional node $v$ is added to the graph, together with $m$ edges connecting $v$ to existing nodes in the graph. 
Each edge is chosen independently, and according to linear preferential attachment, i.e., the probability that a given edge connects $v$ to a given existing node $u$ is proportional to the degree of $u$.\footnote{In particular, if $m > 1$ then $v$ might connect to an existing node with multiple edges, resulting in a \emph{multigraph}. Multiple edges between two nodes can be thought of as stronger ties between the agents represented by the nodes.}

The type adoption process on the network is as follows. All nodes in the initial graph $G_0$ start with a type, i.e., they are either red or blue. Each additional node $v$ receives a colour when it is added to the graph, and this colour depends on the colours of the nodes it connects to when it is added. 
Suppose that out of the $m$ edges connecting the new node $v$ to existing ones, exactly $k$ connect to a red node. Then, conditioned on this event, $v$ becomes red with probability $p_k$ and blue with probability $1-p_k$. 
The probabilities  $p_k \in \left[0,1\right]$, where $0 \leq k \leq m$, are parameters of the model, 
and can capture a wide range of behaviour. 
A natural choice is the \emph{linear model}, when $p_k = k / m$ for all $k$. 
However, \emph{nonlinear models}, when $p_k \neq k/m$ for some $k$, can capture diminishing and increasing returns, and even more complex behaviour.

\subsection{Results}\label{sec:results} 

We are interested in the fraction of nodes of each type. 
This corresponds to the fraction of scientific papers sharing a viewpoint on a given topic or to the fraction of individuals using a given company's product, i.e., the company's market share. 
Our main results characterise the possible limiting fractions of the colours as the size of the network goes to infinity. 
The results thus provide a complete phase diagram of the asymptotic behaviour of the process; see Fig.~\ref{fig:phase} for an illustration. 
These show that if there is negative reinforcement, then the types coexist, while if there is strong positive reinforcement, then the winner takes all. Our results precisely determine where the transition happens, and show that for most parameter values the types coexist. 

To describe our results precisely, we introduce some notation. 
Let $G_n$ denote the graph when $n$ nodes have been added to the initial graph $G_0$,  
let $A_n$ and $B_n$, respectively, denote the number of red and blue nodes, respectively, in $G_n$, and let $a_n := \frac{A_n}{A_n + B_n}$ and $b_n := \frac{B_n}{A_n+B_n}$ denote the corresponding normalised fractions. 
Furthermore, let $X_n$ (resp., $Y_n$) denote the sum of the degrees of red (resp., blue) nodes in $G_n$, and let $x_n := \frac{X_n}{X_n + Y_n}$ and $y_n := \frac{Y_n}{X_n + Y_n}$ denote the normalised fractions. We are primarily interested in the asymptotic proportion of red and blue nodes, i.e., in the limits $\lim_{n\to \infty} a_n$ and $\lim_{n\to \infty} b_n = 1- \lim_{n \to \infty} a_n$. 

As we shall see, a key role in the asymptotic behaviour of the process is played by the polynomial 
\begin{equation}\label{eq:P}
 P \left( z \right) = \frac{1}{2} \sum_{k=0}^m \binom{m}{k} z^k \left( 1 - z \right)^{m-k} \left( p_k - \frac{k}{m} \right),
\end{equation}
and in particular its zero set, denoted by $Z_P := \left\{ z \in \left[ 0, 1 \right] : P \left( z \right)  = 0 \right\}$. 
This is because, as we shall show, $\left\{ a_n \right\}_{n \geq 0}$ behaves approximately like the solution to a stochastic version of the ordinary differential equation (ODE) $dz / dt = P \left( z \right)$, and thus intuitively the trajectory of $\left\{ a_n \right\}_{n \geq 0}$ should approximate the trajectory $\left\{ z \left( t \right) \right\}_{t \geq 0}$ of this ODE. 

The following two theorems confirm this intuition. There is an important distinction between the \emph{linear model} and \emph{nonlinear models}, which is due to the fact that in the linear model the polynomial $P$ is identically zero and thus $Z_P = \left[0, 1 \right]$, while in nonlinear models the zero set $Z_P$ is a finite set.

\begin{theorem}[Linear model]\label{thm:continuous_case}
Suppose that $p_k = k/m$ for all $0 \leq k \leq m$, and that $X_0, Y_0 > 0$. 
Then $a_n$ converges almost surely as $n \to \infty$.  
Furthermore, the limiting distribution of $a := \lim_{n \to \infty} a_n$ has full support on the interval $[0,1]$, has no atoms (i.e., for every $z\in\left[0,1\right]$, $\p \left( a = z \right) = 0$), and depends only on $X_0$, $Y_0$, and $m$.
\end{theorem}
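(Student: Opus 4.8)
The plan is to build everything on the \emph{degree fraction} $x_n = X_n/(X_n+Y_n)$ rather than the node fraction $a_n$, because $x_n$ turns out to be a martingale exactly when the model is linear. First I would record the two deterministic bookkeeping facts: each new node contributes $m$ to the total degree through its own $m$ new edge-endpoints and $m$ through the endpoints it attaches to, so $S_n := X_n + Y_n = X_0 + Y_0 + 2mn$ is deterministic; and if the arriving node sends $k$ of its $m$ edges to red nodes and then receives color red (indicator $R$), the red degree increases by $\Delta := k + mR$. Writing $\cF_n$ for the $\sigma$-algebra generated by the first $n$ steps, the attachment rule gives $k \mid \cF_n \sim \Bin(m, x_n)$, and the linear rule $p_k = k/m$ gives $\p(R = 1 \mid k) = k/m$. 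Hence $\E[\Delta \mid \cF_n] = m x_n + m x_n = 2m x_n$, so that
\begin{equation*}
\E[X_{n+1} \mid \cF_n] = X_n + 2m x_n = X_n \cdot \frac{S_n + 2m}{S_n}, \qquad \E[x_{n+1}\mid \cF_n] = \frac{\E[X_{n+1}\mid\cF_n]}{S_n + 2m} = x_n.
\end{equation*}
Thus $x_n$ is a bounded martingale and converges almost surely (and in $L^2$) to a limit $x_\infty$; this is the analytic shadow of the fact that $P \equiv 0$ in the linear model.

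Next I would transfer the convergence to $a_n$. The same computation gives $\p(\text{new node red}\mid\cF_n) = \E[k/m \mid \cF_n] = x_n$, so $M_n := A_n - \sum_{j<n} x_j$ is a martingale with increments bounded by $1$; therefore $M_n/n \to 0$ almost surely, and since $x_j \to x_\infty$ its Ces\`aro averages converge to the same limit. Writing $a_n = (A_0 + \sum_{j<n} x_j + M_n)/(A_0 + B_0 + n)$ then yields $a_n \to x_\infty$ almost surely, which is the claimed convergence. The dependence only on $X_0, Y_0, m$ is immediate once this is set up: the pair $(X_n, Y_n)$ is a Markov chain whose transition law depends solely on $(X_n, Y_n)$ and $m$ (through $\Bin(m, x_n)$ and $p_k = k/m$), so the law of the trajectory $(x_n)_{n\ge 0}$, and hence of $x_\infty$, is a function of the initial state $(X_0, Y_0)$ and $m$ alone.

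For full support I would use a ``reach then settle'' argument. Since $X_0, Y_0 > 0$, we have $x_n \in (0,1)$ for every finite $n$; moreover at each step $x_{n+1} > x_n$ with probability at least $x_n^m$ (all edges red, $R = 1$) and $x_{n+1} < x_n$ with probability at least $(1 - x_n)^m$ (all edges blue), the corresponding displacements being of order $1/n$. Because $\sum 1/n$ diverges while the favorable step probabilities stay bounded below as long as $x_n$ ranges over a compact subinterval, a finite sequence of favorable steps steers $x_n$ into any target interval $(c - \eps, c + \eps)$ with positive probability. Having reached such an interval at a large time $N$, I would control the remaining fluctuation through the tail variance $\Var(x_\infty \mid \cF_N) = \sum_{j \ge N} \E\!\left[\Var(x_{j+1} - x_j \mid \cF_j)\mid\cF_N\right]$, which I compute below to be of order $1/N$; Chebyshev then keeps $x_\infty$ within $\eps$ of $x_N$ with probability at least $\tfrac12$ for $N$ large, so $\p(x_\infty \in (c-2\eps, c+2\eps)) > 0$. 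This gives full support on all of $[0,1]$, endpoints included.

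The main obstacle is the absence of atoms. The key input is the conditional variance of one step: using $x_{n+1} - x_n = (\Delta - 2m x_n)/(S_n + 2m)$ and $\Var(\Delta \mid \cF_n) = m(m+3)\,x_n(1-x_n)$ (a short moment computation using $\operatorname{Cov}(k, R) = x_n(1-x_n)$), one gets
\begin{equation*}
\Var(x_{n+1} - x_n \mid \cF_n) = \frac{m(m+3)\,x_n(1-x_n)}{(S_n + 2m)^2},
\end{equation*}
which is strictly positive throughout the interior and of order $x_n(1-x_n)/n^2$; summing the tail gives $\Var(x_\infty \mid \cF_n) \asymp x_\infty(1-x_\infty)/n$, the estimate used above. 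To rule out an atom at a point $c$, I would argue by anti-concentration of the tail increment $x_\infty - x_n = \sum_{j \ge n}(x_{j+1} - x_j)$: since its conditional increments are of order $1/j$ while its conditional standard deviation is of order $n^{-1/2}$, a martingale central limit theorem (the Lindeberg ratio is $O(n^{-1/2})$) shows that the conditional law of $x_\infty$ given $\cF_n$ is asymptotically Gaussian at scale $n^{-1/2}$ on the interior, so that $\sup_c \p(x_\infty = c \mid \cF_n) \to 0$ almost surely. On the other hand, L\'evy's $0$--$1$ law gives $\p(x_\infty = c \mid \cF_n) \to \mathbf{1}\{x_\infty = c\}$; the two are compatible only if $\p(x_\infty = c) = 0$. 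The delicate point, and the step I expect to require the most care, is making the conditional CLT for the \emph{tail} of a convergent martingale rigorous with the random normalization $\Var(x_\infty\mid\cF_n)$, together with a separate (easier) argument excluding atoms at the endpoints $0$ and $1$.
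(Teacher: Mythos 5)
Your martingale argument for $x_n$, the transfer to $a_n$ via the bounded-increment martingale $M_n$, the observation that the law of the trajectory (hence of the limit) is determined by $(X_0,Y_0,m)$, and the ``reach then settle'' argument for full support all match the paper's proof: the paper likewise notes that a finite sequence of forced steps places $x_{n_0}$ in a middle third of the target interval with positive probability, and then applies the tail-variance bound $\E\left(\left(x-x_{n_0}\right)^2 \mid \cF_{n_0}\right)\le 1/n_0$ together with Chebyshev. The only divergence is in the no-atoms step. There the paper adapts Pemantle's time-dependent P\'olya urn arguments: for an interior point $r$ it shows by a second-moment computation that, conditionally on $\cF_n$, with probability at least $c^2/32$ the process exits $\left[r-\tfrac{c}{2\sqrt n},\, r+\tfrac{c}{2\sqrt n}\right]$ and never returns, contradicting the assumed atom; for the endpoints it dominates $x_n$ in the convex order by a standard P\'olya urn adding $2m$ balls per step, whose Beta limit has no atom at $0$ or $1$.

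Your alternative for the interior --- a conditional CLT for the tail $x_\infty-x_n$ at scale $n^{-1/2}$ played against L\'evy's $0$--$1$ law --- is a genuinely different idea and plausibly workable, but the step you flag as ``requiring the most care'' is precisely the entire content of the hard part of the theorem: you need a quenched CLT for the conditional law given $\cF_n$, valid almost surely (or at least on the putative atom event), with the random normalization $\Var\left(x_\infty\mid\cF_n\right)$ shown to be comparable to the realized quadratic variation. None of this is supplied, and it is not an off-the-shelf martingale CLT; the paper's escape argument exists exactly to avoid it, using only second moments. Moreover, your deferral of the endpoints to ``a separate (easier) argument'' understates the difficulty: at $0$ and $1$ the one-step conditional variance degenerates like $x_n(1-x_n)/n^2$, so the CLT heuristic gives nothing there, and the paper must introduce the convex-order comparison machinery ($\leq_{\st}$, $\leq_{\cx}$, $\leq_{\icx}$) specifically for this case. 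So the proposal coincides with the paper through full support, but the atomlessness claim is left with two genuine gaps: the unproved conditional tail CLT in the interior and the entirely missing endpoint argument.
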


See Figure~\ref{fig:hist} for empirical histograms in the linear model with various initial parameters and various values of $m$, 
which show that a wide variety of limiting behaviours is possible.

\begin{figure}[h!]
    \centering
    \begin{subfigure}[h]{0.32\textwidth}
	\centering
	\includegraphics[width=\textwidth]{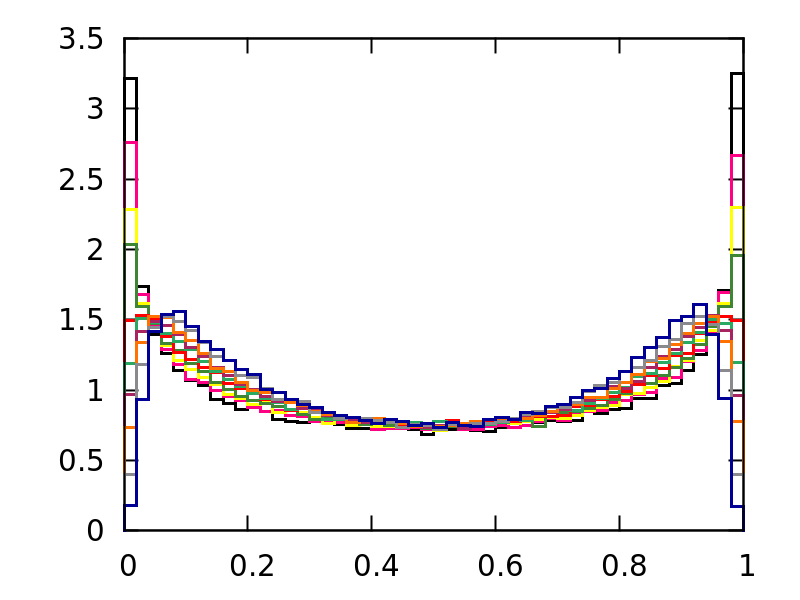}
	\caption{$A_0 = B_0 = 1, X_0 = Y_0 = 1.$}
	\label{fig:hist1}
    \end{subfigure}
    \ \ 
    \begin{subfigure}[h]{0.32\textwidth}
	\centering
	\includegraphics[width=\textwidth]{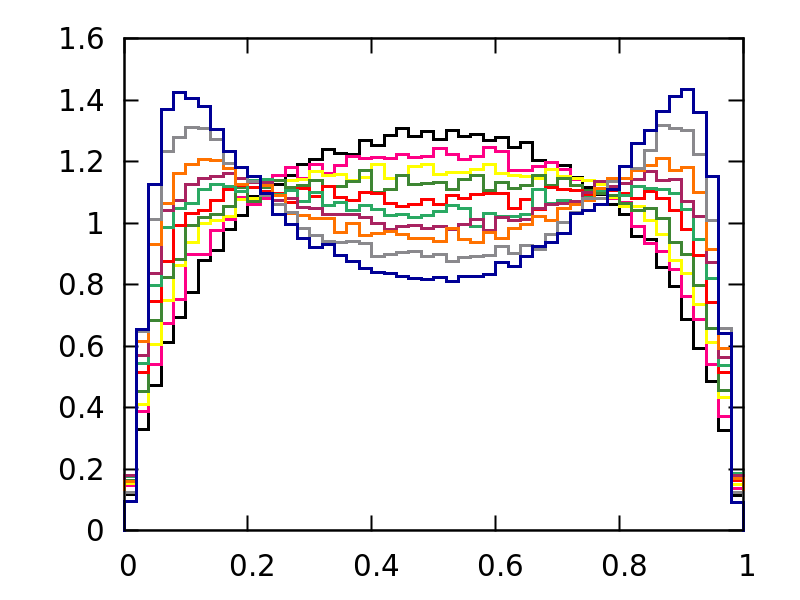}
	\caption{$A_0 = B_0 = 2, X_0 = Y_0 = 4.$}
	\label{fig:hist2}
    \end{subfigure}
    \ \ 
    \begin{subfigure}[h]{0.32\textwidth}
	\centering
	\includegraphics[width=\textwidth]{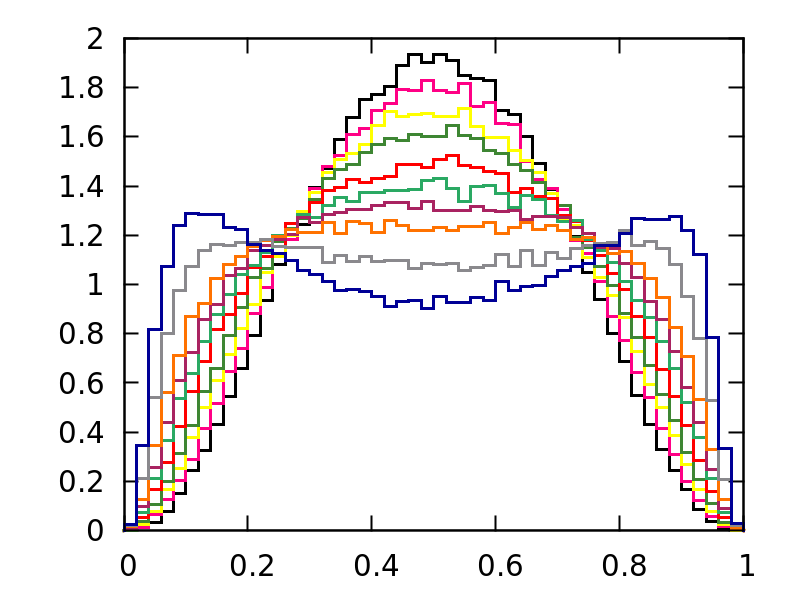}
	\caption{$A_0 = B_0 = 3, X_0 = Y_0 = 9.$}
	\label{fig:hist3}
    \end{subfigure}
    \\
    \begin{subfigure}[h]{0.32\textwidth}
	\centering
	\includegraphics[width=\textwidth]{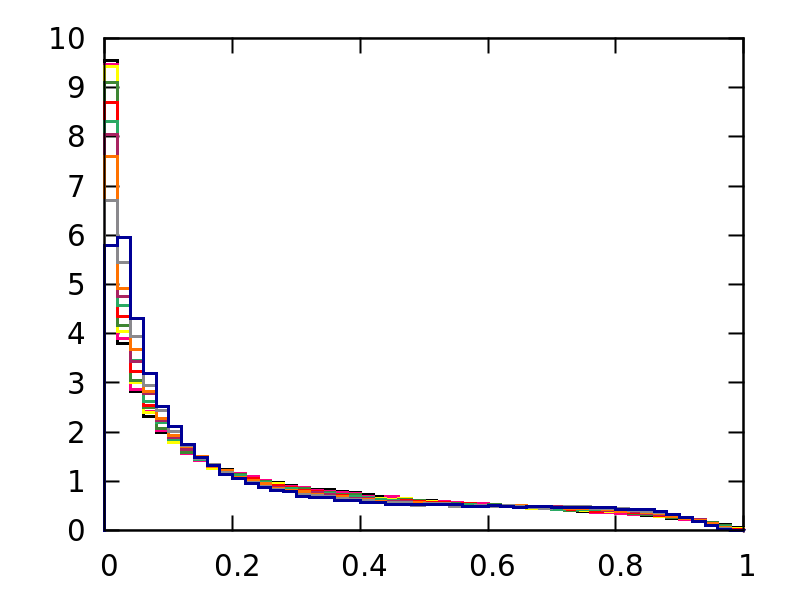}
	\caption{$A_0 = 1, B_0 = 2, X_0 = 1, Y_0 = 3.$}
	\label{fig:hist4}
    \end{subfigure}
    \ \ 
    \begin{subfigure}[h]{0.32\textwidth}
	\centering
	\includegraphics[width=\textwidth]{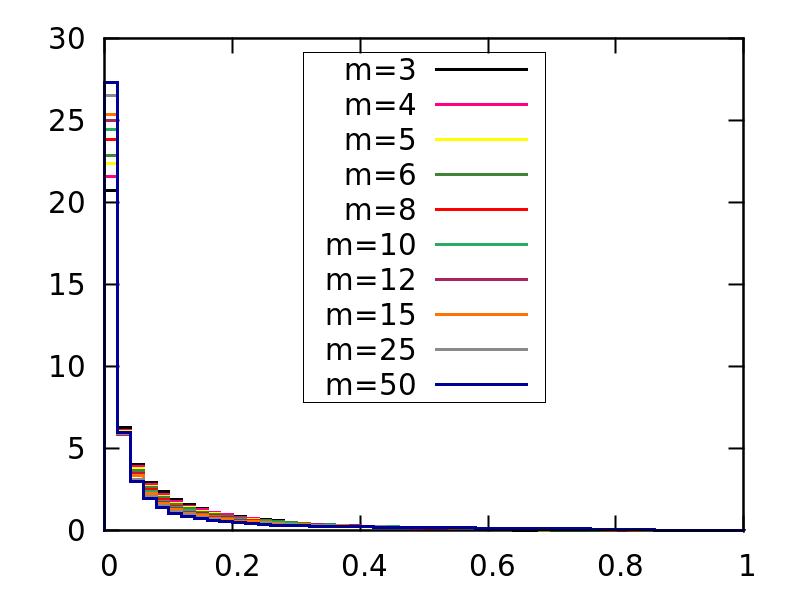}
	\caption{$A_0 = 1, B_0 = 4, X_0 = 1, Y_0 = 11.$}
	\label{fig:hist5}
    \end{subfigure}
    \ \ 
    \begin{subfigure}[h]{0.32\textwidth}
	\centering
	\includegraphics[width=\textwidth]{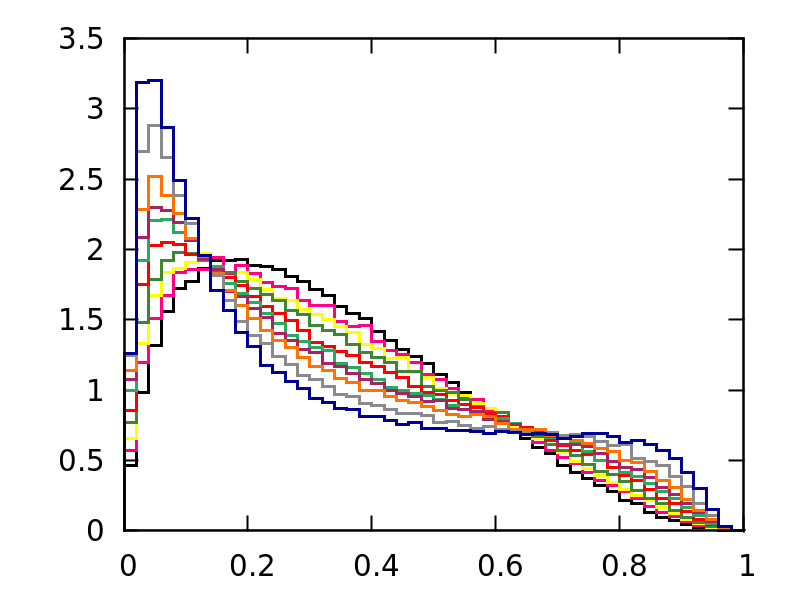}
	\caption{$A_0 = 2, B_0 = 3, X_0 = 4, Y_0 = 8.$}
	\label{fig:hist6}
    \end{subfigure}
    \caption{\textbf{Empirical histograms of $a_n$ in the linear model} for $n = 10^5$, from $2 \times 10^5$ simulations. Each subfigure has different initial parameters (see subcaptions), and in each case empirical histograms for ten different values of $m$ are plotted. See Fig.~\ref{fig:hist5} for the key to all plots.}
    \label{fig:hist}
\end{figure}

\begin{theorem}[Nonlinear models]\label{thm:point_mass_case}
Suppose that $p_k \neq k/m$ for some $0 \leq k \leq m$, and that $X_0, Y_0 > 0$. 
Then $a_n$ converges almost surely as $n \to \infty$. 
Furthermore, the limit is a point in the finite set $Z_P$.
\end{theorem}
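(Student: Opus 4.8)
The plan is to recognize the pair $(x_n, a_n)$ as a two-timescale stochastic approximation scheme and to exploit a simple algebraic identity linking the degree fraction $x_n$ to the polynomial $P$. First I would introduce the natural filtration $\mathcal{F}_n$ generated by the first $n$ steps and record the \emph{exact} one-step conditional means. Because the model specifies that the $m$ new edges are chosen independently, the number $k$ of red neighbors of the incoming node is, conditionally on $\mathcal{F}_n$, exactly $\mathrm{Bin}(m, x_n)$, so there is no binomial approximation to control. Since the incoming node contributes degree $k$ to the existing red nodes and degree $m$ to the red total exactly when it becomes red, a short computation gives
\begin{equation*}
\E\!\left[ X_{n+1} - X_n \mid \mathcal{F}_n \right] = m\, x_n + m\, q(x_n), \qquad q(z) := \sum_{k=0}^m \binom{m}{k} z^k (1-z)^{m-k} p_k .
\end{equation*}
Writing $T_n = X_n + Y_n = T_0 + 2mn$ and using the key identity $q(z) - z = 2 P(z)$ (which is exactly the statement that the $\mathrm{Bin}(m,z)$-mean of $k/m$ is $z$), the recursion for $x_{n+1} = X_{n+1}/T_{n+1}$ becomes
\begin{equation*}
x_{n+1} = x_n + \frac{2m}{T_{n+1}}\, P(x_n) + \frac{1}{T_{n+1}}\, \xi_{n+1},
\end{equation*}
where $\xi_{n+1} := X_{n+1} - X_n - \E[X_{n+1}-X_n \mid \mathcal{F}_n]$ is a bounded martingale difference. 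With $\gamma_n := 2m/T_{n+1} \sim 1/n$ this is a textbook Robbins--Monro scheme: $\sum_n \gamma_n = \infty$, $\sum_n \gamma_n^2 < \infty$, the drift $P$ is continuous, and the noise is bounded. Moreover $X_0, Y_0 > 0$ forces $X_n \geq X_0 > 0$ and $Y_n \geq Y_0 > 0$, so $x_n \in (0,1)$ throughout and no boundary degeneracy arises.

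Second, I would invoke the ODE method for stochastic approximation to conclude that $x_n$ converges almost surely to a \emph{single} point of the zero set $Z_P$ of the drift. In the nonlinear case $P$ is a nonzero polynomial of degree at most $m$, so $Z_P$ is finite; the limit set of the scheme is a connected, internally chain transitive set for the flow of $\dot z = P(z)$, and for a one-dimensional flow the chain-recurrent set is exactly the equilibrium set $Z_P$. A connected subset of the finite set $Z_P$ is a singleton, so $x_n \to x_*$ almost surely for some random $x_* \in Z_P$. I emphasize that there is no need to rule out convergence to unstable equilibria, since the statement only asserts convergence to \emph{some} point of $Z_P$.

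Third, I would transfer this from the degree fraction $x_n$ to the node fraction $a_n$. The analogous computation gives $\E[A_{n+1} - A_n \mid \mathcal{F}_n] = q(x_n)$, and with $\beta_n := 1/(N_0 + n + 1) \sim 1/n$ the recursion for $a_n = A_n/(N_0+n)$ rearranges to
\begin{equation*}
a_{n+1} - x_* = (1 - \beta_n)(a_n - x_*) + \beta_n\bigl( q(x_n) - x_* \bigr) + \beta_n\, \xi'_{n+1},
\end{equation*}
with $\xi'_{n+1}$ a bounded martingale difference. The crucial point is the identity evaluated at the limit: since $P(x_*) = 0$ we have $q(x_*) = x_*$, so the forcing $\delta_n := q(x_n) - x_*$ tends to $0$ by continuity of $q$. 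Solving this linear recursion explicitly, the initial condition is damped by $\prod_k (1 - \beta_k) \to 0$, the $\delta_n$-term vanishes by a Toeplitz averaging argument, and the martingale term vanishes via the almost-sure convergence of $\sum_j \beta_j\, \xi'_{j+1}$ (guaranteed by $\sum_j \beta_j^2 < \infty$) together with Kronecker's lemma. Hence $a_n \to x_*$ almost surely.

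The main obstacle is Step 2: upgrading the heuristic ``$x_n$ tracks $\dot z = P(z)$'' to rigorous almost-sure convergence to a single point. The substance there is the standard but nontrivial stochastic-approximation machinery (identifying the interpolated process as an asymptotic pseudotrajectory of the flow and characterizing its chain-recurrent set), with the finiteness of $Z_P$ being precisely what promotes ``convergence to $Z_P$'' into ``convergence to a point.'' Steps 1 and 3, by contrast, are elementary once the exactness of the binomial and the identity $q - \mathrm{id} = 2P$ are in hand.
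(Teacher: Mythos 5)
Your proposal is correct and follows essentially the same route as the paper: you identify $\left\{ x_n \right\}_{n \geq 0}$ as a stochastic approximation process with drift $P$ (the paper's Lemma~\ref{lem:stoch_appx}), conclude convergence to a point of the finite zero set $Z_P$ via the ODE method (the paper's quoted Theorem~\ref{thm:conv_zero_set_F}, with the same observation that finiteness of $Z_P$ plus connectedness of the limit set forces a single limit point), and transfer the conclusion to $a_n$ using $\E\left( A_{n+1}-A_n \mid \mcF_n \right) = x_n + 2P\left(x_n\right)$ together with a bounded-increment martingale and Ces\`aro averaging (the paper's Lemma~\ref{lem:from_half_edges_to_vertices}, which you repackage as an explicit linear recursion).
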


In nonlinear models we thus know that the asymptotic proportion of red nodes is contained in the finite zero set $Z_P$. 
But which points $z \in Z_P$ arise as the limiting proportion with positive probability? 
This depends on the behaviour of the polynomial around the zero $z \in Z_P$. 
Intuitively, since $\left\{ a_n \right\}_{n \geq 0}$ is a stochastic system, we expect that stable trajectories of the ODE $dz / dt = P \left( z \right)$ should appear, but unstable trajectories should not. 
This intuition is confirmed and formalized in the following three theorems. 

\begin{theorem}[Nonlinear models, stable equilibria]\label{thm:nonlin_stable}
 Suppose that $p_k \neq k/m$ for some $0 \leq k \leq m$, and that $X_0, Y_0 > 0$. 
Suppose that $z \in Z_P \cap \left( 0, 1 \right)$ is such that there exists an $\eps > 0$ such that $P > 0$ on $\left( z - \eps, z \right)$ and $P < 0$ on $\left( z, z + \eps \right)$. 
Then $\p \left( \lim_{n \to \infty} a_n = z \right) > 0$, i.e., $a_n$ converges to $z$ with positive probability. 
Similarly, if $0 \in Z_P$ and $P < 0$ on $\left( 0, \eps \right)$, or if $1 \in Z_P$ and $P > 0$ on $\left( 1 - \eps, 1 \right)$, then there is a positive probability of convergence of $a_n$ to $0$ or $1$, respectively.
\end{theorem}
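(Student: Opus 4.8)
The plan is to recognize $\{x_n\}$ as a one-dimensional stochastic approximation scheme for the ODE $dz/dt = P(z)$ and to show directly that a stable zero is reached with positive probability. Since the $m$ neighbours of the new node are chosen independently and each lands on a red node with probability exactly $x_n = X_n/(X_n+Y_n)$, the number of red neighbours is $\Bin(m,x_n)$, so a new node is red with probability $f(x_n) := \sum_{k=0}^m \binom mk x_n^k (1-x_n)^{m-k} p_k$, and $f(x)-x = 2P(x)$. Writing $S_n := X_0+Y_0+2mn$ (which is deterministic) and $\Delta X_n := X_{n+1}-X_n \in [0,2m]$, a direct computation gives
\[
x_{n+1}-x_n = \frac{\Delta X_n - 2m x_n}{S_n+2m}, \qquad \E\!\left[x_{n+1}-x_n \mid \cF_n\right] = \frac{2m\,P(x_n)}{S_n+2m},
\]
with $|x_{n+1}-x_n| \le 2m/(S_n+2m) = O(1/n)$. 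First I would reduce the theorem to showing $\p(x_n \to z) > 0$: since $a_n$ and $x_n$ have the same almost sure limit (established in proving convergence), it suffices to trap the common limit at $z$. As $Z_P$ is finite, I fix $\delta>0$ so that $(z-\delta,z+\delta)\cap Z_P = \{z\}$ and, using the stability hypothesis, so that $(x-z)P(x) < 0$ for every $x \in (z-\delta,z+\delta)\setminus\{z\}$.

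The core is a confinement estimate built from the Lyapunov function $W_n := (x_n-z)^2$. The identity above yields
\[
\E\!\left[W_{n+1}-W_n \mid \cF_n\right] = \frac{4m}{S_n+2m}(x_n-z)P(x_n) + \E\!\left[(x_{n+1}-x_n)^2 \mid \cF_n\right],
\]
where inside $(z-\delta,z+\delta)$ the first term is $\le 0$ and the second is bounded by $c_n := 4m^2/(S_n+2m)^2 = O(1/n^2)$, so $\sum_n c_n < \infty$. Let $\tau$ be the first time $n>N$ with $x_n \notin (z-\delta,z+\delta)$. Then $M_n := W_{n\wedge\tau} + \sum_{j \ge n\wedge\tau} c_j$ is a nonnegative supermartingale, whence $\delta^2\,\p(\tau<\infty) \le \E[M_N] = (x_N-z)^2 + O(1/N)$. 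Consequently, if $N$ is large and $x_N \in (z-\delta/2,z+\delta/2)$, then $\p(\tau=\infty \mid \cF_N) \ge 1/2$. On $\{\tau=\infty\}$ the process stays in the interval forever, and applying the Robbins--Siegmund theorem to $W_n$ (using $\sum_n c_n < \infty$) shows that $W_n$ converges and that $\sum_n \tfrac1n |x_n-z|\,|P(x_n)| < \infty$; since $z$ is the only zero of $P$ in the interval, this forces $x_n \to z$.

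It remains to reach a neighbourhood of $z$ with positive probability. Because $X_0,Y_0>0$, both $X_n$ and $Y_n$ stay positive, so at every step each edge may land on a red or a blue node with positive probability; moreover, since $P$ changes sign at $z$, the sets $\{k:p_k>0\}$ and $\{k:p_k<1\}$ are both nonempty, so a new red node and a new blue node can each be produced with positive probability. The mechanism for steering is that whenever $P(x_n)>0$ the conditional mean of $\Delta X_n$ exceeds $2m x_n$, so with positive probability $\Delta X_n > 2m x_n$, i.e.\ $x_n$ moves up; symmetrically, $P(x_n)<0$ allows a downward move with positive probability. Combining this with the fact that increments are $O(1/n)$, a positive-probability finite sequence of additions brings $x_n$ into $(z-\delta/2,z+\delta/2)$ at some finite time $N$. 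Concatenating reachability with the confinement bound gives $\p(x_n\to z) > 0$, hence $\p(\lim_n a_n = z) > 0$. The boundary cases $z\in\{0,1\}$ are identical after replacing $(z-\delta,z+\delta)$ by a one-sided neighbourhood.

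I expect the confinement step to be the main obstacle, because near $z$ the restoring drift $\asymp |x_n-z|/n$ and the diffusive fluctuations $\asymp 1/n$ are of comparable size, so $W_n$ is only a supermartingale up to an $O(1/n^2)$ error; the fix is to absorb that error into the convergent tail $\sum_{j\ge n} c_j$, which is precisely why the step sizes $\asymp 1/n$ (summing to $\infty$ but with square-summable squares) are essential. A secondary, more bookkeeping obstacle is the reachability argument, where one must verify that the required up/down moves genuinely have positive probability; this is exactly where the hypotheses $X_0,Y_0>0$ and the non-degeneracy of the $p_k$ forced by $z$ being an interior sign-changing zero of $P$ enter.
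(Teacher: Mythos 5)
Your overall strategy is a legitimate and genuinely different route from the paper's: the paper proves this theorem in one line by combining Lemma~\ref{lem:stoch_appx} (the process $\{x_n\}_{n \geq 0}$ is a stochastic approximation process with drift $P$ and bounded noise) with the quoted black-box result Theorem~\ref{thm:conv_stable_fix_pts} of Hill--Lane--Sudderth and Pemantle, whereas you reprove the relevant one-dimensional stochastic-approximation statement from scratch. Your confinement step --- the stopped supermartingale built from $W_n = (x_n - z)^2$ plus the convergent tail $\sum_{j \geq n} c_j$, followed by Robbins--Siegmund inside the trap --- is sound, modulo the minor point that $\delta$ should be chosen so that $P$ is nonvanishing on the \emph{closed} punctured interval $[z-\delta, z+\delta] \setminus \{z\}$ (possible since $Z_P$ is finite), so that $|x_n - z| \to L > 0$ genuinely forces $\sum_n n^{-1} |x_n - z|\,|P(x_n)| = \infty$ and yields the desired contradiction.

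The genuine gap is in the reachability step. Your steering mechanism only moves $x_n$ \emph{in the direction of the drift}: you show that $P(x_n) > 0$ permits an up-move and $P(x_n) < 0$ permits a down-move. This suffices only when $x_0 = X_0/(X_0+Y_0)$ already lies in the basin of attraction of $z$, i.e., when $P$ has no other zero between $x_0$ and $z$. In general $P$ can have several stable zeros separated by unstable ones (see Fig.~\ref{fig:poly2}, with stable zeros near $0.055$, $0.5$, and $0.945$); if, say, $z \approx 0.055$ and $x_0 = 0.4$, then $P > 0$ on part of the path from $x_0$ down to $z$, and there your mechanism only licenses moves \emph{away} from $z$, so the argument as written cannot cross the intervening unstable zero. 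What is actually needed is that there exist \emph{achievable} increments $k_1 < 2mz < k_2$ of $X_{n+1} - X_n$ whose conditional probabilities are bounded below while $x_n$ remains in a compact subinterval of $(0,1)$: then $X_{n+1}-X_n = k_1$ forces a down-step of size at least $(2mz - k_1)/S_{n+1}$ whenever $x_n \geq z$, and $k_2$ forces an up-step whenever $x_n \leq z$, \emph{regardless of the sign of $P(x_n)$}, so finitely many such steps reach $(z - \delta/2, z + \delta/2)$ with positive probability. This is exactly the content of the paper's Lemma~\ref{lem:lower} (invoked there for the unstable case), and it requires the combinatorial case analysis carried out in that lemma. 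Your sufficient condition that $\{k : p_k > 0\}$ and $\{k : p_k < 1\}$ are both nonempty is too weak --- producing a blue node after drawing $u = m$ red neighbours still \emph{increases} $x_n$ whenever $x_n < 1/2$ --- and the degenerate case $m = 1$, $p_0 = 1$, $p_1 = 0$, where $X_{n+1} - X_n \equiv 1$ and there is no randomness to steer with, must be treated separately (there $x_n \to 1/2$ deterministically, so the conclusion still holds).
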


\begin{theorem}[Nonlinear models, unstable equilibria]\label{thm:nonlin_unstable}
 Suppose that $p_k \neq k/m$ for some $0 \leq k \leq m$, and that $X_0, Y_0 > 0$. 
Suppose that $z \in Z_P \cap \left( 0, 1 \right)$ is such that there exists an $\eps > 0$ such that $P < 0$ on $\left( z - \eps, z \right)$ and $P > 0$ on $\left( z, z + \eps \right)$. 
Then $\p \left( \lim_{n \to \infty} a_n = z \right) = 0$. 
Similarly, if $0 \in Z_P$ and $P > 0$ on $\left( 0, \eps \right)$, or if $1 \in Z_P$ and $P < 0$ on $\left( 1 - \eps, 1 \right)$, then the probability of convergence of $a_n$ to $0$ or $1$, respectively, is zero. 
\end{theorem}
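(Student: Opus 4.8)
The plan is to recognize $\{a_n\}$ as a stochastic approximation driven by $P$ and to rule out convergence to the repeller $z$ by a non-convergence-to-unstable-points principle, handling interior and boundary zeros by different mechanisms.

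First I would transfer the problem from the node-fraction $a_n$ to the degree-fraction $x_n$, which is the variable that actually drives the dynamics. Writing $f(x) = \sum_{k=0}^m \binom{m}{k} x^k (1-x)^{m-k} p_k$ for the probability that an incoming node turns red when the red degree-fraction is $x$, the identity $\sum_k \binom{m}{k} x^k (1-x)^{m-k}(k/m) = x$ gives $P(x) = \tfrac12\left(f(x)-x\right)$, so $Z_P$ is exactly the fixed-point set of $f$. A fresh edge lands on a red node with probability $x_n$ and a red incoming node adds degree $m$ to the red total, so with $D_0 := X_0+Y_0$ and $N_0 := A_0+B_0$ a direct computation of the one-step increments yields
\[
 \E\left[x_{n+1}-x_n \mid \cF_n\right] = \frac{2m\,P(x_n)}{2m(n+1)+D_0} =: \gamma_n P(x_n), \qquad \gamma_n \sim \frac{1}{n},
\]
so $\{x_n\}$ is a Robbins--Monro scheme with mean field $P$ whose interpolation shadows $\dot z = P(z)$. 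The analogous computation gives $\E[a_{n+1}-a_n\mid\cF_n] = (f(x_n)-a_n)/(n+1+N_0)$; since $a_n$ and $x_n$ both converge almost surely (the latter being the $x$-analogue of Theorem~\ref{thm:point_mass_case}), matching the vanishing of these drifts with $f(L)=L$ for $L\in Z_P$ forces $\lim_n a_n = \lim_n x_n$ almost surely. Hence $\{\lim a_n = z\}\subseteq\{\lim x_n = z\}$ up to a null event, and it suffices to prove $\p(\lim x_n = z)=0$.

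For an interior zero $z\in(0,1)$, the hypothesis $P<0$ on $(z-\eps,z)$ and $P>0$ on $(z,z+\eps)$ makes $z$ a repeller of the mean field, and I would invoke a one-dimensional Pemantle-type non-convergence theorem whose decisive hypothesis is two-sided non-degeneracy of the noise $\xi_{n+1} := (x_{n+1}-x_n-\gamma_n P(x_n))/\gamma_n$. A clean simplification helps here: since $\gamma_n(2m(n+1)+D_0)=2m$, one has $\xi_{n+1} = \tfrac{1}{2m}\big((X_{n+1}-X_n)-\E[X_{n+1}-X_n\mid\cF_n]\big)$, a bounded centered random variable. This is where the binomial structure is essential: writing $X_{n+1}-X_n = K_{n+1} + m\,\mathbf{1}[v\text{ red}]$ with $K_{n+1}\sim\Bin(m,x_n)$ (up to $O(1/n)$ corrections from the edge-selection rule), for $x_n$ in a neighborhood of $z$ bounded away from $0$ and $1$ the count $K_{n+1}$ takes values on both sides of its mean $mx_n$ with conditional probability bounded below, uniformly near $z$. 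Thus $\E[\xi_{n+1}^+\mid\cF_n]$ and $\E[\xi_{n+1}^-\mid\cF_n]$ are both bounded below by a positive constant on the neighborhood, which verifies non-degeneracy and yields $\p(\lim x_n = z)=0$.

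The boundary cases $z\in\{0,1\}$ require a separate argument, and I expect this to be the main obstacle: as $x_n\to 0$ the conditional variance of $K_{n+1}$ is $\Theta(x_n)\to 0$, so the noise degenerates and the repeller theorem no longer applies. Here I would instead exploit the strictly positive drift directly. Since $X_n$ is nondecreasing and $P(0)=0$ forces $p_0=0$, the condition $P>0$ on $(0,\eps)$ means $f(x)>x$ there, hence $f'(0)\ge 1$; when $f'(0)>1$ this gives a supercritical multiplicative drift $\E[X_{n+1}\mid\cF_n]\ge(1+c/n)X_n$ near $0$, which is incompatible with $x_n\to 0$ and lets one build a supermartingale certifying $\p(x_n\to 0)=0$. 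The delicate subcase is $P$ vanishing to order $\ge 2$ at the boundary, where $f'(0)=1$ and the drift is only critically supercritical; there I would use a tailored potential such as $\phi(x)=x^{-\theta}$ and balance its negative drift term against its (degenerate) diffusion term to exclude $x_n\to 0$. The case $z=1$ is symmetric under red$\leftrightarrow$blue, and the interior argument is insensitive to whether $P'(z)>0$ or $P'(z)=0$, since the one-dimensional non-convergence result uses only the strict sign change of $P$ at $z$ together with the non-degeneracy of the noise.
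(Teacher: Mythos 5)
Your interior-point argument is essentially the paper's: cast $\{x_n\}$ as a stochastic approximation process with mean field $P$ and invoke Pemantle's nonconvergence theorem for unstable equilibria, whose only nontrivial hypothesis is two-sided nondegeneracy of the noise. But your verification of that hypothesis is too quick. What must be bounded below is the conditional probability that $X_{n+1}-X_n$ lands a fixed distance \emph{above} and \emph{below} its conditional mean $2m\left(x_n+P(x_n)\right)\approx 2mz$; the fact that the binomial count $K_{n+1}$ is two-sided around $mx_n$ does not directly give this, because the increment is $K_{n+1}+mI_{n+1}$ with $I_{n+1}$ correlated with $K_{n+1}$, and for extreme parameter choices (e.g.\ $p_0=1$ or $p_m=0$) the values $0$ and $2m$ carry no mass. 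The paper devotes a separate lemma (Lemma~\ref{lem:lower}) to producing integers $k_1<2mz<k_2$ attained with probability bounded below, with a case analysis that excludes three degenerate parameter families (for which the theorem is vacuous) and uses the location of the zero $z$ (e.g.\ a Markov-inequality argument showing $z$ cannot be too close to $0$ or $1$ in the problematic cases). Your sketch would need to be upgraded to something of this form; as written the deduction ``two-sided binomial $\Rightarrow$ two-sided $\xi$'' has a hole.

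The more serious gap is at the endpoints, where you correctly identify that the repeller theorem fails because the noise degenerates, but your proposed fix is not a proof. The supercritical-drift/supermartingale route works only when $P$ vanishes to first order at the endpoint; in the critical subcase (e.g.\ $p_0=p_1=0$ with $P(x)=\Theta(x^2)$ near $0$) the drift term $\Theta(x^2/n)$ is dominated near $0$ by the diffusion term of any potential $x^{-\theta}$ unless $x\gtrsim n^{-1/2}$, so the Lyapunov computation you gesture at does not close, and you acknowledge as much. The paper takes a different route here: it compares the process (after coupling it to an auxiliary process that behaves like the original only above $1-\eps$) to a standard P\'olya urn adding $2m$ balls per step, proves by induction that the process is dominated in the \emph{increasing convex order} (using Lemma~\ref{lem:icx_suff} for the one-step comparison, which exploits $P<0$ near the endpoint, and the fact that the P\'olya map preserves $\leq_{\st}$ and $\leq_{\cx}$), and then uses that the P\'olya urn's limit is a Beta distribution with no atom at the endpoint. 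Until you either carry out the critical-case potential argument or replace it with a domination argument of this kind, the boundary statement is unproved.
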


\begin{theorem}[Nonlinear models, touchpoints]\label{thm:nonlin_touchpoints}
 Suppose that $p_k \neq k/m$ for some $0 \leq k \leq m$, and that $X_0, Y_0 > 0$. 
 Suppose that $z \in Z_P \cap \left( 0, 1 \right)$ is such that there exists an $\eps > 0$ such that $P$ is either strictly positive or strictly negative on the union of the intervals $\left( z -  \eps, z \right)$ and $\left( z, z + \eps \right)$. Then $\p \left( \lim_{n \to \infty} a_n = z \right) > 0$.
\end{theorem}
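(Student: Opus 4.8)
My plan is to move to the degree-fraction process $x_n$, whose drift is exactly proportional to $P$, and then to prove convergence to $z$ by confining $x_n$ to the \emph{stable side} of the semistable point $z$.

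\emph{Reduction to $x_n$.} Writing $S_n := X_n + Y_n$ and $f(x) := \sum_{k=0}^m \binom{m}{k} x^k (1-x)^{m-k} p_k$, a direct computation of the conditional expectation given the history $\mcF_n$ yields
\[
 \E[x_{n+1} - x_n \mid \mcF_n] = \frac{m(f(x_n) - x_n)}{S_n + 2m} = \frac{2m\,P(x_n)}{S_n + 2m},
\]
since $f(x)-x = 2P(x)$ by \eqref{eq:P}. Thus $x_{n+1}-x_n = \tfrac{2mP(x_n)}{S_n+2m}+\Delta_{n+1}$ with $\E[\Delta_{n+1}\mid\mcF_n]=0$ and $|x_{n+1}-x_n| = O(1/n)$: the process $\{x_n\}$ is a stochastic-approximation scheme for $\dot z = P(z)$ with step size $\asymp 1/n$. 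Since $a_n$ and $x_n$ have the same almost-sure limit (established in the analysis leading to Theorem~\ref{thm:point_mass_case}), it suffices to prove $\p(x_n \to z) > 0$. By the red/blue symmetry $z \mapsto 1-z$, $P(\cdot)\mapsto -P(1-\cdot)$, which sends a $P>0$ touchpoint to a $P<0$ touchpoint, I may assume $P>0$ on $(z-\eps,z)\cup(z,z+\eps)$; shrinking $\eps$ gives $P>0$ on $[z-\eps,z)$, so $Z_P\cap[z-\eps,z]=\{z\}$. Because $P$ keeps the same sign on both sides of $z$, its zero there has even order $r\ge 2$, i.e.\ $P(z-g) = c\,g^r(1+o(1))$ with $c>0$ as $g\downarrow 0$.

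\emph{The crux: no overshoot.} The drift $P(x_n)>0$ pushes $x_n$ \emph{upward toward} $z$, so on $(z-\eps,z)$ the gap $g_n := z - x_n$ has negative drift and $z$ attracts from the left. If I can confine $x_n$ to $(z-\eps,z)$ for all large $n$, then $g_n\to 0$ forces $x_n\to z$ (the unique zero of $P$ in $[z-\eps,z]$). The danger — and the main obstacle — is that $x_n$ \emph{overshoots} $z$, after which the now-repelling drift carries it away. A single step can cross $z$ only when $g_n = O(1/n)$, but the stabilizing drift $\E[\Delta g_n\mid\mcF_n]\approx -c\,g_n^{r}/n$ is \emph{degenerate} and, near $z$, far weaker than the $O(1/n)$ noise; consequently no time-independent potential in $g$ is a supermartingale, and the naive comparison fails.

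\emph{Time change and confinement.} Solving $\dot g = -c g^r$ in the intrinsic time $\sum_{k\le n}\tfrac{2m}{S_k+2m}\asymp\log n$ suggests $g_n\asymp(\log n)^{-1/(r-1)}$, so I pass to the rescaled quantity $U_n := g_n^{\,r-1}\log n$, which should be of order $u_* := 1/((r-1)c)>0$. Its one-step drift is, to leading order,
\[
 \E[U_{n+1}-U_n\mid\mcF_n] = \frac{g_n^{\,r-1}}{n}(1-(r-1)c\,U_n) + o(g_n^{\,r-1}/n),
\]
which is \emph{mean-reverting} to $u_*$ (negative drift when $U_n>u_*$, positive when $U_n<u_*$), while $\E[(U_{n+1}-U_n)^2\mid\mcF_n] = O((\log n)^2/n^2)$ is summable. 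Hence $\Phi_n := (U_n-u_*)^2$ obeys $\E[\Phi_{n+1}-\Phi_n\mid\mcF_n]\le \beta_n$ with $\sum_n\beta_n<\infty$, the mean-reversion cross-term being $\le 0$; stopping at the first exit from the band $[u_*/2,2u_*]$, the process $\Phi_n+\sum_{k\ge n}\beta_k$ is a supermartingale, and the maximal inequality gives $\p(U_n\in[u_*/2,2u_*]\ \forall\,n\ge N\mid\mcF_N)\ge 1-(\Phi_N+\sum_{k\ge N}\beta_k)/\delta$, which is positive once $U_N$ is near $u_*$ and $N$ is large.

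\emph{Assembly.} A routine reachability argument — each finite sequence of node additions has positive probability, and near $z$ both colors appear with positive probability since $f(z)=z\in(0,1)$, while the per-step increment of $U_n$ tends to $0$ so the order-one band cannot be jumped over — shows that with positive probability the process reaches, at some finite time $N$, a state with $x_N<z$ and $U_N\in[u_*/2,2u_*]$. On the band-confinement event above, $g_n = (U_n/\log n)^{1/(r-1)}\in(0,\eps)$ for all $n\ge N$, so $x_n\in(z-\eps,z)$ and $g_n\to 0$, whence $x_n\to z$; by the equality of limits this gives $a_n\to z$ on an event of positive probability. Thus $\p(\lim_n a_n = z)>0$, and the case $P<0$ on both sides follows by the color-swap symmetry noted above. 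The one genuinely delicate ingredient is the confinement of paragraph three: controlling the overshoot at a degenerate one-sided equilibrium, for which the logarithmic time change turning $g_n$ into a mean-reverting $U_n$ is the essential device.
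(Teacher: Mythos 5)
Your argument is correct in outline, but it takes a very different route from the paper: the paper's entire proof of this theorem is one sentence, namely the observation (Lemma~\ref{lem:stoch_appx}) that $\left\{ x_n \right\}_{n \geq 0}$ is a stochastic approximation process with drift $P$ and bounded noise, followed by a citation of Pemantle's touchpoint theorem (quoted as Theorem~\ref{thm:conv_touchpoints}), plus the reduction from $x_n$ to $a_n$ via Lemma~\ref{lem:from_half_edges_to_vertices}. What you have written is essentially a self-contained reconstruction of the proof of that cited black box: the reduction to $x_n$ and the identification of the drift $\frac{2mP(x_n)}{S_n+2m}$ match the paper's Lemma~\ref{lem:stoch_appx} exactly, and your confinement-to-the-stable-side argument --- the even order $r$ of the zero, the $(\log n)^{-1/(r-1)}$ scaling of the gap, the mean-reverting rescaled variable $U_n = g_n^{r-1}\log n$, and the stopped supermartingale $(U_n - u_*)^2 + \sum_{k \ge n}\beta_k$ --- is the substance of Pemantle's 1991 result. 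The paper's approach buys brevity and modularity (the hard analysis lives in the literature); yours buys transparency, making explicit both the mechanism (the drift is too degenerate to overshoot) and the rate at which $x_n$ approaches the touchpoint. Two points in your sketch deserve care if you write it out in full. First, the cross term in the drift of $\Phi_n = (U_n - u_*)^2$ is not literally $\le 0$: the expansion $P(z-g) = c g^r(1+o(1))$ and the second-order Taylor terms of $g \mapsto g^{r-1}$ contribute an error $E_n$ to the drift of $U_n$, and you need $\sum_n |E_n| < \infty$ on the band; this does hold (the leading error is $O\bigl(g_n^{2r-1}\log n / n\bigr) = O\bigl(n^{-1}(\log n)^{-1-1/(r-1)}\bigr)$, which is summable precisely because the exponent of $\log n$ exceeds $1$), but it must be folded into $\beta_n$ rather than dismissed. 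Second, your reachability step implicitly needs $x_n$ to be able to move in both directions near $z$; this is guaranteed here because the degenerate parameter choices for which it fails (all $p_k = 0$, all $p_k = 1$, or $m=1$ with $p_0=1,p_1=0$) admit no touchpoint in $(0,1)$ --- the same case analysis the paper carries out in Lemma~\ref{lem:lower} for the unstable-equilibrium theorem.
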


See Figure~\ref{fig:polynomial} for an illustration of the polynomial $P$ for various values of the parameters $\left\{ p_k \right\}_{0 \leq k \leq m}$, and what the various limiting proportions can be in each case.

\begin{figure}[h!]
    \centering
    \begin{subfigure}[h]{0.3\textwidth}
	\centering
	\includegraphics[width=\textwidth]{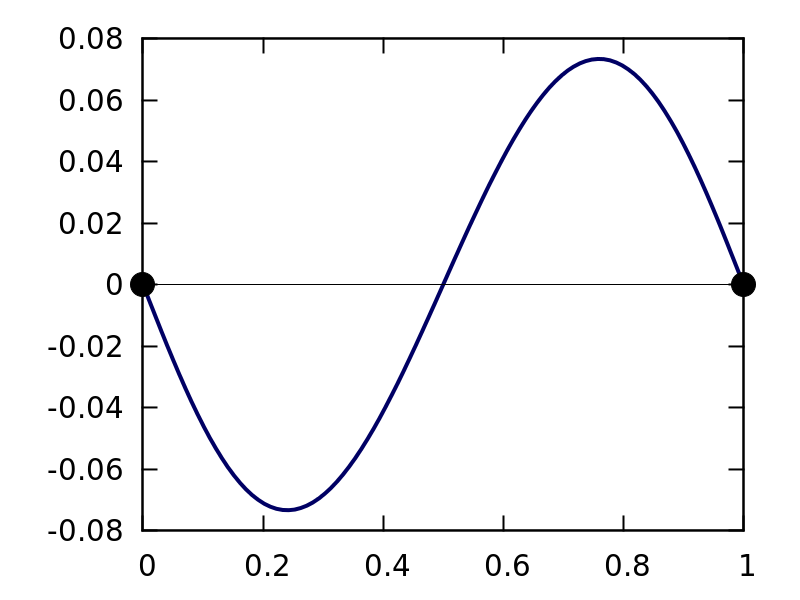}
	\caption{}
	\label{fig:poly1}
    \end{subfigure}
    \ \ 
    \begin{subfigure}[h]{0.3\textwidth}
	\centering
	\includegraphics[width=\textwidth]{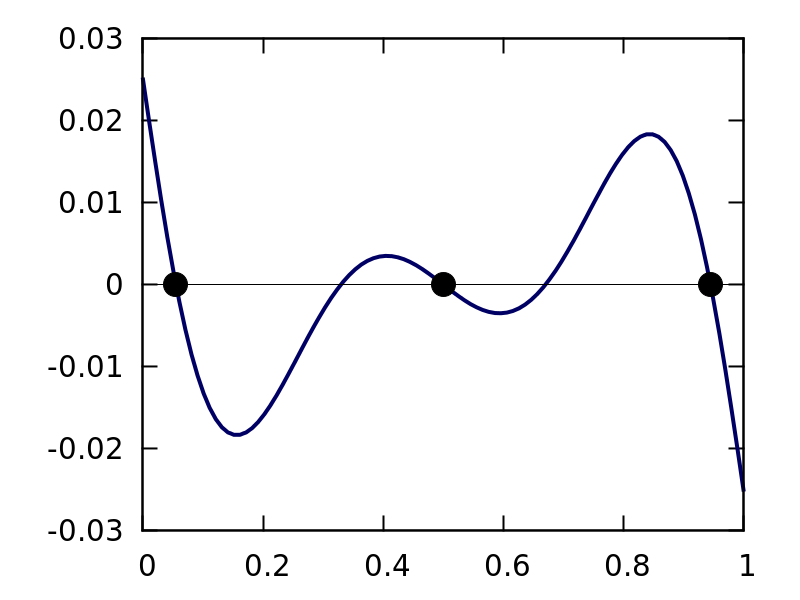}
	\caption{}
	\label{fig:poly2}
    \end{subfigure}
    \ \ 
    \begin{subfigure}[h]{0.3\textwidth}
	\centering
	\includegraphics[width=\textwidth]{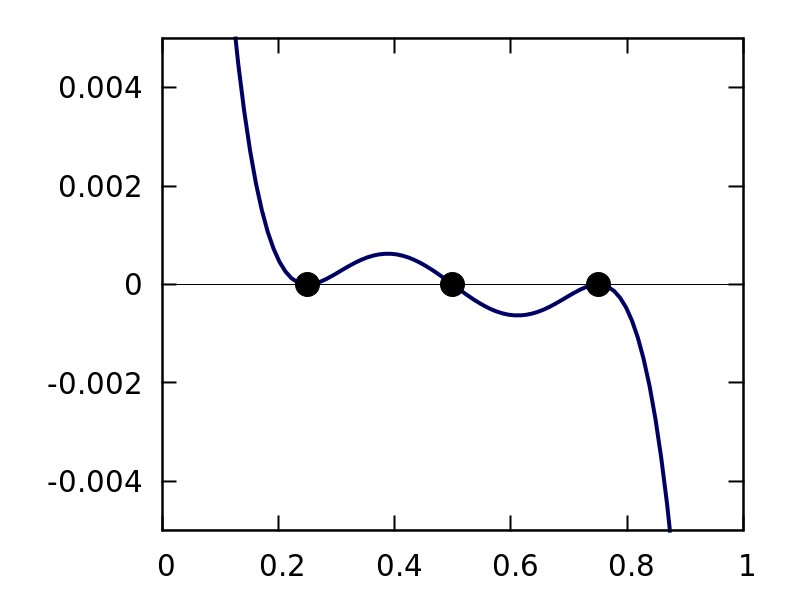}
	\caption{}
	\label{fig:poly3}
    \end{subfigure}
    \caption{\textbf{Examples of the polynomial $P$ and possible limiting proportions.} In each case there is no bias towards either colour, i.e., $p_k + p_{m-k} = 1$ for all $0 \leq k \leq m$. (a) Majority choice: $p_k = 1$ if $k > m/2$ and $p_k = 0$ otherwise ($m=5$ in the figure). The possible limits are 0 and 1, i.e., the winner takes all. (b)~The parameters here are: $m=9$, $p_5 = p_6 = 0.5$, $p_7 = p_8 = p_9 = 0.95$. Such an example is plausible if the strength of the signal from the neighbours matters: if $3 \leq k \leq 6$, then the signal towards either colour is weak, so just flip a fair coin to choose, but if $0 \leq k \leq 2$ or $7 \leq k \leq 9$ then there is a strong signal towards one of the colours, so pick that colour with probability close to 1. In this example the possible limits are: $z_1 \approx 0.055$, $z_2 = 0.5$, $z_3 \approx 0.945$, and there are also two zeros of $P$ which cannot be limits. (c) This is an example where $P$ has touchpoints. The parameters are: $m = 6$, $p_4 = 1031/
1710$, 
$p_5 = p_6 = 
35/38$, and the two touchpoints are at $z_1 = 1/4$ and $z_2 = 3/4$. Both of these, as well as $z_3 = 1/2$, can be limits.}
    \label{fig:polynomial}
\end{figure}

The theorems above provide a complete phase diagram of the asymptotic behaviour of the process in the case of two types. 
To illustrate this, see Figure~\ref{fig:phase}, which shows phase diagrams for $m=3$ and $m=4$ when there is no bias towards either colour, i.e., when $p_k + p_{m-k} = 1$ for all $0 \leq k \leq m$. This condition implies that $P\left( z \right) = - P \left( 1 - z \right)$ and so $1/2 \in Z_P$, but $1/2$ need not be a limit point (see Fig.~\ref{fig:polynomial}).

\begin{figure}[h!]
    \centering
    \begin{subfigure}[h]{0.4\textwidth}
	\centering
	\includegraphics[width=\textwidth]{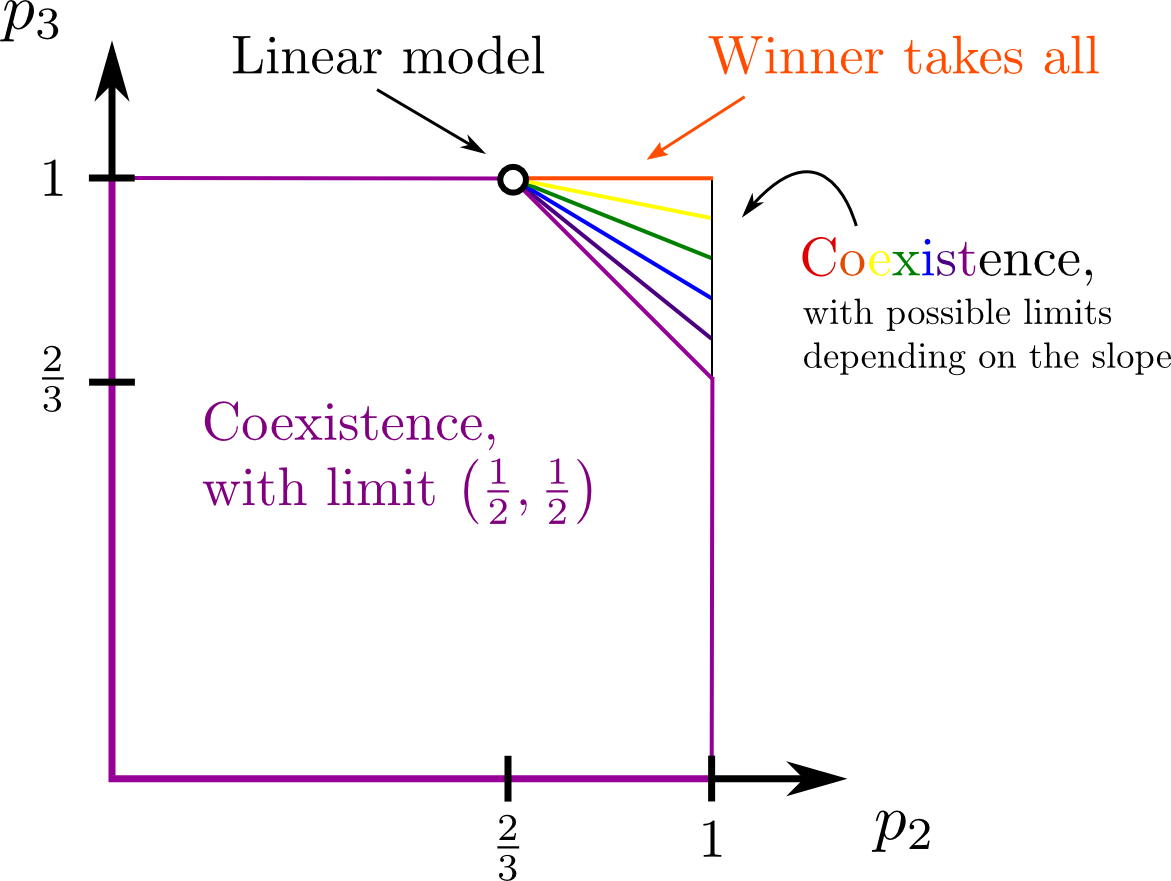}
	\caption{$m=3$}
	\label{fig:phase-m3}
    \end{subfigure}
    \qquad \qquad
    \begin{subfigure}[h]{0.4\textwidth}
	\centering
	\includegraphics[width=\textwidth]{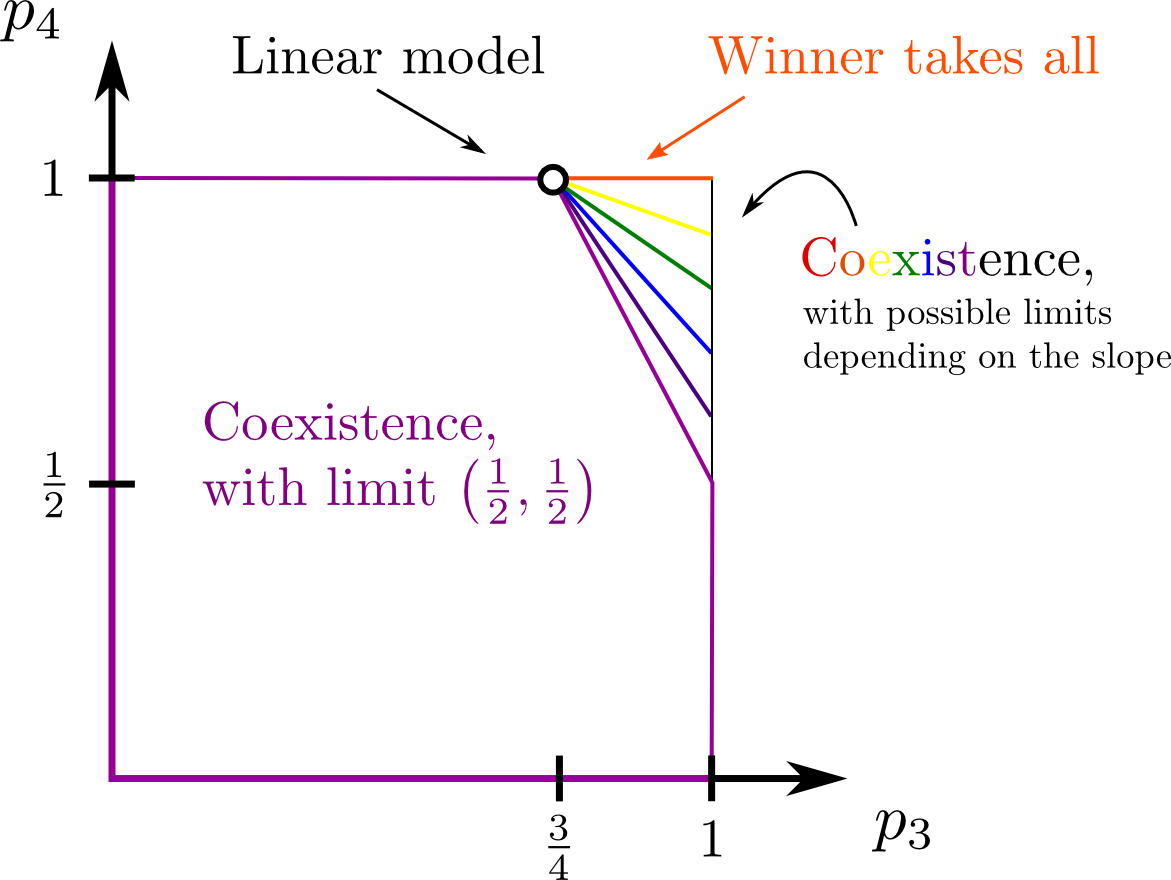}
	\caption{$m=4$}
	\label{fig:phase-m4}
    \end{subfigure}
    \caption{\textbf{Phase diagrams when there is no bias towards either colour/type}, i.e., when $p_k + p_{m-k} = 1$ for all $0 \leq k \leq m$. Let $q_k := p_k - k/m$. (a) If $q_2 < 0$ or if $q_2 + q_3 \leq 0$ and $q_3 < 0$, then $\lim_{n \to \infty} a_n = 1/2$, i.e., in this case the network is split evenly among the two types in the limit. The linear model is the case of $q_2 = q_3 = 0$. Finally, if $q_2 + q_3 > 0$ and $q_2 > 0$, then let $\alpha = - \frac{q_3}{q_2} \in [0,1)$; the possible limits of $a_n$ are then $\frac{1}{2} \pm \frac{1}{2} \sqrt{ \frac{3 - 3 \alpha}{3 + \alpha}}$. In particular, when $\alpha = 0$ then the winner takes all, and if $\alpha \in \left( 0, 1 \right)$, then the two types coexist in the limit. (b) This is similar to (a). Here, if $2q_3 + q_4 > 0$ and $q_4 > 0$, then let $\beta = - \frac{q_4}{q_3} \in [0,2)$; the possible limits of $a_n$ are then $\frac{1}{2} \pm \frac{1}{2} \sqrt{\frac{2 - \beta}{2 + \beta}}$.}
    \label{fig:phase}
\end{figure}

\textbf{Coexistence.} 
In particular, the theorems above show that in many cases the two colours coexist in the limit. 
Indeed, since $P(0) = \frac{1}{2}p_0$ and $P(1) = \frac{1}{2} \left( p_m - 1 \right)$, $p_0 = 0$ or $p_m = 1$ is necessary for one of the colours to asymptotically take over the network. 
Whenever $p_0 > 0$ and $p_m < 1$ the two colours coexist in the limit, and thus the model provides a theoretical understanding of coexistence in preferential attachment networks.

A natural extension of the model is to consider more than two colours. 
For clarity of presentation, we postpone the discussion of this until later: 
see Section~\ref{sec:many} for a description of the model with many colours and the corresponding results and conjectures.

It is also natural to consider variants of the model with different underlying network evolution models. 
For instance, consider the affine preferential attachment model, 
where a given edge connects the new node $v$ to a given existing node $u$ with probability proportional to the degree of $u$ plus some fixed constant $c$ which is greater than $-1$. 
This gives rise to a random graph with power-law degree distribution with exponent $3 + c$. 
The linear preferential attachment model is the special case with $c = 0$. 
The type adoption process with affine preferential attachment is slightly more involved, 
so to keep the exposition simple we do not go into details. 
However, the same techniques apply, 
and we get the same results as above for $c \geq 0$. 
For $c \in \left( - 1, 0 \right)$ the behavior is more delicate, 
and we leave it as an exercise to the reader to flesh out the details. 
See Section~\ref{sec:future} for further discussion on the affine preferential attachment model. 
Another natural variant is to consider uniform attachment---this is even simpler than preferential attachment, and the same results apply. 
We leave the understanding of other network evolution models for future work.

\subsection{Outline of the paper} 

First, in Section~\ref{sec:proofs} we prove the results described in Section~\ref{sec:results}. Then in Section~\ref{sec:many} we study the case of three or more types, and finally we conclude with open questions and directions for future research in Section~\ref{sec:future}.

\section{Proofs}\label{sec:proofs} 

This section contains the proofs of our main results described in Section~\ref{sec:results}, and is structured as follows. 
First, in Section~\ref{sec:reduction} we show how the asymptotic behaviour of $\left\{ a_n \right\}_{n \geq 0}$ is the same as that of the sum-of-degrees process $\left\{ x_n \right\}_{n \geq 0}$, which is more convenient to study, as it is a Markov process. 
Then in Section~\ref{sec:linear_proof} we study the linear model and prove Theorem~\ref{thm:continuous_case}.
Next in Section~\ref{sec:SAP} we recall results from the theory of stochastic approximation processes, 
and finally in Section~\ref{sec:nonlinear_proof} we prove our results concerning nonlinear models.

\subsection{Reduction to the sum-of-degrees process}\label{sec:reduction} 

To understand the process $\left\{ A_n \right\}_{n \geq 0}$ (and thus the normalised process $\left\{ a_n \right\}_{n \geq 0}$), it is more convenient to study the time evolution of the sum of the degrees of each type. 
The reason for this is that the process $\left\{ A_n \right\}_{n \geq 0}$ is not a Markov process, but the joint process $\left\{ \left( A_n, X_n \right) \right\}_{n \geq 0}$ is indeed Markov. 
It evolves as follows. Given $\left( A_n , X_n \right)$, $u_{n+1}$ is drawn from the binomial distribution with parameters $m$ and $x_n$. Subsequently, $I_{n+1}$ is drawn from the Bernoulli distribution with parameter $p_{u_{n+1}}$. We then have
\begin{align}
 A_{n+1} &= A_n + I_{n+1}, \label{eq:evol1}\\
 X_{n+1} &= X_n + u_{n+1} + m I_{n+1}. \label{eq:evol2}
\end{align}

The following lemma tells us that in order to understand the asymptotic behaviour of $\left\{ a_n \right\}_{n \geq 0}$, it is enough to understand the asymptotic behaviour of $\left\{ x_n \right\}_{n \geq 0}$. Consequently, in the following we analyse the latter, as this is a Markov process.

\begin{lemma}\label{lem:from_half_edges_to_vertices}
Suppose $\left\{ x_n \right\}_{n \geq 0}$ converges almost surely (a.s.) and let $x := \lim_{n \to \infty} x_n$ denote the limit. If $P\left( x \right) = 0$ a.s., then $\left\{ a_n \right\}_{n \geq 0}$ converges a.s.\ as well, and $\lim_{n \to \infty} a_n = x$ a.s.
\end{lemma}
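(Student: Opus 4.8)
The plan is to prove the stronger statement $A_n/n \to x$ almost surely, from which the lemma follows immediately: since $A_n + B_n = A_0 + B_0 + n$, we have $a_n = A_n/(A_0 + B_0 + n)$, and the denominator grows linearly, so $a_n$ and $A_n/n$ share the same limit. The starting point is to read off from~\eqref{eq:evol1}--\eqref{eq:evol2} that the increment $A_{n+1} - A_n = I_{n+1}$ is Bernoulli with conditional mean, given the natural filtration $\cF_n := \sigma\left( \left( A_i, X_i \right) : i \leq n \right)$,
\[
\E\left[ I_{n+1} \mid \cF_n \right] = \sum_{k=0}^m \binom{m}{k} x_n^k \left( 1 - x_n \right)^{m-k} p_k =: g\left( x_n \right).
\]
The load-bearing observation is the identity $g\left( z \right) = z + 2 P\left( z \right)$: since $\sum_{k} \binom{m}{k} z^k \left( 1 - z \right)^{m-k} \left( k/m \right) = z$ is just the normalized mean of a $\Bin\left( m, z \right)$ variable, subtracting it from $g$ recovers exactly $2P$ as defined in~\eqref{eq:P}. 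Thus the conditional drift of $\left\{ A_n \right\}$ is governed by $P$ evaluated at the \emph{sum-of-degrees} fraction $x_n$, and the hypothesis $P\left( x \right) = 0$ is precisely what will force the per-step drift to converge to $x$ rather than to $x + 2P\left( x \right)$.

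Next I would form the martingale $M_n := A_n - A_0 - \sum_{i=0}^{n-1} g\left( x_i \right) = \sum_{i=1}^n \left( I_i - g\left( x_{i-1} \right) \right)$, whose increments lie in $\left[ -1, 1 \right]$ and hence have conditional variance bounded by $1$. I would then apply the strong law of large numbers for $L^2$-bounded-increment martingales to conclude $M_n / n \to 0$ almost surely: concretely, $\sum_{i \geq 1} \left( M_i - M_{i-1} \right) / i$ is an $L^2$-bounded martingale, because its increments have summable second moments ($\sum_i i^{-2} < \infty$), so it converges a.s., and Kronecker's lemma then gives $M_n / n \to 0$.

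Finally, I would restrict attention to the almost-sure event $\left\{ x_n \to x \right\} \cap \left\{ P\left( x \right) = 0 \right\}$, which has full probability by hypothesis. On this event, continuity of the polynomial $P$ (equivalently of $g$) yields $g\left( x_i \right) = x_i + 2 P\left( x_i \right) \to x + 2 P\left( x \right) = x$, so by Ces\`aro's lemma $\frac{1}{n} \sum_{i=0}^{n-1} g\left( x_i \right) \to x$. Combining this with $M_n / n \to 0$ through the decomposition $\frac{A_n - A_0}{n} = \frac{1}{n} \sum_{i=0}^{n-1} g\left( x_i \right) + \frac{M_n}{n}$ gives $A_n / n \to x$, and hence $a_n \to x$ almost surely. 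The argument is otherwise routine; the one genuinely essential step is the identity $g = \mathrm{id} + 2P$, which ties the increments of the node-count process to the polynomial $P$ driving the degree process and explains why $P\left( x \right) = 0$ is exactly the right hypothesis. The only technical care required is ensuring the bounded-increment conditions in the martingale SLLN and keeping the Ces\`aro step confined to the full-probability event on which both $x_n \to x$ and $P\left( x \right) = 0$ hold.
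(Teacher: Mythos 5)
Your proposal is correct and follows essentially the same route as the paper's proof: the identity $g = \mathrm{id} + 2P$ for the conditional drift of $A_n$, the martingale $M_n = A_n - A_0 - \sum_{i<n} g\left( x_i \right)$ with bounded increments and $M_n/n \to 0$, and the Ces\`aro argument using continuity of $P$ and the hypothesis $P\left( x \right) = 0$. The only difference is that you spell out the martingale strong law via Kronecker's lemma, which the paper simply asserts.
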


\begin{proof}
 Let $\mcF_n$ denote the filtration of the process until time $n$. Given $\mcF_n$, the probability that the node added at time $n+1$ is red is
\begin{align*}
 \p \left( A_{n+1} - A_n  = 1 \, \middle| \, \mcF_n \right) &= \sum_{k=0}^m \binom{m}{k} x_n^k \left( 1 - x_n \right)^{m-k} p_k = x_n + \sum_{k=0}^m \binom{m}{k} x_n^k \left( 1 - x_n \right)^{m-k} q_k\\
 &= x_n + 2 P \left( x_n \right) =: f\left( x_n \right),
\end{align*}
where $q_k = p_k - k / m$. Thus $\E \left( A_{n+1} - A_n \, \middle| \, \mcF_n \right) = f \left( x_n \right)$. Define $M_n = A_n - A_0 - \sum_{i=0}^{n-1} f\left( x_i \right)$, with initial condition $M_0 = 0$. The previous calculation tells us that $\left\{ M_n \right\}_{n \geq 0}$ is a martingale with respect to the filtration $\mcF_n$. 
Moreover, this martingale has bounded increments since $M_{i+1} - M_i = A_{i+1} - A_i - f\left( x_i \right) \in \left[ - 1, 1 \right]$, and thus $\lim_{n \to \infty} M_n / n = 0$ a.s.

Let $x := \lim_{n \to \infty} x_n$. Since $P\left(x\right) = 0$, we have $f\left( x \right) = x$. Since $f$ is continuous, we have $\lim_{n \to \infty} f \left( x_n \right) = f\left( x \right) = x$ a.s., and thus the Ces\`{a}ro mean of the sequence $\left\{ f \left( x_n \right) \right\}_{n \geq 0}$ also converges to the same limit: $\lim_{n \to \infty} \frac{1}{n} \sum_{i=0}^{n-1} f\left( x_i \right) = x$ a.s. The claim then follows from the fact that $M_n / n = \frac{A_n}{n} - \frac{A_0}{n} - \frac{1}{n} \sum_{i=0}^{n-1} f\left( x_i \right)$ and $\lim_{n \to \infty} \left( a_n - \frac{A_n}{n} \right) = 0$.
\end{proof}

\subsection{Linear model}\label{sec:linear_proof} 

\begin{proof}[Proof of Theorem~\ref{thm:continuous_case}]
 In the linear model, when $p_k = \frac{k}{m}$ for all $k = 0, 1, \dots, m$, we have that $P \equiv 0$, and thus $\E \left( X_{n+1} - X_n \, \middle| \, \mcF_n \right) = 2m x_n$. Since $x_{n+1} - x_n = \frac{X_{n+1} - X_n - 2m x_n}{S_{n+1}}$, it follows that $\E \left( x_{n+1} - x_n \, \middle| \, \mcF_n \right) = 0$, i.e., $\left\{ x_n \right\}_{n \geq 0}$ is a martingale. Since it is also bounded, it converges almost surely. Lemma~\ref{lem:from_half_edges_to_vertices} then implies that $\left\{ a_n \right\}_{n \geq 0}$ converges a.s.\ as well, and $\lim_{n \to \infty} a_n = \lim_{n \to \infty} x_n$ a.s.

 We use a variance argument to show that the distribution of $x := \lim_{n \to \infty} x_n$ has full support on $\left[0,1\right]$. First note that $\left( x_{n+1} - x_n \right)^2 = \left( \frac{X_{n+1} - X_n - 2m x_n}{S_0 + 2m \left( n + 1 \right)} \right)^2 \leq \frac{1}{ \left( n + 1 \right)^2 }$, and consequently for any $n_0$ we have
\begin{equation}\label{eq:var_rem}
 \E \left( \left( x - x_{n_0} \right)^2 \, \middle| \, \mcF_{n_0} \right) = \lim_{n \to \infty} \E \left( \left( x_n - x_{n_0} \right)^2 \, \middle| \, \mcF_{n_0} \right) = \sum_{j = n_0}^{\infty} \E \left( \left( x_{j+1} - x_j \right)^2 \, \middle| \, \mcF_{n_0}  \right) \leq \sum_{j = n_0}^{\infty} \frac{1}{\left( j + 1 \right)^2} \leq \frac{1}{n_0}.
\end{equation}
 Now let $\left( r, r + \eps \right) \subset \left( 0, 1 \right)$ be any fixed interval. 
Our goal is to show that $\p \left( x \in \left( r, r + \eps \right) \right) > 0$. 
Let $n_0$ be an integer such that $n_0 \geq \frac{18}{\eps^2}$ and $\p \left( x_{n_0} \in \left( r + \frac{\eps}{3}, r + \frac{2\eps}{3} \right) \right) > 0$ (this is possible since for large enough $n_0$ there exists a sequence of events such that $x_{n_0} \in \left( r + \frac{\eps}{3}, r + \frac{2\eps}{3} \right)$). 
Now condition on this event;~\eqref{eq:var_rem} implies that 
\[
 \E \left( \left( x - x_{n_0} \right)^2 \, \middle| \, x_{n_0} \in \left( r + \frac{\eps}{3}, r + \frac{2\eps}{3} \right) \right) \leq \frac{1}{n_0} \leq \frac{\eps^2}{18},
\]
which in turn implies that $\p \left( \left| x - x_{n_0} \right| \leq \frac{\eps}{3} \, \middle| \,  x_{n_0} \in \left( r + \frac{\eps}{3}, r + \frac{2\eps}{3} \right)  \right) \geq \frac{1}{2}$. We conclude that
\[
 \p \left( x \in \left( r, r + \eps \right) \right) \geq \p \left( \left| x - x_{n_0} \right| \leq \frac{\eps}{3} \, \middle| \,  x_{n_0} \in \left( r + \frac{\eps}{3}, r + \frac{2\eps}{3} \right)  \right) \p \left( x_{n_0} \in \left( r + \frac{\eps}{3}, r + \frac{2\eps}{3} \right) \right) > 0.
\]

Finally, showing that the distribution of $x$ has no atoms can be done by adapting arguments by Pemantle~\cite{pemantle1990time}. First, let us describe how the process $\left\{ x_n \right\}_{n \geq 0}$ is related to time-dependent P\'olya urn processes that Pemantle studies in~\cite{pemantle1990time}.

Time-dependent P\'olya urn processes are generalisations of the classical P\'olya urn process, where the number of balls added to the urn is allowed to vary with time. Although $\left\{ x_n \right\}_{n \geq 0}$ is not a time-dependent P\'olya urn process, the following slight modification of the preferential attachment process does give a time-dependent P\'olya urn process. When adding a new  node $v$ to the graph $G_n = \left( V_n, E_n \right)$, add its $m$ neighbours one by one, and after adding each neighbour, \emph{update} the degree of the neighbour. Let $\widetilde{X}_n$ denote the sum of the degrees of red nodes at time $n$ in this model. Consider also a time-dependent P\'olya urn process $\left\{ Z_n \right\}_{n \geq 0}$ where at times $t \neq 0 \mod m$ a single ball is added to the urn, and at times $t = 0 \mod m$ the number of balls added to the urn is $m+1$. It can be seen that if $\widetilde{X}_0 = Z_0$, then $\widetilde{X}_n$ and $Z_{mn}$ have the same distribution. Thus Pemantle's results~\cite[
Theorem~3, Theorem~4]{
pemantle1990time} apply directly and show that the distribution of $\lim_{n \to \infty} \widetilde{x}_n$ (this limit exists a.s.) has no atoms.

Since our setting is close to Pemantle's original setting, we only sketch the proof that the distribution of $x$ has no atoms, and leave the details to the reader.

To show that the distribution of $x$ has no atoms on $\left( 0, 1 \right)$, we can adapt the variance arguments of~\cite[Theorem~3]{pemantle1990time}. Fix $r \in \left( 0, 1 \right)$. Suppose on the contrary that $\p \left( x = r \right) > 0$. Then for every $\eps > 0$ there exists $n_0$ and some event $\cA \in \cF_{n_0}$ having positive probability such that $\p \left( x_n \to r \, \middle| \, \cA \right) \geq 1 - \eps$; in fact, $n_0$ can be as large as desired. Define $c := \frac{ r \left( 1 - r \right)^{m/2}}{10 \times 2^{m/2}}$ and let $N = \max \left\{ \frac{S_0}{m}, \frac{2}{c^2 \min \left\{ r, 1 - r \right\}} \right\}$. One can then show, via variance arguments, the following two inequalities. First, for every $n \geq N$,
\[
 \p \left( \sup_{k \geq n} \left| x_k - r \right| \geq \frac{c}{\sqrt{n}} \, \middle| \, \cF_n \right) \geq \frac{1}{2}.
\]
Second, defining $\cB = \left\{ \left| x_n - r \right| \geq \frac{c}{\sqrt{n}} \right\}$, we have that for every $n \geq N$,
\[
 \p \left( \inf_{k \geq n} \left| x_k - r \right| \geq \frac{c}{2\sqrt{n}} \, \middle| \, \cF_n, \cB \right) \geq \frac{c^2}{16}.
\]
Putting these together we have that for every $n \geq N$, the probability given $\cF_n$ is at least $\frac{c^2}{32}$ that some $x_{n+k}$ will be at least $\frac{c}{\sqrt{n}}$ away from $r$ and no subsequent $x_{n+k+\ell}$ will ever return to the interval $\left[ r - \frac{c}{2\sqrt{n}}, r+ \frac{c}{2\sqrt{n}} \right]$. This contradicts our initial assumption and so $\p \left( x = r \right) = 0$.

To show that the distribution of $x$ has no atoms at $0$ and $1$, we can adapt the arguments of~\cite[Theorem~4]{pemantle1990time}. 
The main idea is a domination argument. 
Let $\left\{ v_n \right\}_{n \geq 0}$ be the P\'olya urn process where at each time step $2m$ balls are added to the urn, and let $v_0 = x_0$. 
Then the distribution of $x_n$ can be dominated by the distribution of $v_n$, in the sense that $\E \left( h \left( x_n \right) \right) \leq \E \left( h \left( v_n \right) \right)$ for every continuous bounded convex function $h$. 
In other words, $x_n$ is smaller than $v_n$ in the convex order~\cite{shaked2007stochastic}. 
Since the limiting distribution of $\left\{ v_n \right\}_{n \geq 0}$ is a beta distribution, which does not have an atom at zero, one can then take $h_\eps \left( x \right) := \max \left\{ 0, 2 - x / \eps \right\}$ and let $\eps \to 0$ to conclude that the distribution of $x$ cannot have an atom at zero either. 
We refer the reader to~\cite[Theorem~4]{pemantle1990time} for more details. 
See also the proof of Theorem~\ref{thm:nonlin_unstable} for the endpoints in Section~\ref{sec:nonlinear_proof}. 
\end{proof}

\subsection{Stochastic approximation processes}\label{sec:SAP} 

The key observation in the analysis of the asymptotic behaviour of $\left\{ x_n \right\}_{n \geq 0}$ is that it is a stochastic approximation process. 
Stochastic approximation was introduced in 1951 by Robbins and Monro~\cite{robbins1951stochastic}, whose goal was to approximate the root of an unknown function via evaluation queries that are necessarily noisy. There has been much follow-up research, see, e.g., the monograph by Nevelson and Hasminskii~\cite{nevelson1976stochastic}. The setup of stochastic approximation arises naturally in the study of P\'olya urn processes; see the survey~\cite{pemantle2007survey} for details. In particular, we use results of Hill, Lane and Sudderth~\cite{hill1980strong}, who studied generalised (nonlinear) P\'olya urn processes, and we also use subsequent refinements by Pemantle~\cite{pemantle1990nonconvergence,pemantle1991touchpoints}. We state the main theorems here and refer to the original papers for more details; see also the survey~\cite{pemantle2007survey}. Stochastic approximation results in higher dimensions will be discussed in Section~\ref{sec:many}.

Let $\left\{ Z_n \right\}_{n \geq 0}$ be a stochastic process in $\R$ adapted to a filtration $\left\{ \mcF_n \right\}$. Suppose that it satisfies
\begin{equation}\label{eq:stoch_appx}
 Z_{n+1} - Z_n = \frac{1}{n} \left( F \left( Z_n \right) + \xi_{n+1} + R_n \right),
\end{equation}
where $F : \R \to \R$, $\E \left( \xi_{n+1} \, \middle| \, \mcF_n \right) = 0$, and the remainder terms $R_n \in \mcF_n$ go to zero and also satisfy $\sum_{n=1}^{\infty} n^{-1} \left| R_n \right| < \infty$ almost surely. Such a process is known as a \emph{stochastic approximation process}.

Intuitively, trajectories of a stochastic approximation process $\left\{ Z_n \right\}_{n \geq 0}$ should approximate the trajectories $\left\{ Z \left( t \right) \right\}_{t \geq 0}$ of the corresponding ODE $dZ / dt = F \left( Z \right)$. Moreover, since $\left\{ Z_n \right\}_{n \geq 0}$ is a stochastic system, we expect that stable trajectories of the ODE should appear, but unstable trajectories should not. This intuition is confirmed and formalized in the following statements (quoted from the survey~\cite{pemantle2007survey}); for proofs and more details see the papers cited above.

\begin{theorem}[Convergence to the zero set of $F$]\label{thm:conv_zero_set_F}
Suppose that $\left\{ Z_n \right\}$ is a stochastic approximation process and that $\E \left( \xi_{n+1}^2 \, \middle| \, \mcF_n \right) \leq K$ for some finite constant $K$. 
If $F$ is bounded and continuous, then $Z_n$ converges almost surely to the zero set of $F$.
\end{theorem}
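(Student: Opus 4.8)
The plan is to run the ODE method: I would show that a suitably time-changed, interpolated version of $\{Z_n\}$ asymptotically shadows the solutions of the autonomous flow $dZ/dt = F(Z)$, and then use the rigidity of one-dimensional flows to conclude that the set of limit points of $\{Z_n\}$ consists only of equilibria of the flow, which are exactly the zeros of $F$. Convergence of $Z_n$ to $\{F = 0\}$ is then precisely the statement that this limit set is contained in the zero set.

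The first step is to dispose of the noise and remainder. Setting $M_n := \sum_{k=1}^{n-1} \xi_{k+1}/k$, the hypothesis $\E(\xi_{k+1}^2 \mid \mcF_k) \le K$ gives $\E(M_n^2) = \sum_{k=1}^{n-1} \E(\xi_{k+1}^2)/k^2 \le K \sum_{k \ge 1} k^{-2} < \infty$, so $\{M_n\}$ is an $L^2$-bounded martingale and hence converges almost surely; in particular $\sup_{m > n} |M_m - M_n| \to 0$ a.s. Combined with the standing assumption $\sum_k k^{-1} |R_k| < \infty$, this shows that beyond time $n$ the cumulative contribution of $\xi$ and $R$ to the increments $Z_{k+1} - Z_k = k^{-1}(F(Z_k) + \xi_{k+1} + R_k)$ is negligible, leaving the drift $k^{-1} F(Z_k)$ as the only term that matters asymptotically. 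The same summability (whence $\xi_{k+1}/k \to 0$ a.s.) together with boundedness of $F$ forces the increments $Z_{k+1} - Z_k \to 0$ a.s., so the limit set $L := \bigcap_n \overline{\{Z_k : k \ge n\}}$ is, almost surely, compact and connected.

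The heart of the argument is a shadowing estimate. I would introduce the intrinsic clock $t_n := \sum_{k=1}^{n-1} 1/k \sim \log n$ and linearly interpolate the points $(t_n, Z_n)$ into a continuous path $\bar Z(\cdot)$, so that on the piece indexed by $k$ the interpolant has slope $F(Z_k) + \xi_{k+1} + R_k$. Fixing a horizon $T$ and comparing $\bar Z$ on $[t_n, t_n + T]$ with the flow solution $z(\cdot)$ started at $z(t_n) = Z_n$, the difference $\bar Z - z$ obeys an integral inequality whose forcing term is the noise and remainder accumulated over the window (vanishing as $n \to \infty$ by the previous step) and whose feedback term is controlled by the modulus of continuity of $F$ along the trajectory. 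Working on the event that $\{Z_n\}$ is bounded---automatic in our application since $x_n \in [0,1]$---so that $F$ is uniformly continuous on the relevant range, a discrete Gr\"onwall argument yields $\sup_{[t_n, t_n + T]} |\bar Z - z| \to 0$, i.e.\ that $\bar Z$ is an asymptotic pseudotrajectory of the flow. I expect this uniform shadowing bound to be the main obstacle, since it requires simultaneous control of the noise tail and of the nonlinear feedback over windows of fixed clock-length.

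Once shadowing is in hand the conclusion follows from the dynamics: the asymptotic pseudotrajectory property implies that $L$ is invariant under the flow and internally chain recurrent. In one dimension $F$ has constant sign on each component of $\{F \ne 0\}$, so the flow is strictly monotone there and exhibits no recurrence off its equilibria; consequently every connected, internally chain recurrent invariant set is a single equilibrium, whence $L \subseteq \{F = 0\}$. In this one-dimensional setting one can alternatively bypass the dynamical-systems vocabulary altogether: if some $z^\ast$ with, say, $F(z^\ast) > 0$ were a limit point, then on a neighbourhood where $F \ge \delta > 0$ the drift $\sum_k k^{-1} F(Z_k) \ge \delta \sum_k k^{-1}$ accumulates faster than the now-vanishing noise can counteract it, so the region acts asymptotically as a one-way street that the trajectory crosses and cannot re-enter, contradicting recurrence of $Z_n$ at $z^\ast$.
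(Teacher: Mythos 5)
The paper does not actually prove this statement: it is quoted verbatim from the stochastic approximation literature (Hill, Lane and Sudderth, and Pemantle's survey), so there is no in-paper proof to compare against. Your sketch is essentially the standard argument from that literature: the $L^2$ bound $\E(M_n^2) \le K\sum_k k^{-2}$ to show the martingale part converges and the noise tail vanishes, the logarithmic time change and Gr\"onwall comparison to establish the asymptotic pseudotrajectory property, and the rigidity of one-dimensional flows to force the limit set into $\left\{ F = 0 \right\}$. Two remarks. First, the conclusion requires the trajectory to be precompact, which the theorem as stated leaves implicit; you correctly flag that this is automatic in the paper's application ($Z_n = x_n \in [0,1]$), and your proof should be read as conditional on that event. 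Second, with $F$ merely continuous the ODE $dZ/dt = F(Z)$ need not have unique solutions, so ``the flow'' and the chain-recurrence vocabulary are not literally available without a Lipschitz hypothesis; your closing elementary argument---that once the summed noise tail is smaller than the width of a neighbourhood on which $F \ge \delta > 0$, the divergent drift $\sum_k \delta/k$ makes that neighbourhood a one-way street the process cannot revisit---is the right way to finish and needs only continuity of $F$. With that reading the proposal is correct and matches the standard proof of the cited result.
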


\begin{theorem}[Convergence to stable equilibria]\label{thm:conv_stable_fix_pts}
Suppose that $\left\{ Z_n \right\}$ is a stochastic approximation process with a bounded and continuous $F$, and that $\E \left( \xi_{n+1}^2 \, \middle| \, \mcF_n \right) \leq K$ for some finite constant $K$. 
Suppose that there is a point $z$ and an $\eps > 0$ with $F \left( z \right) = 0$, $F > 0$ on $\left( z - \eps, z \right)$ and $F < 0$ on $\left( z, z + \eps \right)$. Then $\p \left( Z_n \to z \right) > 0$. Similarly, when $F : \left[0,1\right] \to \R$, if $F\left(0\right) = 0$ and $F < 0$ on $\left( 0, \eps \right)$ or if $F\left(1 \right) = 0$ and $F > 0$ on $\left( 1 - \eps, 1 \right)$, then there is a positive probability of convergence to 0 or 1, respectively.
\end{theorem}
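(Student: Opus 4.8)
The plan is to run a Lyapunov (supermartingale) argument that traps the process in a small neighborhood of $z$ with positive probability, and then upgrade trapping to genuine convergence. Write $D_n := Z_n - z$, so that the recursion \eqref{eq:stoch_appx} reads $D_{n+1} = D_n + \frac1n\left(F(z+D_n) + \xi_{n+1} + R_n\right)$. The sign hypotheses say precisely that $D_n\,F(z+D_n) \le 0$ whenever $|D_n| < \eps$, with strict inequality for $0 < |D_n| < \eps$; this is the restoring force pushing $Z_n$ back toward $z$. Fix $\delta \in (0,\eps)$ and let $\tau := \inf\{n \ge n_0 : |D_n| \ge \delta\}$ be the first exit time from $(z-\delta,z+\delta)$ after a large starting time $n_0$.

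First I would establish the supermartingale estimate. Expanding the square,
\[
\E\!\left[D_{n+1}^2 \,\middle|\, \mcF_n\right] = D_n^2 + \frac{2}{n} D_n F(z+D_n) + \frac{2}{n} D_n R_n + \frac{1}{n^2}\,\E\!\left[\left(F(z+D_n)+\xi_{n+1}+R_n\right)^2 \,\middle|\, \mcF_n\right].
\]
On $\{n < \tau\}$ the first correction is $\le 0$ by the sign condition, the second is bounded by $\frac{2\delta}{n}|R_n|$, and the third is $O(1/n^2)$ since $F$ is bounded, $\E[\xi_{n+1}^2\mid\mcF_n]\le K$, and $R_n$ is bounded. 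Because $\sum_n n^{-2} < \infty$ and $\sum_n n^{-1}|R_n| < \infty$ almost surely, after subtracting the running sums of these last two (nonnegative, summable) bounds the stopped process $D_{n\wedge\tau}^2$ becomes a supermartingale. A Doob maximal inequality then bounds $\p(\tau < \infty \mid \mcF_{n_0})$ by roughly $\delta^{-2}\bigl(D_{n_0}^2 + \text{(tail errors)}\bigr)$; choosing $n_0$ large (so the error tails are small on a set of probability near one) and $|D_{n_0}|$ small compared to $\delta$ makes this strictly less than $1$. Hence, conditioned on entering the neighborhood, there is positive probability that the process never leaves $(z-\delta, z+\delta)$.

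To pass from trapping to convergence, I would work on $\{\tau = \infty\}$. There the same decomposition shows that $D_n^2$ equals $D_{n_0}^2$ plus a convergent martingale (its increments are square-summable since $\E[\xi_{n+1}^2\mid\mcF_n]\le K$ and $\sum_n n^{-2}<\infty$), plus the a.s.\ convergent error sums, plus the nonpositive drift sum $\sum_k \frac{2}{k} D_k F(z+D_k)$. As $D_n^2 \ge 0$, this drift sum cannot diverge to $-\infty$, so $\sum_k \frac1k\, |D_k\,F(z+D_k)| < \infty$ and $D_n^2$ converges a.s.\ to some $L \ge 0$. If $L > 0$ then $|D_n|$ is eventually bounded away from $0$ inside $(z-\delta,z+\delta)$, whence by continuity and the strict sign condition $|D_k F(z+D_k)| \ge \eta > 0$ for all large $k$, forcing $\sum_k \frac{1}{k}|D_k F(z+D_k)| = \infty$, a contradiction. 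Thus $L = 0$ and $Z_n \to z$ on $\{\tau=\infty\}$. (Alternatively, one can invoke a localized form of Theorem~\ref{thm:conv_zero_set_F}: modifying $F$ outside $(z-\delta,z+\delta)$ to make $z$ its unique global zero changes nothing before $\tau$, and confinement then forces convergence to the only zero in the interval.)

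Finally, the endpoint cases $z\in\{0,1\}$ follow from the same scheme with one-sided estimates: for $z=0$ the boundary keeps $Z_n \ge 0$, and the one-sided drift $F<0$ on $(0,\eps)$ makes $Z_n$ itself (rather than $Z_n^2$), corrected by the summable $R$-terms, a supermartingale, so the trap-and-converge argument goes through verbatim. The step I expect to be the crux is the supermartingale estimate under only a \emph{sign} assumption on the drift: since $F$ may vanish to high order at $z$, there is no quantitative restoring force to exploit, and one must instead let the summability $\sum_n n^{-2}<\infty$ of the noise contribution, together with the a.s.\ summability of the remainder, dominate the (possibly tiny) inward push. I would also flag the reachability input — that $\p\bigl(Z_{n_0}\in(z-\delta',z+\delta')\bigr)>0$ for some small $\delta'$ and some $n_0$ — which is what turns the conditional ``positive probability of staying and converging'' into the unconditional $\p(Z_n\to z)>0$; this is the standing nondegeneracy assumption behind the quoted theorem and holds in the intended applications.
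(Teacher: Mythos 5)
The paper does not actually prove Theorem~\ref{thm:conv_stable_fix_pts}: it is quoted as a black box from the stochastic approximation literature (Hill--Lane--Sudderth and Pemantle's survey), with the authors explicitly deferring to those sources. Your proposal therefore supplies a proof where the paper supplies only a citation, and in substance it is a correct reconstruction of the standard argument: $(Z_n-z)^2$ (or $Z_n$ itself at an endpoint) is an almost-supermartingale up to summable corrections while the process remains in $(z-\delta,z+\delta)$, Doob's maximal inequality yields a positive probability of never exiting, and on the trapping event the Robbins--Siegmund-type decomposition forces the nonpositive drift sum to converge, which together with the strict sign of $F$ away from $z$ excludes any limit with $\lvert Z_n - z\rvert \to c > 0$. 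Two points need more care than your sketch gives them. First, $\sum_{k\ge n_0}k^{-1}\lvert R_k\rvert$ is only almost surely finite and is not $\mcF_{n_0}$-measurable, so the maximal-inequality step requires either an additional stopping time when the accumulated remainder exceeds a small threshold, or the deterministic bound $\lvert R_n\rvert\le C/n$ that holds in the paper's application via Lemma~\ref{lem:stoch_appx}; ``choose $n_0$ large so the error tails are small'' conflates a random tail with a deterministic one. Second, the reachability input $\p\left(Z_{n_0}\in(z-\delta',z+\delta')\right)>0$ is genuinely absent from the statement as quoted; you are right to flag it, and in the paper's setting it must be checked from the structure of $\{x_n\}$ (positive probability of up- and down-steps of size $\Theta(1/n)$ from the interior, in the spirit of Lemma~\ref{lem:lower}). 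With those repairs your argument is complete and has the virtue of being self-contained where the paper is not.
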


\begin{theorem}[Nonconvergence to unstable equilibria]\label{thm:nonconv_unstable_fix_pts}
 Suppose that $\left\{ Z_n \right\}$ is a stochastic approximation process with a bounded and continuous $F$. 
Suppose that there is a point $z \in \left( 0, 1 \right)$ and an $\eps > 0$ with $F \left( z \right) = 0$, $F < 0$ on $\left( z - \eps, z \right)$ and $F > 0$ on $\left( z, z + \eps \right)$. 
Let $x^{+} = \max \left\{ x, 0 \right\}$ and $x^{-} = - \min \left\{ x, 0 \right\}$ denote the positive and negative parts of $x$, respectively. 
Suppose further that when $Z_n \in \left( z - \eps, z + \eps \right)$ then 
$\E \left( \xi_{n+1}^{+} \, \middle| \, \mcF_n \right)$ and 
$\E \left( \xi_{n+1}^{-} \, \middle| \, \mcF_n \right)$ 
are bounded above and below by positive constants depending only on $\eps$. 
Then $\p \left( Z_n \to z \right) = 0$. 
\end{theorem}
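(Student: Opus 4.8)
The plan is to argue by contradiction: assume $\p \left( Z_n \to z \right) > 0$ and show that the repelling sign pattern of $F$, together with the non-degeneracy of the noise, forces $Z_n$ to leave any fixed neighborhood of $z$ infinitely often, which is incompatible with convergence to $z$. First I would localize. On the event $\left\{ Z_n \to z \right\}$ the process eventually enters and never again leaves $\left( z - \eps, z + \eps \right)$; hence there is some $n_0$ and an $\mcF_{n_0}$-measurable event $\cA$ of positive probability on which $Z_{n_0}$ is within $\eps / 2$ of $z$ and $Z_n \in \left( z - \eps, z + \eps \right)$ for all $n \geq n_0$. It suffices to reach a contradiction on $\cA$, so from now on I work inside the neighborhood, where $F \left( z + w \right)$ has the same sign as $w$.

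Set $Y_n := Z_n - z$, so that the recursion becomes $Y_{n+1} = Y_n + \frac{1}{n} \left( F \left( z + Y_n \right) + \xi_{n+1} + R_n \right)$ with $Y_n \to 0$ on $\cA$. The heart of the matter is an \emph{amplification principle}. Because the drift points outward, $Y_n$ behaves like the solution of a linear recursion with positive growth rate: informally, if $F \left( z + w \right) \geq \lambda w$ for $w > 0$ and $F \left( z + w \right) \leq \lambda w$ for $w < 0$ with some $\lambda > 0$, then writing $\pi_n := \prod_{k = n_0}^{n-1} \left( 1 + \lambda / k \right) \asymp \left( n / n_0 \right)^{\lambda}$, the de-trended quantity $Y_n / \pi_n$ is, up to the summable remainder $R_n$, an object of the form $Y_{n_0} + \sum_{k \geq n_0} \pi_{k+1}^{-1} k^{-1} \xi_{k+1}$. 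Since $\sum_k \pi_{k+1}^{-2} k^{-2} \E \left( \xi_{k+1}^2 \mid \mcF_k \right) < \infty$, this series converges almost surely to a limit $L$. The crucial use of the hypothesis on $\E \left( \xi_{n+1}^{\pm} \mid \mcF_n \right)$ is that the noise is genuinely two-sided and bounded below, which makes the conditional law of $L$ free of an atom at $0$; hence $\p \left( L = 0 \mid \cA \right) = 0$. On $\left\{ L \neq 0 \right\}$ we get $\left| Y_n \right| \asymp \left| L \right| \pi_n \to \infty$, contradicting $Y_n \to 0$. This is the mechanism by which a repelling fixed point is avoided.

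The main obstacle is that the theorem assumes only the sign conditions on $F$, not differentiability, so the clean linearization above is not directly available: near $z$ the drift may be arbitrarily flat, in which case $\pi_n$ grows too slowly to drive a global escape by itself. The robust way around this, which I would carry out in detail, is a two-scale comparison on multiplicative time windows $\left[ n, Cn \right]$. On such a window $\sum_{k = n}^{Cn} k^{-1} \asymp \log C$ while the accumulated noise variance $\sum_{k = n}^{Cn} k^{-2} \E \left( \xi_{k+1}^2 \mid \mcF_k \right) \asymp c / n$, so the natural fluctuation scale of $Z_n$ is $\asymp n^{-1/2}$. Using the lower bounds on $\E \left( \xi^{\pm} \mid \mcF_n \right)$, one shows that within a bounded number of windows the process is pushed to one definite side of $z$ with $\left| Y_n \right|$ at least a fixed multiple of $n^{-1/2}$, with probability bounded below uniformly in $n$; and once $Y_n$ is on one side, the outward drift yields a submartingale comparison showing that $\left| Y_n \right|$ reaches a macroscopic constant before being recaptured, again with uniformly positive probability $\rho > 0$. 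Since one can fit infinitely many disjoint windows into $\left[ n_0, \infty \right)$ and each escape attempt succeeds with conditional probability at least $\rho$, a conditional Borel--Cantelli argument (L\'evy's extension) forces escape from $\left( z - \eps, z + \eps \right)$ infinitely often almost surely on $\cA$. This contradicts $Z_n \to z$ on $\cA$, and therefore $\p \left( Z_n \to z \right) = 0$. The delicate point throughout is keeping the constants in the escape estimate uniform in $n$ while the drift and the noise both scale down, which is precisely where the quantitative hypotheses on the noise are indispensable.
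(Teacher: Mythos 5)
A point of context first: the paper does not prove this theorem at all --- it is quoted from Pemantle's survey and attributed to the original nonconvergence papers, so there is no in-paper proof to compare against. The closest analogue inside the paper is the no-atoms argument in the proof of Theorem~\ref{thm:continuous_case}, which runs exactly the two-step scheme of your third paragraph: force a deviation of order $c/\sqrt{n}$ using the noise lower bounds, then show the process never returns to the $\frac{c}{2\sqrt{n}}$-neighborhood of the point. Your identification of the $n^{-1/2}$ fluctuation scale, of why the naive linearization in your second paragraph fails without differentiability of $F$, and of L\'evy's extension as the closing step, all match the standard argument.

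There is, however, one step in your plan that is both unnecessary and, as stated, false in general: the claim that from distance $cn^{-1/2}$ the outward drift carries $\left| Y_n \right|$ to a \emph{macroscopic} constant before recapture with probability $\rho > 0$ \emph{uniformly in $n$}. The hypotheses give only a sign condition on $F$, so near $z$ the drift can be flatter than any power (e.g.\ $F\left( z + w \right) \asymp e^{-1/\left|w\right|} \sgn\left( w \right)$); in that regime the process started at $z + cn^{-1/2}$ is essentially a martingale until it reaches a region where $F$ is bounded below, and optional stopping gives an escape-to-$\eps/2$ probability of order $n^{-1/2}$, not a uniform $\rho$. Summed over geometric windows this is summable, so your conditional Borel--Cantelli step would not deliver ``escape infinitely often.'' The repair is standard and strictly easier: you do not need to leave $\left( z - \eps, z + \eps \right)$ at all. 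It suffices to show that, having reached distance $cn^{-1/2}$, the process never re-enters $\left[ z - \frac{c}{2\sqrt{n}}, z + \frac{c}{2\sqrt{n}} \right]$ with conditional probability bounded below (Doob's $L^2$ maximal inequality applied to the residual martingale, whose total conditional variance is $O\left( 1/n \right)$, together with the facts that the drift only pushes outward and that the remainder tail is $o\left( n^{-1/2} \right)$ in the relevant applications). Since $c/\left( 2\sqrt{n} \right)$ is a fixed positive number once $n$ is fixed, this already contradicts $Z_n \to z$ on your event $\cA$ via L\'evy's $0$--$1$ law; no escape from the $\eps$-neighborhood, let alone infinitely often, is needed. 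A final small caveat: your variance computations invoke $\E \left( \xi_{n+1}^2 \, \middle| \, \mcF_n \right)$, which the stated hypotheses do not bound (only $\E \left( \xi_{n+1}^{\pm} \, \middle| \, \mcF_n \right)$ are controlled); this is harmless in the paper's application, where $\left| \xi_n \right| \leq 2$, but at the stated level of generality it should be added as an explicit assumption.
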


Pemantle studied the case of touchpoints for generalised (nonlinear) P\'olya urn processes in~\cite{pemantle1991touchpoints}. His proof extends to the following result.

\begin{theorem}[Convergence to touchpoints]\label{thm:conv_touchpoints}
 Suppose that $\left\{ Z_n \right\}$ is a stochastic approximation process with a bounded and continuously differentiable $F$, and that $\left| \xi_n \right| \leq K$ a.s.\ for some finite constant $K$. 
 Suppose that $z \in Z_P$ is a touchpoint, i.e., there exists an $\eps > 0$ such that either $F > 0$ on $\left( z - \eps, z \right) \cup \left( z , z + \eps \right)$ or $F < 0$ on $\left( z - \eps, z \right) \cup \left( z , z + \eps \right)$. 
Then $\p \left( Z_n \to z \right) > 0$.
\end{theorem}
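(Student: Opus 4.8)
The plan is to follow Pemantle's touchpoint argument, adapting it from generalized P\'olya urns to the present stochastic approximation setting. By symmetry (the substitution $Z_n \mapsto -Z_n$ turns the case ``$F < 0$ on both sides'' into the case ``$F > 0$ on both sides'') I may assume $F > 0$ on $\left( z - \eps, z \right) \cup \left( z, z + \eps \right)$. Then $z$ is an isolated zero of $F$ that is \emph{stable from below} (on $\left( z - \eps, z \right)$ the drift pushes $Z$ up toward $z$) and \emph{unstable from above} (on $\left( z, z + \eps \right)$ the drift pushes $Z$ further up, away from $z$). Crucially, since $F \ge 0$ on a neighborhood of $z$ with equality only at $z$, the point $z$ is an interior local minimum of the $C^1$ function $F$, so $F'\left( z \right) = 0$ and hence $F\left( Z \right) = o\left( \left| Z - z \right| \right)$ as $Z \to z$: the drift is \emph{weak} near the touchpoint. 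This vanishing-derivative property is exactly what separates the analysis from the genuinely unstable case of Theorem~\ref{thm:nonconv_unstable_fix_pts} and makes convergence to $z$ possible.

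The argument then has two parts. First I would show that the stable side of $z$ is reached: for every small $\delta < \eps$ there is a positive probability that $Z_{n_0} \in \left( z - \delta, z \right)$ for some (arbitrarily large) time $n_0$. In the intended application this follows from nondegeneracy of the increments, since the binomial draw $u_{n+1}$ together with the Bernoulli choice $I_{n+1}$ assign positive probability to any finite sequence of steps moving $x_n$ into the target interval; combined with Theorem~\ref{thm:conv_zero_set_F}, which already forces $Z_n$ into the zero set $Z_P$, this localizes the process just below $z$ with positive probability.

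The second and main part is a \emph{trapping estimate}: conditioned on $\left| Z_{n_0} - z \right| \le c / \sqrt{n_0}$ on the stable side, the process converges to $z$ with conditional probability bounded away from $0$. The mechanism is a competition between drift and noise at the fluctuation scale $1 / \sqrt{n}$. Because the conditional increment variance satisfies $\E \left( \left( Z_{n+1} - Z_n \right)^2 \, \middle| \, \mcF_n \right) \le C / n^2$ (from $\left| \xi_n \right| \le K$ and boundedness of $F$), the martingale part accumulates only $O\left( 1 / \sqrt{n_0} \right)$ of displacement after time $n_0$ — the same variance bookkeeping used in the no-atom argument of Theorem~\ref{thm:continuous_case}, cf.~\eqref{eq:var_rem}. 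Meanwhile, since $F\left( Z \right) = o\left( \left| Z - z \right| \right)$, the drift accumulated while $Z$ stays within $c / \sqrt{n}$ of $z$ is of strictly smaller order than the fluctuations. I would therefore set up a potential/supermartingale comparison on the event that $Z$ remains in a window of radius $\asymp 1 / \sqrt{n}$ about $z$, and show that with positive probability this window is never exited and its radius is forced to shrink, so that $Z_n \to z$. The genuinely delicate point is controlling excursions onto the unstable side $\left( z, z + \eps \right)$: once $Z_n > z$ the drift no longer helps, and only negative noise can pull $Z$ back; the content of Pemantle's lemma is that while $Z_n$ lies within $O\left( 1 / \sqrt{n} \right)$ of $z$ the accumulated repelling drift is too small to grow such an excursion to macroscopic size before the $1 / n$ step schedule of \eqref{eq:stoch_appx} freezes the process, so the fluctuations reabsorb the excursion with high probability.

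\textbf{Main obstacle.} The hard part is precisely this trapping estimate on the unstable side: quantifying that the $o\left( \left| Z - z \right| \right)$ drift cannot eject the process from a $1 / \sqrt{n}$-neighborhood of $z$ against the shrinking $1 / n$ increments. This is the technical core of Pemantle's touchpoint paper, and the work in the present setting is not to reprove that probabilistic core from scratch but to verify that his estimates survive under the hypotheses available here — only a $C^1$ (rather than smooth) drift $F$, an almost-sure bound $\left| \xi_n \right| \le K$ on the noise, and the $1/n$ gain schedule of \eqref{eq:stoch_appx}.
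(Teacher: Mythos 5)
The paper does not actually prove this theorem: it is stated as a known result, with only the remark that Pemantle's touchpoint argument for generalized P\'olya urns extends to the stochastic approximation setting. Your proposal reconstructs precisely that argument---reduction to one sign of $F$, accessibility of the stable side of $z$, and Pemantle's trapping estimate at the $1/\sqrt{n}$ fluctuation scale using $F'\left(z\right)=0$ so that the repelling drift is $o\left(\left|Z-z\right|\right)$---and correctly identifies that the remaining work is only to check that his estimates survive under the hypotheses here, so it takes essentially the same approach as the paper.
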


\subsection{Nonlinear models}\label{sec:nonlinear_proof} 

We first show that $\left\{ x_n \right\}_{n \geq 0}$ is a stochastic approximation process (i.e., that it  satisfies~\eqref{eq:stoch_appx}) with the function $P$ as in~\eqref{eq:P}. Subsequently we show how this implies our results in Section~\ref{sec:results} using the results described in Section~\ref{sec:SAP}.

\begin{lemma}\label{lem:stoch_appx}
 The process $\left\{ x_n \right\}_{n \geq 0}$ is a stochastic approximation process with the function $F = P$ as in~\eqref{eq:P}. Furthermore, the noise term $\xi_n$ is bounded: $\left| \xi_n \right| \leq 2$ for all $n \geq 1$.
\end{lemma}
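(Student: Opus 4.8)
The plan is to compute the one-step increment $x_{n+1}-x_n$ exactly and then read off the three ingredients $F$, $\xi_{n+1}$, and $R_n$ of the stochastic approximation form~\eqref{eq:stoch_appx}. The starting point is the deterministic identity $S_n = S_0 + 2mn$ for the total sum of degrees $S_n := X_n + Y_n$: each new node contributes $m$ to its own degree and raises the degree sum of the existing nodes by $m$, so $S_{n+1}=S_n+2m$ regardless of the random choices. As already recorded in the proof of Theorem~\ref{thm:continuous_case}, this gives
\begin{equation*}
 x_{n+1} - x_n = \frac{\left( X_{n+1} - X_n \right) - 2m x_n}{S_{n+1}},
\end{equation*}
so everything reduces to analyzing the numerator.

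Next I would split the numerator into its conditional mean and a martingale difference. Using $X_{n+1}-X_n = u_{n+1} + m I_{n+1}$ together with $\E\left(u_{n+1}\,\middle|\,\mcF_n\right) = m x_n$ and $\E\left(I_{n+1}\,\middle|\,\mcF_n\right) = \sum_{k=0}^m \binom{m}{k} x_n^k \left(1-x_n\right)^{m-k} p_k$, and the elementary identity $\sum_{k=0}^m \binom{m}{k} x_n^k \left(1-x_n\right)^{m-k}\frac{k}{m} = x_n$, a short computation gives $\E\left(X_{n+1}-X_n\,\middle|\,\mcF_n\right) - 2m x_n = 2m\,P\left(x_n\right)$, with $P$ exactly the polynomial~\eqref{eq:P}. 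Writing $D_{n+1} := \left(X_{n+1}-X_n\right) - \E\left(X_{n+1}-X_n\,\middle|\,\mcF_n\right)$ for the mean-zero martingale difference, the numerator becomes $2m\,P\left(x_n\right) + D_{n+1}$.

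It then remains to force the prefactor into the canonical $1/n$ form. I would set
\begin{equation*}
 \xi_{n+1} := \frac{n\, D_{n+1}}{S_{n+1}}, \qquad R_n := P\left(x_n\right)\left( \frac{2mn}{S_{n+1}} - 1 \right) = -\frac{\left( S_0 + 2m \right) P\left(x_n\right)}{S_{n+1}},
\end{equation*}
so that $x_{n+1}-x_n = \frac{1}{n}\left( P\left(x_n\right) + \xi_{n+1} + R_n \right)$ with $F = P$. Since $\E\left(D_{n+1}\,\middle|\,\mcF_n\right)=0$ we get $\E\left(\xi_{n+1}\,\middle|\,\mcF_n\right)=0$; and since $P$ is a polynomial, hence bounded on $[0,1]$, while $S_{n+1} \geq 2m\left(n+1\right)$, we have $\left| R_n \right| = O\left(1/n\right)$, yielding both $R_n \to 0$ and $\sum_{n} n^{-1}\left|R_n\right| < \infty$. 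For the noise bound, observe that $X_{n+1}-X_n = u_{n+1}+mI_{n+1} \in \left[0, 2m\right]$ deterministically, so its deviation from its conditional mean satisfies $\left| D_{n+1} \right| \leq 2m$; combined with $S_{n+1} > 2mn$ this gives $\left| \xi_{n+1} \right| \leq \frac{2mn}{S_{n+1}} < 1 \leq 2$ for every $n \geq 1$, as claimed.

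None of the steps is genuinely hard; the only place that needs care is the bookkeeping in the last two paragraphs, namely choosing the split between $\xi_{n+1}$ and $R_n$ so that the discrepancy between the natural weight $\frac{2m}{S_{n+1}}$ and the target weight $\frac{1}{n}$ lands entirely in $R_n$, and then checking that this discrepancy remains summable after the extra factor of $n^{-1}$. The clean cancellation $\frac{2mn}{S_{n+1}}-1 = -\frac{S_0+2m}{S_{n+1}}$ makes this summability transparent and also keeps the noise bound elementary.
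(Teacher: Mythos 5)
Your proof is correct and follows essentially the same route as the paper: the same increment identity $x_{n+1}-x_n = \frac{X_{n+1}-X_n-2mx_n}{S_{n+1}}$, the same computation yielding $2mP(x_n)$ for the centered conditional mean, and algebraically identical choices of $\xi_{n+1}$ and $R_n$. The only (harmless) difference is that you bound the noise via $\left|D_{n+1}\right|\leq 2m$ directly, which in fact gives the slightly sharper constant $1$ in place of the paper's $2$.
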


\begin{proof}
From~\eqref{eq:evol2} we have that the conditional expectation of $X_{n+1} - X_n$ is:
\[
 \E \left( X_{n+1} - X_n \, \middle| \, \mcF_n \right) = \sum_{k=0}^m \binom{m}{k} x_n^k \left( 1 - x_n \right)^{m-k} \left( k + m p_k \right) = 2 m x_n + 2m P \left( x_n \right).
\]
One can check that $x_{n+1} - x_n = \frac{X_{n+1} - X_n - 2m x_n}{S_{n+1}}$ and consequently $\E \left( x_{n+1} - x_n \, \middle| \, \mcF_n \right) = \frac{2m}{S_{n+1}} P \left( x_n \right)$, 
with $P$ as in~\eqref{eq:P}. 
We can then write $\left\{ x_n \right\}_{n \geq 0}$ as a stochastic approximation process as claimed in the statement of the lemma, i.e., we can write
\[
 x_{n+1} - x_n = \frac{1}{n} \left( P \left( x_n \right) + \xi_{n+1} + R_n \right)
\]
with appropriately defined $\xi_{n+1}$ and $R_n$. Define $\xi_{n+1}$ as
\begin{equation}\label{eq:noise}
 \xi_{n+1} = n \left( x_{n+1} - x_n - \E \left( x_{n+1} - x_n \, \middle| \, \mcF_n \right) \right).
\end{equation}
The remainder term $R_n$ can then be written as
\[
 R_n = - \frac{S_0 + 2m}{S_0 + 2m \left( n + 1 \right)} P \left( x_n \right).
\]

Clearly $R_n \in \mcF_n$. Let us now show that $\sum_{n=1}^{\infty} n^{-1} \left| R_n \right| < \infty$. A crude bound on $P$ is $\left| P \left( t \right) \right| \leq \frac{1}{2} \sum_{k=0}^m \binom{m}{k} \left| p_k - k/m \right| t^k \left( 1 - t \right)^{m-k} \leq \frac{1}{2} \sum_{k=0}^m \binom{m}{k} t^k \left( 1 - t \right)^{m-k} = \frac{1}{2}$. Therefore $\left| R_n \right| \leq \frac{1}{2} \frac{S_0 + 2m}{S_0 + 2m \left( n + 1 \right)}$, so indeed we have $\sum_{n=1}^{\infty} n^{-1} \left| R_n \right| < \infty$.

Finally, to bound the noise term, notice that $\left| x_{n+1} - x_n \right| = \left| \frac{X_{n+1} - X_n - 2m x_n}{S_0 + 2m \left( n + 1 \right)} \right| \leq \frac{2m}{2m \left( n + 1 \right)} = \frac{1}{n+1}$. Then using~\eqref{eq:noise} and the triangle inequality, we get that $\left| \xi_n \right| \leq 2$.
\end{proof}

The results in Section~\ref{sec:results} now follow. First, note that Lemma~\ref{lem:from_half_edges_to_vertices} implies that it is enough to show the claims in Theorems~\ref{thm:point_mass_case},~\ref{thm:nonlin_stable},~\ref{thm:nonlin_unstable}, and~\ref{thm:nonlin_touchpoints} for the process $\left\{ x_n \right\}_{n \geq 0}$ (instead of for the process $\left\{ a_n \right\}_{n \geq 0}$).

\begin{proof}[Proof of Theorem~\ref{thm:point_mass_case}]
 This follows directly from Lemma~\ref{lem:stoch_appx} and Theorem~\ref{thm:conv_zero_set_F}.
\end{proof}

\begin{proof}[Proof of Theorem~\ref{thm:nonlin_stable}]
 This follows directly from Lemma~\ref{lem:stoch_appx} and Theorem~\ref{thm:conv_stable_fix_pts}.
\end{proof}

The proof of Theorem~\ref{thm:nonlin_unstable} is more involved. This is in line with related work in the literature, where conditions for nonconvergence to unstable equilibria are more difficult to find than similar results for convergence to stable equilibria (see~\cite{pemantle2007survey} for a discussion). Recall the proof of Theorem~\ref{thm:continuous_case}, where we showed that the limiting distribution in the linear model has no atoms: we used a variance argument for points in $\left( 0, 1 \right)$, and a domination argument for the endpoints $0$ and $1$. Our proof of Theorem~\ref{thm:nonlin_unstable} follows similar lines.

We first proceed by proving Theorem~\ref{thm:nonlin_unstable} for points $z \in \left( 0, 1 \right) \cap Z_P$. 
Intuitively, the process has sufficient noise which prevents it from converging to $z$. 
The following lemma is key to bounding the noise of the process from below. 
Its proof is simple when $p_0 < 1$ and $p_m > 0$; 
the proof is only lengthy because it deals with the special cases when $p_0 = 1$ or $p_m = 0$.

\begin{lemma}\label{lem:lower}
 Suppose that the parameters $\left\{ p_k \right\}_{0 \leq k \leq m}$ do not fall into one of the following three cases: (a) $p_k = 0$ for all $0 \leq k \leq m$; (b) $p_k = 1$ for all $0 \leq k \leq m$; (c) $m=1$, $p_0 = 1$, $p_1 = 0$. Suppose that $z \in \left( 0, 1 \right) \cap Z_P$. Then there exist integers $k_1$ and $k_2$ such that $k_1 < 2m z < k_2$ and, if $x_n \in \left( \delta, 1 - \delta \right)$ for some $\delta > 0$, then the probabilities $\p \left( X_{n+1} - X_n = k_1 \, \middle| \, \mcF_n \right)$ and $\p \left( X_{n+1} - X_n = k_2 \, \middle| \, \mcF_n \right)$ are bounded away from zero by a positive function of $\delta$ and the parameters $\left\{ p_k \right\}_{0 \leq k \leq m}$.
\end{lemma}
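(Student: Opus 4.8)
The plan is to reduce everything to one clean observation about the conditional law of the increment $W := X_{n+1} - X_n = u_{n+1} + m I_{n+1}$, which takes integer values in $\left\{ 0, 1, \dots, 2m \right\}$ and whose law given $\mcF_n$ depends only on $x_n$. First I would record this law explicitly: writing $W = u + mI$, for $0 \le j \le m-1$ the only way to reach $j$ is via $I = 0$, so $\p\left( W = j \mid \mcF_n \right) = \binom{m}{j} x_n^{j}\left( 1 - x_n \right)^{m-j}\left( 1 - p_j \right)$; for $m+1 \le j \le 2m$ it must come via $I = 1$, so $\p\left( W = j \mid \mcF_n \right) = \binom{m}{j-m} x_n^{\,j-m}\left( 1 - x_n \right)^{2m-j} p_{j-m}$; and $j = m$ is reached from both, with probability $x_n^{m}\left( 1 - p_m \right) + \left( 1 - x_n \right)^{m} p_0$. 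The crucial point is that by Lemma~\ref{lem:stoch_appx} the conditional mean of $W$ is $2m x_n + 2m P\left( x_n \right)$, which at $x_n = z$ equals exactly $2mz$ since $z \in Z_P$.

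The core of the argument is then soft: if the conditional law of $W$ at $x_n = z$ is not a single point mass, then, having mean $2mz$, it must put positive probability on both $\left\{ W < 2mz \right\}$ and $\left\{ W > 2mz \right\}$ (if, say, $W \ge 2mz$ almost surely with mean $2mz$, then $W = 2mz$ almost surely). Picking integers $k_1 < 2mz$ and $k_2 > 2mz$ in the support of $W$ at $x_n = z$ supplies the two required increment values.

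Hence the one thing that genuinely needs checking — and what I expect to be the only real work — is that the excluded cases (a), (b), (c) are precisely those in which $W$ degenerates to a point mass. Cases (a) and (b) are in fact vacuous under the hypothesis: if $p_k = 0$ for all $k$ then $P\left( z \right) = -z/2$, and if $p_k = 1$ for all $k$ then $P\left( z \right) = \left( 1 - z \right)/2$, so in both cases $Z_P \cap \left( 0, 1 \right) = \emptyset$ and there is no admissible $z$. For the remaining possibility that $W$ is a point mass with $z \in \left( 0, 1 \right)$, constancy of $W = u + mI$ forces, for every $u \in \left\{ 0, \dots, m \right\}$ (all of which occur since $z \in \left( 0, 1 \right)$), that $p_u \in \left\{ 0, 1 \right\}$ and that the value ($u + m$ if $p_u = 1$, and $u$ if $p_u = 0$) be a common constant; comparing two distinct $u, u'$ then yields $\left| u - u' \right| \in \left\{ 0, m \right\}$, which as $u$ ranges over all of $\left\{ 0, \dots, m \right\}$ is possible only if $m = 1$, and the common-value condition then pins down $p_0 = 1$, $p_1 = 0$, i.e.\ case (c). So away from (a)--(c) the variable $W$ is genuinely random and the preceding paragraph applies.

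Finally I would transfer the conclusion from $x_n = z$ to all $x_n \in \left( \delta, 1 - \delta \right)$. The support of $W$ is the same for every $x_n \in \left( 0, 1 \right)$, since positivity of each of the probabilities above is governed only by the relevant parameter factor ($1 - p_{k_1}$, or $p_{k_1 - m}$, or $\left( 1 - p_m \right) + p_0$ when $k_1 = m$), which is independent of $x_n$; membership of $k_1$ in the support at $z$ thus means this factor is strictly positive. Using $x_n^{j}\left( 1 - x_n \right)^{m-j} \ge \delta^{m}$ for $x_n \in \left( \delta, 1 - \delta \right)$, I get, e.g.\ for $k_1 \le m-1$,
\[
 \p\left( X_{n+1} - X_n = k_1 \mid \mcF_n \right) \ge \binom{m}{k_1} \delta^{m} \left( 1 - p_{k_1} \right) > 0,
\]
with the analogous estimates in the two other regimes and identically for $k_2$. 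These are positive functions of $\delta$ and $\left\{ p_k \right\}$ alone, which is exactly the claim.
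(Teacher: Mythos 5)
Your proof is correct, but it takes a genuinely different route from the paper's. The paper argues by direct case analysis on the parameters: when $p_0 < 1$ it takes $k_1 = 0$, when $p_m > 0$ it takes $k_2 = 2m$, it treats the special families $p_0 > 0,\ p_1 = \dots = p_m = 0$ (and its mirror) separately, and in the remaining generic case it produces the missing $k_1$ by contradiction --- if $p_\ell = 1$ for all $\ell < 2mz$, a Markov-inequality bound on the binomial tail forces $P\left( z \right) > 0$. Your argument instead concentrates all use of the hypothesis $P\left( z \right) = 0$ into the single identity $\E \left( X_{n+1} - X_n \, \middle| \, x_n = z \right) = 2mz$, from which the soft observation that a non-degenerate integer-valued random variable must charge both sides of its mean immediately yields $k_1$ and $k_2$; the only remaining work is to check that degeneracy of the increment at some $z \in \left( 0, 1 \right)$ occurs exactly in the excluded cases, and your computation ($|u - u'| \in \{0, m\}$ for all pairs, forcing $m = 1$, $p_0 = 1$, $p_1 = 0$, with (a) and (b) ruled out because $P$ then has no interior zero) is correct, as is the uniform lower bound $x_n^{j} \left( 1 - x_n \right)^{m-j} \geq \delta^m$ transferring positivity from $z$ to all of $\left( \delta, 1 - \delta \right)$. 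Your version is shorter and explains conceptually \emph{why} (a)--(c) are precisely the exceptional cases, whereas the paper's case analysis has the minor advantage of exhibiting explicit choices of $k_1$ and $k_2$ in each regime; both yield bounds of the same form, namely $\delta^m$ times a positive function of the parameters.
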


\begin{proof}
In the following we always assume that $x_n \in \left( \delta, 1 - \delta \right)$. If $p_0 < 1$ then we can choose $k_1 = 0$ since $\p \left( X_{n+1} - X_n = 0 \, \middle| \, \mcF_n \right) = \left( 1 - x_n \right)^m \left( 1 - p_0 \right) \geq \delta^m \left( 1 - p_0 \right)$. Similarly, if $p_m > 0$ then we can choose $k_2 = 2m$, since $\p \left( X_{n+1} - X_n = 2m \, \middle| \, \mcF_n \right) = x_n^m p_m \geq \delta^m p_m$. The rest of the proof deals with the cases when either $p_0 = 1$ or $p_m = 0$.
 
First consider the case when $p_0 > 0$ and $p_1 = p_2 = \dots = p_m = 0$. In this case $P \left( s \right) = \frac{1}{2} \left[ p_0 \left( 1 - s \right)^m - s \right]$, which is decreasing in $\left[0,1\right]$, so it has a single zero in $\left(0,1\right)$. 
In fact, $P\left( 1/2 \right) < 0$, so the single zero of $P$ in $\left( 0, 1 \right)$ is in $\left( 0, 1/2 \right)$, and thus we can take $k_2 = m$.
If $p_0 < 1$ then we can take $k_1 = 0$ as described above. 
Finally, if $p_0 = 1$ and $m > 2$, then we can take $k_1 = 1$. 
This is because the zero of $P$ in $\left( 0, 1 \right)$ is in $\left( \frac{1}{2m}, \frac{1}{2} \right)$, which follows from the fact that $P \left( \frac{1}{2m} \right) > 0$. The case when $p_0 = p_1 = \dots = p_{m-1} = 1$ and $p_m < 1$ follows similarly.

Now we can assume that there exist $1 \leq i \leq m$ and $0 \leq j \leq m - 1$ such that $p_i > 0$ and $p_j < 1$. This implies that $\p \left( X_{n+1} - X_n = j \, \middle| \, \mcF_n \right) \geq \delta^m \left( 1 - p_j \right) > 0$ and $\p \left( X_{n+1} - X_n = m+i \, \middle| \, \mcF_n \right) \geq \delta^m p_i > 0$. Thus if $z = 1/2$ then we can take $k_1 = j$ and $k_2 = m + i$. If $0 < z < 1/2$ then we can again take $k_2 = m+i$, and we just need to show the existence of an appropriate $k_1$. Assume by contradiction that there does not exist an appropriate $k_1$, i.e., for all $\ell < 2m z$, $p_\ell = 1$. Then we have 
\[
 P \left( s \right) \geq \frac{1}{2} \left[ \sum_{0 \leq k < 2m z} \binom{m}{k} s^k \left( 1 - s \right)^{m-k} - s \right] = \frac{1}{2} \left[ 1 - s - \sum_{2m z \leq k \leq m} \binom{m}{k} s^k \left( 1 - s \right)^{m-k} \right].
\]
By Markov's inequality for a binomial random variable, this latter sum evaluated at $z$ is at most $1/2$, and since $z < 1/2$, we must have $P \left( z \right) > 0$, which is a contradiction. The case of $1/2 < z < 1$ is similar.
\end{proof}

\begin{proof}[Proof of Theorem~\ref{thm:nonlin_unstable} for $z \in \left( 0, 1 \right)$]
 This follows from Lemma~\ref{lem:stoch_appx} and Theorem~\ref{thm:nonconv_unstable_fix_pts}. The only condition of Theorem~\ref{thm:nonconv_unstable_fix_pts} that needs to be checked additionally is that  $\E \left( \xi_{n+1}^{+} \, \middle| \, \mcF_n \right)$ and $\E \left( \xi_{n+1}^{-} \, \middle| \, \mcF_n \right)$ are bounded away from zero by positive numbers when $x_n \in \left( z - \eps, z + \eps \right)$ for small enough $\eps > 0$; this can be done using Lemma~\ref{lem:lower}. In the special cases (a), (b), and (c) described in Lemma~\ref{lem:lower}, the statement of Theorem~\ref{thm:nonlin_unstable} is vacuously true, since in each case the polynomial $P$ has no zeros at which it is increasing. Thus we may assume that we are not in these special cases, and we can use Lemma~\ref{lem:lower}. Recall that
\[
 \xi_{n+1} = \frac{n}{S_{n+1}} \left\{ X_{n+1} - X_n - 2m \left( x_n + P \left( x_n \right) \right) \right\}.
\]
Define $\widetilde{\eps} := \frac{1}{2} \min \left\{ 2mz - k_1, k_2 - 2mz \right\}$, where $k_1$ and $k_2$ are given by Lemma~\ref{lem:lower}, and let $\eps > 0$ be small enough such that whenever $x_n \in \left( z - \eps, z + \eps \right)$, necessarily $2m \left( x_n + P \left( x_n \right) \right) \in \left( 2mz - \widetilde{\eps}, 2mz + \widetilde{\eps} \right)$. 
If $x_{n} \in \left( z - \eps, z + \eps \right)$ then we have $\E \left( \xi_{n+1}^+ \, \middle| \, \mcF_n \right) \geq \frac{n}{S_{n+1}} \widetilde{\eps} \p \left( X_{n+1} - X_n = k_2 \, \middle| \, \mcF_n \right)$, where $\lim_{n \to \infty} \frac{n}{S_{n+1}} = \frac{1}{2m}$, and by Lemma~\ref{lem:lower} the probability $\p \left( X_{n+1} - X_n = k_2 \, \middle| \, \mcF_n \right)$ is bounded from below by a positive function of $z$, $\eps$, and the parameters $\left\{ p_k \right\}_{0 \leq k \leq m}$. 
We can similarly bound $\E \left( \xi_{n+1}^- \, \middle| \, \mcF_n \right)$ from below.
\end{proof}

We next prove Theorem~\ref{thm:nonlin_unstable} for the endpoints $0$ and $1$. 
The main idea of the proof is to compare the behaviour near the endpoints of our process of interest to that of a standard P\'olya urn process where $2m$ balls are added at each time step. 
In order to formalize this, we make use of several different stochastic orders; we refer to~\cite{shaked2007stochastic} for an overview of these. 
We proceed by defining these stochastic orders and stating a few results on them, before proving Theorem~\ref{thm:nonlin_unstable}.

\begin{definition}[Stochastic orders]\label{def:stochastic_orders}
 Let $X$ and $Y$ be random variables.

 We say that $X$ is \emph{smaller than $Y$ in the usual stochastic order} (denoted by $X \leq_{\st} Y$) if $\E \left( \phi \left( X \right) \right) \leq \E \left( \phi \left( Y \right) \right)$ for all increasing continuous functions $\phi : \R \to \R$ for which these expectations exist.

 We say that $X$ is \emph{smaller than $Y$ in the convex order} (denoted by $X \leq_{\cx} Y$) if $\E \left( \phi \left( X \right) \right) \leq \E \left( \phi \left( Y \right) \right)$ for all  continuous convex functions $\phi : \R \to \R$ for which these expectations exist.

 We say that $X$ is \emph{smaller than $Y$ in the increasing convex order} (denoted by $X \leq_{\icx} Y$) if $\E \left( \phi \left( X \right) \right) \leq \E \left( \phi \left( Y \right) \right)$ for all increasing continuous convex functions $\phi : \R \to \R$ for which these expectations exist.
\end{definition}

\begin{lemma}\label{lem:st-cx-icx}
 Two random variables $X$ and $Y$ satisfy $X \leq_{\icx} Y$ if and only if there is a random variable $Z$ such that $X \leq_{\st} Z \leq_{\cx} Y$.
\end{lemma}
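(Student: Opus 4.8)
The plan is to prove the two implications separately, the ``if'' direction being immediate and the ``only if'' direction requiring an explicit construction of the intermediate variable $Z$. For the easy direction I would suppose $X \leq_{\st} Z \leq_{\cx} Y$ and take any increasing continuous convex $\phi$. Since $\phi$ is increasing and continuous, $X \leq_{\st} Z$ gives $\E(\phi(X)) \leq \E(\phi(Z))$; since $\phi$ is convex and continuous, $Z \leq_{\cx} Y$ gives $\E(\phi(Z)) \leq \E(\phi(Y))$. Chaining these yields $\E(\phi(X)) \leq \E(\phi(Y))$ for every such $\phi$, i.e.\ $X \leq_{\icx} Y$.

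For the converse I would work with quantile functions and their integrated upper tails. Write $F_W^{-1}$ for the quantile function of $W$ and set $\Phi_W(p) := \int_p^1 F_W^{-1}(u)\,du$ for $p \in [0,1]$ and $W \in \{X, Y, Z\}$. I will use the standard characterizations (see~\cite{shaked2007stochastic}): $X \leq_{\icx} Y$ is equivalent to $\Phi_X \leq \Phi_Y$ on $[0,1]$; $X \leq_{\st} Z$ is equivalent to $F_X^{-1} \leq F_Z^{-1}$ pointwise; and $Z \leq_{\cx} Y$ is equivalent to $\Phi_Z \leq \Phi_Y$ on $[0,1]$ together with the equal-means condition $\Phi_Z(0) = \Phi_Y(0)$. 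Note that each $\Phi_W$ is concave with $\Phi_W(1) = 0$ and $\Phi_W(0) = \E(W)$, and that $X \leq_{\icx} Y$ forces $\E(X) \leq \E(Y)$.

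Given $X \leq_{\icx} Y$, hence $\Phi_X \leq \Phi_Y$, I would define the candidate through
\begin{equation*}
 \Phi_Z(p) := \Phi_X(p) + \inf_{0 \leq s \leq p} \left( \Phi_Y(s) - \Phi_X(s) \right), \qquad F_Z^{-1} := - \Phi_Z'.
\end{equation*}
Writing $D(p) := \inf_{0 \leq s \leq p}(\Phi_Y - \Phi_X)(s)$, one checks directly that $D$ is nonnegative and nonincreasing with $D(0) = \E(Y) - \E(X)$ and $D(1) = 0$ (the latter since $(\Phi_Y - \Phi_X)(1) = 0$). These give at once the endpoint values $\Phi_Z(0) = \E(Y) = \Phi_Y(0)$ and $\Phi_Z(1) = 0$, the bound $\Phi_Z \leq \Phi_Y$ (take $s = p$ in the infimum), and the slope inequality $\Phi_Z' = \Phi_X' + D' \leq \Phi_X'$, which is exactly $F_Z^{-1} = -\Phi_Z' \geq -\Phi_X' = F_X^{-1}$. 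Thus, once $Z$ is shown to be a genuine random variable, both $X \leq_{\st} Z$ and $Z \leq_{\cx} Y$ follow from the characterizations above.

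The main obstacle is precisely showing that $F_Z^{-1}$ is a bona fide quantile function, i.e.\ nondecreasing, equivalently that $\Phi_Z$ is concave. I would handle this by splitting $[0,1]$ into the ``active'' set, where the running infimum is attained at the current point (so $\Phi_Z = \Phi_Y$ and $F_Z^{-1} = F_Y^{-1}$ there), and its complement, where $D$ is locally constant (so $\Phi_Z = \Phi_X + \text{const}$ and $F_Z^{-1} = F_X^{-1}$ there). On each piece $F_Z^{-1}$ is nondecreasing because $F_X^{-1}$ and $F_Y^{-1}$ are; at a boundary between the two regimes the one-sided slopes of $\Phi_Z$ match in the correct (concave) direction, since the sign of $(\Phi_Y - \Phi_X)' = F_X^{-1} - F_Y^{-1}$ is exactly what determines whether the infimum is active. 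Verifying this slope-matching across transitions, plus the routine integrability bookkeeping (note $\Phi_Z(0) = \E(Y) < \infty$ bounds $F_Z^{-1}$ in $L^1$), is the only delicate point. As a cross-check, an alternative route avoids the concavity verification entirely: invoke the Strassen-type representation of $\leq_{\icx}$ to obtain a coupling $(\widehat{X}, \widehat{Y})$ with $\widehat{X} =_{\st} X$, $\widehat{Y} =_{\st} Y$ and $\E(\widehat{Y} \mid \widehat{X}) \geq \widehat{X}$ a.s., and set $Z := \E(\widehat{Y} \mid \widehat{X})$, so that $Z \geq \widehat{X}$ a.s.\ gives $X \leq_{\st} Z$ and Jensen for conditional expectations gives $Z \leq_{\cx} Y$.
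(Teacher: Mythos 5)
Your proof is correct, but it is worth noting that the paper does not actually prove this lemma at all: its ``proof'' is a one-line citation to Theorem~4.A.6(a) of Shaked and Shanthikumar, so any genuine argument you give is necessarily a different (and more self-contained) route. The easy direction is exactly right. For the converse, your quantile construction $\Phi_Z = \Phi_X + \inf_{s \leq p}(\Phi_Y - \Phi_X)(s)$ does work: the characterizations you invoke ($\leq_{\icx}$ as pointwise domination of the integrated upper quantiles, $\leq_{\cx}$ as the same plus equal means, $\leq_{\st}$ as pointwise domination of quantiles) are all standard, and the delicate concavity step goes through --- at an inactive-to-active transition $p_1$ one has $\int_q^{p_1}\bigl(F_X^{-1}-F_Y^{-1}\bigr)<0$ for $q$ in the inactive stretch, which produces a point $u\in(q,p_1)$ with $F_X^{-1}(u)<F_Y^{-1}(u)$ and hence $F_X^{-1}(q)\leq F_X^{-1}(u)< F_Y^{-1}(u)\leq F_Y^{-1}(p)$ for $p>p_1$, and symmetrically at active-to-inactive transitions; one should just be careful that the ``active set'' need not be a finite union of intervals, so the argument is cleanest phrased via these integral inequalities rather than one-sided derivatives. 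Your alternative route via the Strassen-type representation ($\widehat{X}\leq\E(\widehat{Y}\mid\widehat{X})=:Z$ a.s., then conditional Jensen for $Z\leq_{\cx} Y$) is cleaner and is essentially the proof given in the cited reference; its only cost is that the martingale-type coupling characterization of $\leq_{\icx}$ is itself a nontrivial theorem, so as a self-contained argument the quantile construction is the more elementary of your two options. Either version more than suffices for the paper's application, where $X$ and $Y$ are bounded, so all the integrability bookkeeping is trivial.
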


\begin{proof}
 See~\cite[Theorem~4.A.6.~(a)]{shaked2007stochastic}.
\end{proof}

\begin{lemma}\label{lem:icx_suff}
 Let $X$ and $Y$ be two random variables with cumulative distribution functions $F$ and $G$, respectively, and bounded supports. Suppose that $\E \left( X \right) \leq \E \left( Y \right)$, and also that if $t_1 < t_2$ and $G\left( t_1 \right) < F \left( t_1 \right)$ then $G\left( t_2 \right) \leq F\left( t_2 \right)$. Then $X \leq_{\icx} Y$.
\end{lemma}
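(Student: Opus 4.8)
The plan is to reduce the claim to a single scalar inequality via the stop-loss (integrated-tail) characterization of the increasing convex order: for random variables with finite means, $X \leq_{\icx} Y$ holds if and only if $\E\left( \left( X - t \right)_+ \right) \leq \E\left( \left( Y - t \right)_+ \right)$ for every $t \in \R$ (see~\cite{shaked2007stochastic}). Since $\E\left( \left( X - t \right)_+ \right) = \int_t^\infty \left( 1 - F\left( s \right) \right) ds$, and analogously for $Y$, this is equivalent to
\[
 \Phi\left( t \right) := \int_t^\infty \left( G\left( s \right) - F\left( s \right) \right) ds \leq 0 \qquad \text{for all } t \in \R .
\]
The bounded-support hypothesis guarantees that all of these integrals are finite, so the entire lemma reduces to proving $\Phi \leq 0$ pointwise.

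Next I would translate the two hypotheses into statements about $H := G - F$ and about $\Phi$. The mean condition $\E\left( X \right) \leq \E\left( Y \right)$, together with the identity $\int_{-\infty}^\infty H\left( s \right) ds = \E\left( X \right) - \E\left( Y \right)$ (valid for bounded supports), gives $\lim_{t \to -\infty} \Phi\left( t \right) \leq 0$; and trivially $\lim_{t \to +\infty} \Phi\left( t \right) = 0$. The crossing condition---that $G\left( t_1 \right) < F\left( t_1 \right)$ forces $G\left( t_2 \right) \leq F\left( t_2 \right)$ for every $t_2 > t_1$---says precisely that $H$ changes sign at most once, from nonnegative to nonpositive. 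Making this precise, set $t^* := \inf \left\{ t : H\left( t \right) < 0 \right\}$; then $H \geq 0$ on $\left( -\infty, t^* \right)$ and $H \leq 0$ on $\left( t^*, \infty \right)$.

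The final step is a monotonicity analysis of $\Phi$ using $\Phi' = -H$. On $\left( -\infty, t^* \right)$ we have $\Phi' \leq 0$, so $\Phi$ is nonincreasing and hence $\Phi\left( t \right) \leq \lim_{s \to -\infty} \Phi\left( s \right) \leq 0$ there; on $\left( t^*, \infty \right)$ we have $\Phi' \geq 0$, so $\Phi$ is nondecreasing and hence $\Phi\left( t \right) \leq \lim_{s \to +\infty} \Phi\left( s \right) = 0$. Combining the two ranges yields $\Phi \leq 0$ everywhere, which is exactly the desired characterization of $X \leq_{\icx} Y$. The degenerate cases $t^* = \pm \infty$ (where $H$ carries a single sign throughout) are covered by these same two estimates.

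Since this is essentially a one-dimensional single-crossing argument, I do not expect a serious obstacle. The points needing care are the initial reduction---selecting the stop-loss characterization of $\leq_{\icx}$ and verifying finiteness from the bounded-support assumption---and the precise bookkeeping of strict versus weak inequalities around the crossing point $t^*$, in particular confirming that the boundary cases where $H$ never becomes strictly negative (which force $F = G$, and hence equality of laws) remain consistent with $\Phi \leq 0$.
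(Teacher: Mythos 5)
Your argument is correct, but it is genuinely different from what the paper does: the paper gives no proof at all for this lemma, simply citing Theorem~4.A.22.(b) of Shaked and Shanthikumar~\cite{shaked2007stochastic}. What you have written out is essentially the standard proof of that cited result (the Karlin--Novikoff single-crossing, or ``cut,'' criterion): reduce $\leq_{\icx}$ to the stop-loss inequalities $\E\left(\left(X-t\right)_+\right) \leq \E\left(\left(Y-t\right)_+\right)$, observe that $\Phi(t) = \int_t^\infty \left(G-F\right)$ is absolutely continuous with a.e.\ derivative $-H = F - G$, use the crossing hypothesis to locate $t^* = \inf\{t : H(t) < 0\}$ so that $\Phi$ is nonincreasing on $\left(-\infty, t^*\right)$ and nondecreasing on $\left(t^*, \infty\right)$, and control the two boundary values by the mean inequality and the bounded support. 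Each step checks out, including the degenerate cases and your parenthetical that $H \geq 0$ everywhere together with $\E(X) \leq \E(Y)$ forces $F = G$. The only piece you defer is the equivalence between $\leq_{\icx}$ and the stop-loss ordering, which is itself a standard theorem in the same reference; if you want the proof to be fully self-contained you would need to add the usual representation of an increasing convex function as a positive mixture of affine functions and angle functions $x \mapsto \left(x - u\right)_+$, which is routine on bounded supports. The trade-off is clear: the paper's citation is shorter, while your version makes the lemma self-contained modulo one textbook equivalence and exposes why the single-crossing hypothesis is exactly what is needed.
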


\begin{proof}
 See~\cite[Theorem~4.A.22.~(b)]{shaked2007stochastic}.
\end{proof}

\begin{lemma}\label{lem:Polya-cx}
 Consider the standard P\'olya urn process where $2m$ balls are added at each time step. Let $x_n^1$ and $x_n^2$ be the proportions of red balls at the $n^{\text{th}}$ step of two realizations of this process. If $x_n^1 \leq_{\cx} x_n^2$, then $x_{n+1}^1 \leq_{\cx} x_{n+1}^2$, i.e., the P\'olya urn process preserves dominance in the convex order.
\end{lemma}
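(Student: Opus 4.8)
The plan is to show that the one-step transition of the urn, regarded as a Markov kernel acting on the \emph{law} of the proportion, preserves the convex order. The essential structural point is that both realizations share the \emph{same} deterministic total count at every step: each urn holds $S_n$ balls after $n$ steps, with $S_{n+1} = S_n + 2m$ (since exactly $2m$ balls are added each time, regardless of the draw). Hence the one-step dynamics are governed by identical constants. Writing $\lambda := S_n / S_{n+1}$ and $\mu := 2m / S_{n+1}$, so that $\lambda + \mu = 1$, the update is, conditionally on $x_n = v$,
\[
x_{n+1} = \lambda v + \mu B, \qquad B \sim \Bin(1, v),
\]
that is, $x_{n+1} = \lambda v + \mu$ with probability $v$ and $x_{n+1} = \lambda v$ with probability $1 - v$. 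In particular $\E\big( x_{n+1} \mid x_n = v \big) = v$, so the proportion is a martingale; this is consistent with the fact that the convex order forces equal means.

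Next I reduce the claim to a single convexity statement. For a continuous convex test function $g$, set $(Kg)(v) := \E\big( g(x_{n+1}) \mid x_n = v \big) = (1-v)\,g(\lambda v) + v\,g(\lambda v + \mu)$. By the tower property, $\E g\big( x_{n+1}^i \big) = \E (Kg)\big( x_n^i \big)$ for $i = 1, 2$. Therefore, if I can establish that $Kg$ is itself convex whenever $g$ is convex, then applying the hypothesis $x_n^1 \leq_{\cx} x_n^2$ to the convex function $Kg$ yields $\E (Kg)(x_n^1) \le \E (Kg)(x_n^2)$, which is exactly $\E g(x_{n+1}^1) \le \E g(x_{n+1}^2)$. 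Since this holds for every continuous convex $g$, it gives $x_{n+1}^1 \leq_{\cx} x_{n+1}^2$, as desired.

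It remains to prove that $Kg$ is convex. First I reduce to smooth $g$: convolving a convex function with a smooth probability density produces a smooth convex function, and because $x_{n+1}$ takes values in the compact interval $[0,1]$, the associated expectations converge under this mollification; so it suffices to treat $g \in C^2$. Writing $a(v) = \lambda v$ and $b(v) = \lambda v + \mu$ (both affine, with $b - a = \mu > 0$ and $a' = b' = \lambda$), a direct differentiation gives
\[
(Kg)''(v) = 2\lambda\big( g'(b) - g'(a) \big) + \lambda^2\big( (1-v)\,g''(a) + v\,g''(b) \big).
\]
Since $g$ is convex, $g'$ is nondecreasing and $g'' \ge 0$; combined with $b > a$ and $\lambda, \mu, v, 1-v \ge 0$, every term is nonnegative, so $(Kg)'' \ge 0$ and $Kg$ is convex.

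I expect the main obstacle to be conceptual rather than computational: recognizing that preservation of the convex order reduces precisely to the statement that the transition kernel maps convex functions to convex functions, and then organizing the derivative computation so that the two ``boundary'' contributions coming from differentiating the weights $1-v$ and $v$ coalesce into the manifestly nonnegative term $2\lambda\big( g'(b) - g'(a) \big)$. The mollification step is routine but necessary, as it is what licenses the use of $C^2$ calculus for an arbitrary continuous convex test function.
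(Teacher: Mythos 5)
Your proposal is correct. Note that the paper does not actually prove this lemma: it simply cites Proposition~1 (equation~(13)) of Pemantle's paper on time-dependent P\'olya urns, so any self-contained argument is, in a sense, a different route. Your reduction --- that it suffices to show the one-step kernel $K$ maps continuous convex functions to continuous convex functions, since then $\E\, g\bigl(x_{n+1}^i\bigr) = \E\,(Kg)\bigl(x_n^i\bigr)$ and the hypothesis $x_n^1 \leq_{\cx} x_n^2$ applies directly to the test function $Kg$ --- is the standard and correct way to see this, and it is essentially the content of Pemantle's cited identity. The computation checks out: with $a(v)=\lambda v$, $b(v)=\lambda v+\mu$ one indeed gets $(Kg)''(v) = 2\lambda\bigl(g'(b)-g'(a)\bigr) + \lambda^2\bigl((1-v)g''(a)+v\,g''(b)\bigr) \geq 0$, and the mollification step legitimately reduces general continuous convex $g$ to the $C^2$ case because all the random variables live in $[0,1]$. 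Two small points worth making explicit: (i) the argument implicitly uses that the two realizations have the same deterministic total $S_n$ at each step (so that $\lambda$ and $\mu$ are common to both kernels) --- you state this, and it is indeed how the lemma is used in the paper, where $\overline{X}_0 = X_0$; (ii) since the paper's definition of $\leq_{\cx}$ quantifies over convex functions on all of $\R$, one should observe that $Kg$, being convex and continuous on $[0,1]$, extends affinely to a continuous convex function on $\R$, which is immediate. Neither is a gap.
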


\begin{proof}
 See Proposition~1 in~\cite{pemantle1990time}, in particular equation~(13).
\end{proof}

\begin{proof}[Proof of Theorem~\ref{thm:nonlin_unstable} for the endpoints]
We prove nonconvergence to 1 when $P\left( 1 \right) = 0$ and $P < 0$ on $\left( 1 - \eps, 1 \right)$ for some $\eps > 0$; 
the proof for the other endpoint is analogous. 
In the following fix $0 < \eps < 1/m$. 

The main idea of the proof is to compare the behaviour near $1$ of our process of interest to that of a standard P\'olya urn process where $2m$ balls are added at each time step. 
To aid in this comparison we also introduce an auxiliary process which is a combination of these two. We begin by describing these processes. 

Our process of interest is $\left\{ X_n \right\}_{n \geq 0}$, together with its normalised process $\left\{x_n \right\}_{n \geq 0}$. 
Let $\left\{ \overline{X}_n \right\}_{n \geq 0}$ denote the process of the number of red balls in a standard P\'olya urn process where $2m$ balls are added at each time step and the initial conditions are the same as those for the process $\left\{ X_n \right\}_{n \geq 0}$, i.e., $\overline{X}_0 = X_0$. 
Let $\left\{ \overline{x}_n \right\}_{n \geq 0}$ denote the normalised process, i.e., $\overline{x}_n = \frac{\overline{X}_n}{S_0 + 2mn}$. 
Let $\left\{ \widetilde{X}_n \right\}_{n \geq 0}$ denote the auxiliary process, with initial condition $\widetilde{X}_0 = X_0$, 
and let $\left\{ \widetilde{x}_n \right\}_{n \geq 0}$ denote the normalised process, i.e., $\widetilde{x}_n = \frac{\widetilde{X}_n}{S_0 + 2mn}$. 
We define this auxiliary process as follows. 
For $1 - \eps < x \leq 1$, given $\widetilde{x}_n = x$ let $\widetilde{X}_{n+1}$ have the same distribution as $X_{n+1}$ given $x_n = x$. 
For $x \leq 1 - \eps$, let 
\begin{align*}
\p \left( \widetilde{X}_{n+1} = \widetilde{X}_n \, \middle| \, \widetilde{x}_n = x \right) &= 1 - x \\
\text{ and } \quad 
\p \left( \widetilde{X}_{n+1} = \widetilde{X}_n + 2m \, \middle| \, \widetilde{x}_n = x \right) &= x. 
\end{align*}
In other words, 
when $\widetilde{x}_n > 1 - \eps$ then evolve the auxiliary process according to our process of interest, 
and when $\widetilde{x}_n \leq 1 - \eps$ then evolve it as a P\'olya urn process. 

We first show that it suffices to prove the claim for the auxiliary process, i.e., 
it suffices to show that $\p \left( \lim_{n \to \infty} \widetilde{x}_n = 1 \right) = 0$. 
Define the following events:
\[
 A_n := \left\{ \lim_{k \to \infty} x_k = 1, x_k > 1 - \eps \text{ for all } k \geq n \right\}, \qquad \qquad
 \widetilde{A}_n := \left\{ \lim_{k \to \infty} \widetilde{x}_k = 1, \widetilde{x}_k > 1 - \eps \text{ for all } k \geq n \right\}.
\]
If $\p \left( \lim_{n \to \infty} x_n = 1 \right) > 0$, 
then there exists $n_0 < \infty$ such that $\p \left( A_{n_0} \right) > 0$. 
In particular, there exists $y_0 \in \left( 1 - \eps, 1 \right)$ such that 
both probabilities $\p \left( x_{n_0} \geq y_0 \right)$ and $\p \left( A_{n_0} \, \middle| \, x_{n_0} = y_0 \right)$ are positive. 
In fact, we claim that $\p \left( A_{n_0} \, \middle| \, x_{n_0} = y \right)$ is positive for all $y_0 \leq y < 1$. 

To see this, consider two realizations of our process, 
$\left\{ X_n^1 \right\}_{n \geq 0}$ and $\left\{ X_n^2 \right\}_{n \geq 0}$, 
together with the normalised processes $\left\{ x_n^1 \right\}_{n \geq 0}$ and $\left\{ x_n^2 \right\}_{n \geq 0}$. 
Given $1 - \eps < x_n^1 < x_n^2 < 1$, 
we can couple $X_{n+1}^1$ and $X_{n+1}^2$ such that for any $0 \leq k \leq 2m$, 
$X_{n+1}^1 - X_n^1 = k$ implies that either $X_{n+1}^2 - X_n^2 = k$ or $X_{n+1}^2 - X_n^2 = 2m$. 
This is possible due to two facts. 
First, since $\eps < 1/m$, 
on the interval $\left( 1 - \eps, 1 \right)$ the function $x \mapsto x^m$ is increasing, 
while for $0 \leq k < m$, the functions $x \mapsto x^k \left( 1- x \right)^{m-k}$ are decreasing. 
Consequently 
\[
\p \left( \Bin \left( m, x_n^1 \right) = m \right) < \p \left( \Bin \left( m, x_n^2 \right) = m \right)
\]
and for $0 \leq k < m$, 
\[
\p \left( \Bin \left( m, x_n^1 \right) = k \right) > \p \left( \Bin \left( m, x_n^2 \right) = k \right), 
\]
where $\Bin \left(m, x \right)$ denotes a binomial random variable with parameters $m$ and $x$. 
Second, $P\left( 1 \right) = 0$ implies that $p_m = 1$. 
Repeated application of this coupling shows that for any $1 - \eps < y^1 < y^2 < 1$ we have 
\[
\p \left( A_n \, \middle| \, x_n = y^1 \right) \leq \p \left( A_n \, \middle| \, x_n = y^2 \right); 
\]
in particular, we have that 
\[
\p \left( A_{n_0} \, \middle| \, x_{n_0} = y \right) \geq \p \left( A_{n_0} \, \middle| \, x_{n_0} = y_0 \right) 
\]
for all $y \geq y_0$.

Now consider the auxiliary process. 
For one, we have 
$\p \left( \widetilde{x}_{n_0} \geq y_0 \right) > 0$. 
Moreover, if $x_{n_0} = \widetilde{x}_{n_0}$, on the event $A_{n_0}$ we can couple the processes 
$\left\{ x_n \right\}_{n \geq n_0}$ and $\left\{ \widetilde{x}_n \right\}_{n \geq n_0}$ 
so that $x_n = \widetilde{x}_n$ for all $n \geq n_0$, 
which shows that 
\[
\p \left( \widetilde{A}_{n_0} \, \middle| \, \widetilde{x}_{n_0} = y \right) \geq \p \left( A_{n_0} \, \middle| \, x_{n_0} = y_0 \right) > 0 
\]
for all $y \geq y_0$.
In particular, this shows that 
$\p \left( \lim_{n \to \infty} x_n = 1 \right) > 0$ implies that 
$\p \left( \lim_{n \to \infty} \widetilde{x}_n = 1 \right) > 0$. 
Thus it suffices to show that $\p \left( \lim_{n \to \infty} \widetilde{x}_n = 1 \right) = 0$. 

We claim that $\widetilde{x}_n \leq_{\icx} \overline{x}_n$ implies that $\p \left( \lim_{n \to \infty} \widetilde{x}_n = 1 \right) = 0$. 
To see this, for $\delta > 0$ define the function 
$g_\delta : \left[ 0, 1 \right] \to \left[ 0, 2 \right]$ by 
$g_\delta \left( x \right) = \max \left\{ 0, 2 - 1/\delta + x / \delta \right\}$. 
This is an increasing continuous convex function, 
and so $\widetilde{x}_n \leq_{\icx} \overline{x}_n$ implies that
\begin{equation}\label{eq:icx-g}
 \p \left( \widetilde{x}_n > 1 - \delta \right) \leq \E \left( g_\delta \left( \widetilde{x}_n \right) \right) \leq \E \left( g_\delta \left( \overline{x}_n \right) \right) \leq 2 \p \left( \overline{x}_n > 1 - 2 \delta \right).
\end{equation}
We know that the limiting distribution of $\overline{x}_n$ is a beta distribution, and thus
\[
 \lim_{\delta \to 0} \lim_{n \to \infty} \p \left( \overline{x}_n > 1 - 2 \delta \right) = 0.
\]
By~\eqref{eq:icx-g} this then implies that $\p \left( \lim_{n \to \infty} \widetilde{x}_n = 1 \right) = 0$.

We prove $\widetilde{X}_n \leq_{\icx} \overline{X}_n$ 
(which is equivalent to $\widetilde{x}_n \leq_{\icx} \overline{x}_n$) 
by induction on $n$; 
for $n=0$ this is immediate since the initial conditions agree. 
Fix now a positive integer $n$, and consider a random variable $X$ which attains integer values in the interval $\left[X_0, S_0 + 2mn \right]$, and let $x = \frac{X}{S_0 + 2mn}$. 
Denote by $\overline{X}$ a random variable with distribution 
$\p \left( \overline{X} = X + 2m \, \middle| \, X \right) = x$ and 
$\p \left( \overline{X} = X \, \middle| \, X \right) = 1 - x$. 
Similarly, let $\widetilde{X}$ denote a random variable with distribution the same as that of $\widetilde{X}_{n+1}$ conditioned on $\widetilde{X}_n = X$. 
Following Pemantle~\cite{pemantle1990time}, the induction step follows from the following two claims: 
(i) $\widetilde{X} \leq_{\icx} \overline{X}$, and 
(ii) $X \leq_{\icx} Y$ implies that $\overline{X} \leq_{\icx} \overline{Y}$.

First, it is enough to show that for any fixed $r$, conditioned on $x = r$ we have 
$\widetilde{X} \leq_{\icx} \overline{X}$; 
one can then integrate out the conditioning to get (i). 
We show this by checking the conditions of Lemma~\ref{lem:icx_suff}. 
First, when $r \leq 1 - \eps$ we have 
$\E \left( \widetilde{X} \, \middle| \, x = r \right) = \E \left( \overline{X} \, \middle| \, x = r \right)$ 
by the definition of the auxiliary process. 
If $r > 1 - \eps$ then we have 
$\E \left( \overline{X} \, \middle| \, x = r \right) = r \left( S_0 + 2mn \right) + 2mr$, 
while 
$\E \left( \widetilde{X} \, \middle| \, x = r \right) = r \left( S_0 + 2mn \right) + 2m \left( r + P \left( r \right) \right)$. 
Since $r > 1 - \eps$, 
$P\left( r \right) < 0$, 
and thus 
$\E \left( \widetilde{X} \, \middle| \, x = r \right) < \E \left( \overline{X} \, \middle| \, x = r \right)$. 
This shows that 
$\E \left( \widetilde{X} \, \middle| \, x = r \right) \leq \E \left( \overline{X} \, \middle| \, x = r \right)$. 
The other condition of Lemma~\ref{lem:icx_suff} holds automatically due to the fact that conditioned on $X = \ell$, 
the distribution of $\overline{X}$ is supported on the two values $\left\{ \ell, \ell + 2m \right\}$, 
while the support of the  distribution of $\widetilde{X}$ is contained in the interval $\left[ \ell, \ell + 2m\right]$. 

In view of Lemmas~\ref{lem:st-cx-icx} and~\ref{lem:Polya-cx}, 
to show (ii) it is enough to show that 
$X \leq_{\st} Y$ implies $\overline{X} \leq_{\st} \overline{Y}$, 
i.e., that for any increasing function $\phi$ we have 
$\E \left( \phi \left( \overline{X} \right) \right) \leq \E \left( \phi \left( \overline{Y} \right) \right)$. 
By conditioning on $X$ and $Y$, we have 
$\E \left( \phi \left( \overline{X} \right) \right) = \E \left( \overline{\phi} \left( X \right) \right)$ and 
$\E \left( \phi \left( \overline{Y} \right) \right) = \E \left( \overline{\phi} \left( Y \right) \right)$, 
where $\overline{\phi} \left( t \right) := \phi \left( t \right) \left( 1 - \alpha t \right) + \phi \left( t + 2m \right) \alpha t$, 
where $\alpha = \left( S_0 + 2mn \right)^{-1}$ and $t$ is such that $0 \leq \alpha t \leq 1$. 
Since $X \leq_{\st} Y$, we only need to show that $\overline{\phi}$ is increasing. 
Indeed, if $t_1 < t_2$ then
\begin{align*}
 \overline{\phi} \left( t_2 \right) - \overline{\phi} \left( t_1 \right) &= \left( \phi \left( t_2 + 2m \right) - \phi \left( t_1 + 2m \right) \right) \alpha t_1 + \left( \phi \left( t_2 \right) - \phi \left( t_1 \right) \right) \left( 1 - \alpha t_1 \right) \\
 &\quad + \left( \phi \left( t_2 + 2m \right) - \phi \left( t_2 \right) \right) \alpha \left( t_2 - t_1 \right),
\end{align*}
which is nonnegative, since all of the terms on the right hand side are nonnegative.
\end{proof}

\begin{proof}[Proof of Theorem~\ref{thm:nonlin_touchpoints}]
 This follows directly from Lemma~\ref{lem:stoch_appx} and Theorem~\ref{thm:conv_touchpoints}.
\end{proof}

\section{Many colours/types}\label{sec:many} 

It is both natural and important to study competition between more than two colours/types. 
The model naturally extends in this direction, and in this section we present our results regarding $N \geq 3$ competing types. 
In the following, vectors will be denoted using boldface, subscripts typically correspond to time and superscripts correspond to the indices of types. Furthermore, denote by $\Delta^N$ the probability simplex in $\R^N$.

The natural extension of the model to multiple competing types is as follows. At time zero, there is a graph $G_0$, where each node is of exactly one of the $N$ types. At each timestep a new node is added to the graph, and is connected to $m$ nodes of the original graph according to linear preferential attachment. The types of these $m$ neighbours induce a vector of types $\bfu$, where $u^i$ is the number of neighbours of type $i$. The type of the new node is then determined according to a random draw from the distribution $p_{\bfu} = \left\{ p_{\bfu}^i \right\}_{i \in \left[ N \right]}$. The probabilities $\left\{ p_{\bfu}^i \right\}_{\bfu,i}$ are parameters of the model.

As in the case of two types, our primary interest is in the fraction of nodes of each type. 
Let $A_n^i$ denote the number of nodes of type $i$ at time $n$, and let $\bfA_n = \left( A_n^1, \dots, A_n^N \right)$ denote the resulting vector of types. Let $\bfa_n$ denote the normalised vector of types, such that $\sum_{i=1}^N a_n^i = 1$. 
Furthermore, let $X_n^i$ denote the sum of the degrees of type $i$ nodes at time $n$, let $\bfX_n = \left( X_n^1, \dots, X_n^N \right)$ denote the resulting vector of degrees, and let $\bfx_n$ be the normalised vector of degrees, such that $\sum_{i=1}^N x_n^i = 1$.

As in the $N=2$ case, there is a clear distinction between the \emph{linear model}, when $p_{\bfu}^i = \frac{u^i}{m}$ for all $\bfu$ and $i \in \left[ N \right]$, and \emph{nonlinear models}, when there exist $\bfu$ and $i \in \left[ N \right]$ such that $p_{\bfu}^i \neq \frac{u^i}{m}$. In fact, the linear model for $N \geq 3$ types reduces to the linear model for two types. This is because in the linear model, if we want to study the evolution of the size of type $i$, then we can group all other types into a single ``mega-type'', denoted by $-i$, and run the process with two types: type $i$ and ``mega-type'' $-i$. Due to linearity, the original process with $N$ types and the process with type $i$ and ``mega-type'' $-i$ can be coupled such that the evolution of type $i$ is identical in the two processes. Consequently, in the linear model all the results of the $N = 2$ case apply. In particular, we have the following theorem.

\begin{theorem}[Linear model]\label{thm:many_lin}
Assume that $p_{\bfu}^i = \frac{u^i}{m}$ for all $\bfu$ and $i \in \left[ N \right]$, and that $X_0^i > 0$ for all $i \in \left[ N \right]$. Then $\bfa_n$ converges almost surely, and the limiting distribution has full support on $\Delta^N$, and no atoms.\hfill$\openbox$
\end{theorem}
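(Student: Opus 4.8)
The plan is to exploit the coupling between the $N$-type linear process and the two-type linear model noted above, reducing both the almost-sure convergence and the no-atom property to the $N=2$ results, and then to prove full support on $\Delta^N$ by a genuinely multidimensional variance-plus-reachability argument.

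\emph{Convergence and no atoms.} Fix a type $i \in \left[ N \right]$ and group all other types into the mega-type $-i$. In the linear model the probability that an incoming node becomes type $i$, conditioned on attaching to $k$ type-$i$ endpoints, is $p_{\bfu}^i = u^i / m = k/m$; hence the marginal process $\left( A_n^i, X_n^i \right)$ has exactly the law of the two-type linear model with ``red'' $=$ type $i$. I would note directly that, since $\E \left( X_{n+1}^i - X_n^i \, \middle| \, \mcF_n \right) = 2m x_n^i$ and the total degree increases by $2m$ at each step, we get $\E \left( x_{n+1}^i - x_n^i \, \middle| \, \mcF_n \right) = 0$ for every $i$; thus $\bfx_n$ is a bounded $\Delta^N$-valued martingale and converges a.s. Applying Lemma~\ref{lem:from_half_edges_to_vertices} to each marginal two-type process (with $P \equiv 0$) then gives $\lim_{n \to \infty} \bfa_n = \lim_{n \to \infty} \bfx_n =: \bfx$ a.s. The no-atom property is immediate from the marginals: if $\bfx$ had an atom at some $\bfr \in \Delta^N$, then $\p \left( x^1 = r^1 \right) \geq \p \left( \bfx = \bfr \right) > 0$, contradicting Theorem~\ref{thm:continuous_case}, according to which the one-dimensional limit $x^1$ of the two-type model has no atoms.

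\emph{Full support.} Here the marginal reduction is not enough, and I would adapt the variance argument of Theorem~\ref{thm:continuous_case} coordinatewise. As in the two-type case one has $x_{n+1}^i - x_n^i = \frac{X_{n+1}^i - X_n^i - 2m x_n^i}{S_{n+1}}$, so $\left| x_{n+1}^i - x_n^i \right| \leq \frac{2m}{S_{n+1}} \leq \frac{1}{n+1}$, and orthogonality of martingale increments yields $\E \left( \left( x^i - x_{n_0}^i \right)^2 \, \middle| \, \mcF_{n_0} \right) \leq \frac{1}{n_0}$ exactly as in~\eqref{eq:var_rem}. A conditional Chebyshev bound followed by a union bound over the $N$ coordinates gives $\p \left( \left\| \bfx - \bfx_{n_0} \right\|_\infty < \eta \, \middle| \, \mcF_{n_0} \right) \geq \tfrac{1}{2}$ whenever $n_0 \geq 2N / \eta^2$. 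Now fix an interior point $\bfr$ of $\Delta^N$ and $\eps > 0$, set $\eta = \eps / 2$, and suppose we can find some $n_0 \geq 8N / \eps^2$ with $\p \left( \left\| \bfx_{n_0} - \bfr \right\|_\infty < \eps / 2 \right) > 0$. Conditioning on this positive-probability, $\mcF_{n_0}$-measurable event and adding the two radii then yields $\p \left( \left\| \bfx - \bfr \right\|_\infty < \eps \right) > 0$, which is full support.

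\emph{The main obstacle} is thus the reachability claim: for every interior $\bfr$ and every $\eps > 0$ there is an arbitrarily large $n_0$ with $\p \left( \left\| \bfx_{n_0} - \bfr \right\|_\infty < \eps \right) > 0$. I would prove this by exhibiting a positive-probability sequence of moves that steers $\bfx_n$ toward $\bfr$. At each step the neighbor composition is $\Bin$-type multinomial$\left( m, \bfx_n \right)$ and the incoming node's type is $j$ with probability $x_n^j$; since $X_n^i \geq X_0^i > 0$ forces $x_n^i > 0$ for all $i$ and all $n$, every admissible composition $\bfu$ with $\sum_i u^i = m$ and every winner $j$ with $u^j > 0$ has positive conditional probability. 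Writing $X_n^i = X_0^i + \sum_{t \leq n} u_t^i + m N_n^i$, where $N_n^i$ counts incoming type-$i$ nodes, one can choose the compositions $u_t$ and winners $j_t$ so that $\frac{1}{n} N_n^i \to r^i$ and $\frac{1}{mn} \sum_{t \leq n} u_t^i \to r^i$ (mixing the finitely many nearby integer compositions to realize non-integer targets); since $S_n = S_0 + 2mn$, this drives $x_n^i \to r^i$ for every $i$. The delicate point, exactly as in the two-type argument where it was left implicit, is the book-keeping showing these empirical-frequency targets are simultaneously achievable; the simplex geometry is more involved for $N \geq 3$, but the averaging idea is unchanged.
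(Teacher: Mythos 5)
Your proposal is correct, and it is worth comparing to what the paper actually does: the paper proves Theorem~\ref{thm:many_lin} essentially in one line, by observing that grouping all types other than $i$ into a single ``mega-type'' couples the evolution of type $i$ to the two-type linear model, and then asserting that ``all the results of the $N=2$ case apply.'' That reduction does deliver almost-sure convergence and atomlessness exactly as you argue (convergence of each coordinate as a bounded martingale plus Lemma~\ref{lem:from_half_edges_to_vertices}, and no atoms because an atom of the joint limit would force an atom of a marginal, contradicting Theorem~\ref{thm:continuous_case}). But, as you correctly point out, full support of the marginals on $\left[0,1\right]$ does not imply full support of the joint limit on $\Delta^N$, so the marginal reduction alone does not establish the full-support claim; your coordinatewise variance bound with a Chebyshev-plus-union-bound step, combined with the reachability argument (steering $\bfx_{n_0}$ near any interior $\bfr$ by a positive-probability sequence of compositions and winners, e.g.\ taking $\bfu_t = m\bfmath{\delta}^{j_t}$ with the winners $j_t$ cycling with frequencies $r^i$, which is admissible since $x_n^i > 0$ for all $i$ and $n$), is a genuine and needed supplement that directly generalizes the two-type proof. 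In short, your route is the same as the paper's where the paper gives detail, and fills in a step the paper leaves implicit where it does not; the only remaining book-keeping is the (routine) verification that the chosen empirical frequencies simultaneously converge to $r^i$ for all $i$, which your explicit decomposition $X_n^i = X_0^i + \sum_{t \leq n} u_t^i + m N_n^i$ makes straightforward.
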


In nonlinear models, as we will see later, a key role in the asymptotic behaviour of the process $\left\{ \bfa_n \right\}_{n \geq 0}$ is played by the vector field
\begin{equation}\label{eq:vec_field}
 \bfP \left( \bfy \right) = \frac{1}{2} \sum_{i=1}^N \sum_{\bfu} \binom{m}{\bfu} \left( \bfy \right)^{\bfu} \left[ p_{\bfu}^i - \frac{u^i}{m} \right] \bfmath{\delta}^i,
\end{equation}
where $\binom{m}{\bfu} = \frac{m!}{u^1! \dots u^N!}$ denotes the multinomial coefficient, $\left( \bfy \right)^{\bfu} = \left( y^1 \right)^{u^1} \left( y^2 \right)^{u^2} \dots \left( y^N \right)^{u^N}$, and $\bfmath{\delta}^i$ is the $N$-dimensional unit vector whose $i^{\text{th}}$ coordinate is $1$ and all other coordinates are $0$. Let us denote the zero set of this vector field on the probability simplex by $Z_{\bfP} := \left\{ \bfy \in \Delta^N : \bfP \left( \bfy \right) = \bfmath{0} \right\}$; this will be important later.

The behaviour of the process in the general nonlinear model with multiple types is involved, and its complete theoretical analysis is as of yet out of our reach. Nonetheless, based on partial theoretical results, we conjecture the following asymptotic behaviour, which is similar to that in the case of two types.

\begin{conjecture}[Nonlinear models]\label{conj:many_nonlin}
 Assume that there exist $\bfu$ and $i \in \left[ N \right]$ such that $p_{\bfu}^i \neq \frac{u^i}{m}$, and that $X_0^i > 0$ for all $i \in \left[ N \right]$. Then $\bfa_n$ converges almost surely and the limit is a point in the zero set $Z_{\bfP}$.
\end{conjecture}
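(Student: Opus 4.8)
The plan is to follow the two-type template exactly: reduce first to the sum-of-degrees process $\{\bfx_n\}$, show that $\{\bfx_n\}$ is now an $\R^N$-valued stochastic approximation process driven by the vector field $\bfP$ of~\eqref{eq:vec_field}, and then extract convergence to $Z_{\bfP}$ from the approximation structure. The first step is a vector-valued version of Lemma~\ref{lem:from_half_edges_to_vertices}. Conditioning on $\mcF_n$, the probability that the node added at time $n+1$ has type $i$ is $\sum_{\bfu}\binom{m}{\bfu}(\bfx_n)^{\bfu}p_{\bfu}^i = x_n^i + 2P^i(\bfx_n)$, where $P^i$ is the $i$-th coordinate of $\bfP$, using $\sum_{\bfu}\binom{m}{\bfu}\bfy^{\bfu}(u^i/m)=y^i$. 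For each $i$ I would form the bounded-increment martingale $M_n^i = A_n^i - A_0^i - \sum_{j=0}^{n-1} f^i(\bfx_j)$ with $f^i(\bfy)=y^i+2P^i(\bfy)$, note $\lim_n M_n^i/n = 0$ a.s., and observe that on the event $\{\bfx_n\to\bfx,\ \bfP(\bfx)=\bfmath{0}\}$ one has $f^i(\bfx)=x^i$, so the Ces\`aro means of $f^i(\bfx_n)$ converge to $x^i$ and hence $\bfa_n\to\bfx$ a.s. This step is entirely routine.

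The second step is the vector analog of Lemma~\ref{lem:stoch_appx}. From the multinomial/Bernoulli evolution of $(\bfA_n,\bfX_n)$ one computes $\E(X_{n+1}^i-X_n^i\mid\mcF_n)$, normalizes by $S_{n+1}=S_0+2m(n+1)$, and obtains $\bfx_{n+1}-\bfx_n = \frac{1}{n}\left(\bfP(\bfx_n)+\bfmath{\xi}_{n+1}+\bfR_n\right)$ with $\E(\bfmath{\xi}_{n+1}\mid\mcF_n)=\bfmath{0}$, a uniformly bounded noise $\bfmath{\xi}_{n+1}$, and a remainder $\bfR_n\in\mcF_n$ with $\sum_n n^{-1}\|\bfR_n\|<\infty$; the bounds are identical to the scalar case because each coordinate increment is $O(1/n)$. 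Thus $\{\bfx_n\}$ is a genuine $\R^N$-valued stochastic approximation process with drift $\bfP$ and bounded noise.

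The crux — and the reason this remains a conjecture — is the third step. In one dimension (Theorem~\ref{thm:conv_zero_set_F}) bounded noise forces convergence to the zero set, and since distinct scalar zeros are separated and the scalar ODE is monotone between them, the limit is automatically a single point. Neither fact survives in dimension $\geq 2$. The right general tool is the Benaïm--Hirsch dynamical-systems framework: $\{\bfx_n\}$ is an asymptotic pseudotrajectory of the flow of $d\bfy/dt=\bfP(\bfy)$ on $\Delta^N$, so its limit set is a compact, connected, internally chain-transitive set of that flow. The obstacle is that such a set need not be contained in $Z_{\bfP}$: the flow on the simplex could in principle support periodic orbits or more exotic recurrent sets to which $\bfx_n$ might converge, so one does not automatically land in $\{\bfP=\bfmath{0}\}$.

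To close the gap I would try to rule out all internally chain-transitive sets other than fixed points. The cleanest route is to exhibit a strict Lyapunov function $L:\Delta^N\to\R$ for $\bfP$, decreasing strictly along nonstationary trajectories with critical set exactly $Z_{\bfP}$; then every chain-transitive set lies in $\{\bfP=\bfmath{0}\}$, and connectedness forces convergence to a single zero whenever $Z_{\bfP}$ is totally disconnected (e.g.\ finite, which holds generically since $\bfP=\bfmath{0}$ is a polynomial system). In the planar case $N=3$ one could instead invoke a Poincar\'e--Bendixson or Dulac argument to exclude limit cycles directly. Constructing such a Lyapunov function for general parameters $\{p_{\bfu}^i\}$ is exactly what our present understanding does not cover, and is where I expect the main difficulty to lie: absent extra structure there is no reason $\bfP$ should be gradient-like, and genuinely cyclic coexistence dynamics cannot be excluded a priori.
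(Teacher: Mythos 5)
This statement is a conjecture in the paper, not a proved theorem, and your outline reproduces exactly the authors' own partial progress toward it: your reduction to $\left\{ \bfx_n \right\}$ is their Lemma~\ref{lem:many_degree}, your stochastic-approximation step is their Lemma~\ref{lem:many_stoch_appx}, your appeal to the Bena{\"\i}m--Hirsch asymptotic-pseudotrajectory framework is their Corollary~\ref{cor:stoch_appx_asymptotic_psuedotrajectory}, and your Lyapunov/gradient escape route is the special case they discuss (where $\bfP = -\nabla V$, e.g.\ $p_{\bfu}^i = \alpha \frac{1}{N} + \left( 1 - \alpha \right) \frac{u^i}{m}$). You locate the genuine open gap --- excluding non-equilibrium internally chain-transitive sets of $d\bfy/dt = \bfP\left(\bfy\right)$ for generic parameters $\left\{ p_{\bfu}^i \right\}$ --- in precisely the place the authors do, so your account is accurate but, like theirs, is not a proof: the statement remains open for exactly the reason you identify.
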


In the rest of this section we describe theoretical progress towards this conjecture. As in the case of two competing types, the problem can be cast in a (multidimensional) stochastic approximation framework.

The process $\left\{ \bfA_n \right\}_{n \geq 0}$ is not a Markov process, and therefore we study the joint process $\left\{ \left( \bfA_n, \bfX_n \right) \right\}_{n \geq 0}$, which is indeed Markov. It evolves as follows. Given $\left( \bfA_n, \bfX_n \right)$, a vector $\bfu_{n+1}$ is drawn from the multinomial distribution with parameters $m$ and $\bfx_n$. Subsequently, an index $I_{n+1} \in \left[N \right]$ is chosen from the distribution $p_{\bfu_{n+1}}$. We then have
\begin{align}
 \bfA_{n+1} &= \bfA_n + \bfmath{\delta}^{I_{n+1}}, \label{eq:An_evol}\\
 \bfX_{n+1} &= \bfX_n + \bfu_{n+1} + m \bfmath{\delta}^{I_{n+1}}. \label{eq:Xn_evol}
\end{align}

Before analysing the process $\left\{ \bfx_n \right\}_{n \geq 0}$, let us show that in order to prove Conjecture~\ref{conj:many_nonlin} on the asymptotic behaviour of $\left\{ \bfa_n \right\}_{n \geq 0}$, it is sufficient to prove a similar result on the asymptotic behaviour of $\left\{ \bfx_n \right\}_{n \geq 0}$.

\begin{lemma}\label{lem:many_degree}
Assume that there exist $\bfu$ and $i \in \left[ N \right]$ such that $p_{\bfu}^i \neq \frac{u^i}{m}$, and that $X_0^i > 0$ for all $i \in \left[ N \right]$. 
Assume that $\bfx_n$ converges almost surely and the limit is a point in the zero set $Z_{\bfP}$. 
Then $\bfa_n$ converges almost surely and 
$\lim_{n \to \infty} \bfa_n = \lim_{n \to \infty} \bfx_n \in Z_{\bfP}$. 
\end{lemma}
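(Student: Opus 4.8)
The plan is to mirror the two-type reduction of Lemma~\ref{lem:from_half_edges_to_vertices}, but coordinate by coordinate, reducing the vector statement to $N$ scalar martingale arguments. The key structural fact I would exploit is that, given $\mcF_n$, the probability that the node added at time $n+1$ is of type $i$ is a fixed polynomial in $\bfx_n$. Specifically, summing the multinomial draw against the adoption distribution,
\begin{align*}
 \p \left( A_{n+1}^i - A_n^i = 1 \, \middle| \, \mcF_n \right) &= \sum_{\bfu} \binom{m}{\bfu} \left( \bfx_n \right)^{\bfu} p_{\bfu}^i = x_n^i + \sum_{\bfu} \binom{m}{\bfu} \left( \bfx_n \right)^{\bfu} \left[ p_{\bfu}^i - \frac{u^i}{m} \right],
\end{align*}
where the decomposition uses that $\sum_{\bfu} \binom{m}{\bfu} \left( \bfx_n \right)^{\bfu} \frac{u^i}{m} = x_n^i$, i.e.\ the expected number of type-$i$ neighbors under the multinomial is $m x_n^i$. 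Writing $f^i \left( \bfy \right) := y^i + 2 P^i \left( \bfy \right)$, where $P^i$ is the $i^{\text{th}}$ coordinate of the vector field $\bfP$ in~\eqref{eq:vec_field}, this says $\E \left( A_{n+1}^i - A_n^i \, \middle| \, \mcF_n \right) = f^i \left( \bfx_n \right)$, and $f^i$ is a continuous (polynomial) function on the simplex.

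Next I would define, for each fixed $i \in \left[ N \right]$, the process $M_n^i := A_n^i - A_0^i - \sum_{j=0}^{n-1} f^i \left( \bfx_j \right)$ with $M_0^i = 0$. The computation above shows each $\left\{ M_n^i \right\}_{n \geq 0}$ is an $\mcF_n$-martingale, and it has bounded increments since $M_{j+1}^i - M_j^i = A_{j+1}^i - A_j^i - f^i \left( \bfx_j \right) \in \left[ -1, 1 \right]$. By the strong law for martingales with bounded increments, $M_n^i / n \to 0$ almost surely for each $i$. I would then invoke the hypothesis that $\bfx_n \to \bfx$ almost surely with $\bfx \in Z_{\bfP}$, so $\bfP \left( \bfx \right) = \bfmath{0}$ and hence $f^i \left( \bfx \right) = x^i$ for every $i$. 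Continuity of $f^i$ gives $f^i \left( \bfx_n \right) \to x^i$, and the Ces\`aro averages converge to the same limit, $\frac{1}{n} \sum_{j=0}^{n-1} f^i \left( \bfx_j \right) \to x^i$ a.s. Combining with $M_n^i / n \to 0$ and the identity $M_n^i / n = \frac{A_n^i}{n} - \frac{A_0^i}{n} - \frac{1}{n} \sum_{j=0}^{n-1} f^i \left( \bfx_j \right)$ yields $A_n^i / n \to x^i$ a.s.

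Finally I would convert the ratios $A_n^i / n$ into the normalized fractions $a_n^i$. Since exactly one node is added per time step, $A_n^1 + \dots + A_n^N = n + \left( A_0^1 + \dots + A_0^N \right)$, so the total is $n + O(1)$ and $a_n^i = A_n^i / \left( n + O(1) \right)$; thus $\lim_{n \to \infty} \left( a_n^i - A_n^i / n \right) = 0$ and $a_n^i \to x^i$ a.s.\ for each $i$. As this holds simultaneously for the finitely many indices $i \in \left[ N \right]$, we get $\bfa_n \to \bfx$ almost surely, and $\bfx \in Z_{\bfP}$ by assumption, which is exactly the claim. I do not expect a serious obstacle here; the only point requiring slight care is bookkeeping the coordinatewise convergence and confirming the limit lies in $Z_{\bfP}$, which is immediate from the hypothesis on $\bfx$. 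The essential content is that the argument of Lemma~\ref{lem:from_half_edges_to_vertices} passes through verbatim once one checks that the drift of each type's count is governed by the corresponding coordinate of $\bfP$, which is precisely the reason $\bfP$ is defined as in~\eqref{eq:vec_field}.
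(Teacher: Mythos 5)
Your proposal is correct and follows essentially the same route as the paper's proof: the same drift computation $\E\left( \bfA_{n+1} - \bfA_n \mid \mcF_n \right) = \bfx_n + 2\bfP\left( \bfx_n \right)$, the same bounded-increment martingale $\bfM_n$, and the same Ces\`aro-mean argument, merely written coordinate by coordinate rather than vectorially. No gaps.
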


\begin{proof}
This is similar to the proof of Lemma~\ref{lem:from_half_edges_to_vertices}. We have seen that
\begin{align*}
 \E \left( \bfA_{n+1} - \bfA_n \, \middle| \, \mcF_n \right) &= \E \left( \bfmath{\delta}^{I_{n+1}} \, \middle| \, \mcF_n \right) = \sum_{i=1}^N \sum_{\bfu} \binom{m}{\bfu} \left( \bfx_n \right)^{\bfu} p_{\bfu}^i \bfmath{\delta}^i\\
&= \bfx_n + \sum_{i=1}^N \sum_{\bfu} \binom{m}{\bfu} \left( \bfx_n \right)^{\bfu} \left[ p_{\bfu}^i - \frac{u^i}{m} \right] \bfmath{\delta}^i = \bfx_n + 2 \bfP \left( \bfx_n \right) =: f \left( \bfx_n \right).
\end{align*}
Let $\bfM_0 = \bfmath{0}$ and define the martingale $\bfM_n := \bfA_n - \bfA_0 - \sum_{j=0}^{n-1} f \left( \bfx_j \right)$. 
This martingale has bounded increments, and thus $\lim_{n \to \infty} \bfM_n / n = \bfmath{0}$ a.s. 
By the definition of the martingale, this shows that a.s.
\[
 \lim_{n \to \infty} \left[ \bfa_n - \frac{1}{n} \sum_{j=0}^{n-1} f \left( \bfx_j \right) \right] = \bfmath{0}.
\]
Now if the limit $\lim_{n \to \infty} \bfx_n$ exists a.s., and any limit point $\bfx$ satisfies $\bfP \left( \bfx \right) = \bfmath{0}$, then also $f\left( \bfx \right) = \bfx$, and thus the limit of the Ces\`aro mean of the sequence $\left\{ f \left( \bfx_n \right) \right\}_{n \geq 0}$ also converges to the same limit point. This then implies that the limit $\lim_{n \to \infty} \bfa_n$ exists a.s.\ and is equal to $\lim_{n \to \infty} \bfx_n$.
\end{proof}

The key observation in the analysis of the asymptotic behaviour of $\left\{ \bfx_n \right\}_{n \geq 0}$ is that it is a stochastic approximation process. 
In higher dimensions, a stochastic approximation process is defined as follows. Let $\bfZ_n$ be a stochastic process in the Euclidean space $\R^N$ and adapted to a filtration $\left\{ \mcF_n \right\}_{n\geq 0}$. Suppose that it satisfies
\[
 \bfZ_{n+1} - \bfZ_n = \frac{1}{n} \left( \bfF \left( \bfZ_n \right) + \bfmath{\xi}_{n+1} + \bfR_n \right),
\]
where $\bfF$ is a vector field on $\R^N$, $\E \left( \bfmath{\xi}_{n+1} \, \middle| \, \mcF_n \right) = \bfmath{0}$, and the remainder terms $\bfR_n \in \mcF_n$ go to zero and satisfy $\sum_{n=1}^{\infty} n^{-1} \left\| \bfR_n \right\| < \infty$ a.s. Such a process is known as a stochastic approximation process.

\begin{lemma}\label{lem:many_stoch_appx}
 The process $\left\{ \bfx_n \right\}_{n \geq 0}$ is a stochastic approximation process with the vector field $\bfP$ as in~\eqref{eq:vec_field}. Furthermore, the noise term $\bfmath{\xi}_{n}$ is bounded: $\left\| \bfmath{\xi}_n \right\|_1 \leq 2N$ for all $n \geq 1$.
\end{lemma}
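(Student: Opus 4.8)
The plan is to carry out the multidimensional analogue of the proof of Lemma~\ref{lem:stoch_appx}, reusing the increment computation already performed in the proof of Lemma~\ref{lem:many_degree}. First I would compute the conditional expectation of the degree increment from~\eqref{eq:Xn_evol}. The multinomial draw contributes $\E\left(\bfu_{n+1}\mid\mcF_n\right)=m\bfx_n$, while the new node contributes $m\,\E\left(\bfmath{\delta}^{I_{n+1}}\mid\mcF_n\right)=m\left(\bfx_n+2\bfP\left(\bfx_n\right)\right)$, where the latter identity is exactly the one established in the proof of Lemma~\ref{lem:many_degree}. Adding these gives $\E\left(\bfX_{n+1}-\bfX_n\mid\mcF_n\right)=2m\bfx_n+2m\bfP\left(\bfx_n\right)$.

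Next I would pass to the normalized process. Since each step adds exactly $2m$ to the total degree sum, $S_n=S_0+2mn$ is deterministic, and one checks the identity $\bfx_{n+1}-\bfx_n=\left(\bfX_{n+1}-\bfX_n-2m\bfx_n\right)/S_{n+1}$ exactly as in the scalar case. Taking conditional expectations yields $\E\left(\bfx_{n+1}-\bfx_n\mid\mcF_n\right)=\frac{2m}{S_{n+1}}\bfP\left(\bfx_n\right)$. I then put the process into the form $\bfx_{n+1}-\bfx_n=\frac1n\left(\bfP\left(\bfx_n\right)+\bfmath{\xi}_{n+1}+\bfR_n\right)$ by setting $\bfmath{\xi}_{n+1}=n\left(\bfx_{n+1}-\bfx_n-\E\left(\bfx_{n+1}-\bfx_n\mid\mcF_n\right)\right)$, which has conditional mean $\bfmath{0}$ by construction, and $\bfR_n=-\frac{S_0+2m}{S_0+2m\left(n+1\right)}\bfP\left(\bfx_n\right)$, the coordinatewise analogue of the remainder in Lemma~\ref{lem:stoch_appx}.

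It then remains to verify the three defining properties. That $\bfR_n\in\mcF_n$ is immediate. For the summability $\sum_n n^{-1}\left\|\bfR_n\right\|<\infty$, I would bound $\left\|\bfP\left(\bfy\right)\right\|_1\le N/2$ on $\Delta^N$, using the triangle inequality together with $\left|p_{\bfu}^i-u^i/m\right|\le1$ and $\sum_{\bfu}\binom{m}{\bfu}\left(\bfy\right)^{\bfu}=1$; this makes $\left\|\bfR_n\right\|=O\left(1/n\right)$. For the noise bound, note $\left\|\bfX_{n+1}-\bfX_n\right\|_1=2m$ and $\left\|2m\bfx_n\right\|_1=2m$, so $\left\|\bfx_{n+1}-\bfx_n\right\|_1\le 4m/S_{n+1}\le 2/\left(n+1\right)$; combining this with Jensen's inequality for the conditional-expectation term gives $\left\|\bfmath{\xi}_{n+1}\right\|_1\le 4\le 2N$ for $N\ge2$.

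I expect no genuine obstacle here, since every step is a norm-level transcription of the scalar argument and the computation doing the real work is borrowed wholesale from Lemma~\ref{lem:many_degree}. The one point deserving care is bookkeeping on the simplex: because $\sum_i p_{\bfu}^i=1$ and $\sum_i u^i=m$, the coordinates of $\bfP$ sum to zero, so $\bfP$ is tangent to $\Delta^N$ and $\bfx_n$ remains a probability vector throughout. This structural fact, though not strictly needed for the stated inequalities, is what guarantees that the norm estimates and the identity for $\bfx_{n+1}-\bfx_n$ are consistent.
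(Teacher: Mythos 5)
Your proposal is correct and follows essentially the same route as the paper: compute $\E\left(\bfX_{n+1}-\bfX_n\mid\mcF_n\right)=2m\bfx_n+2m\bfP\left(\bfx_n\right)$, normalize via $\bfx_{n+1}-\bfx_n=\left(\bfX_{n+1}-\bfX_n-2m\bfx_n\right)/S_{n+1}$, and define $\bfmath{\xi}_{n+1}$ and $\bfR_n$ exactly as the paper does. The only cosmetic difference is that you bound the noise at the aggregate $\ell_1$ level (obtaining $\left\|\bfmath{\xi}_{n+1}\right\|_1\le4$) where the paper bounds each coordinate by $2$ and sums to get $2N$; both yield the stated inequality.
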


\begin{proof}
 From \eqref{eq:Xn_evol} we have that $\E \left( \bfX_{n+1} - \bfX_n \, \middle| \, \mcF_n \right) = \E \left( \bfu_{n+1} \, \middle| \, \mcF_n \right) + m \E \left( \bfmath{\delta}^{I_{n+1}} \, \middle| \, \mcF_n \right)$. Given $\mcF_n$, $\bfu_{n+1}$ is multinomial with parameters $m$ and $\bfx_n$, and so $\E \left( \bfu_{n+1} \, \middle| \, \mcF_n \right) = m \bfx_n$. By construction, we have that
\[
 \E \left( \bfmath{\delta}^{I_{n+1}} \, \middle| \, \mcF_n \right) = \sum_{i=1}^N \sum_{\bfu} \binom{m}{\bfu} \left( \bfx_n \right)^{\bfu} p_{\bfu}^i \bfmath{\delta}^i.
\]
 Let $S_0$ denote the sum of the degrees in $G_0$, and let $S_n = S_0 + 2mn$. A simple calculation gives that $\bfx_{n+1} - \bfx_n = \frac{\bfX_{n+1} - \bfX_n - 2m \bfx_n}{S_{n+1}}$, and so we have
\[
 \E \left( \bfx_{n+1} - \bfx_n \, \middle| \, \mcF_n \right) = \frac{m}{S_{n+1}} \sum_{i=1}^N \sum_{\bfu} \binom{m}{\bfu} \left( \bfx_n \right)^{\bfu} \left[ p_{\bfu}^i - \frac{u^i}{m} \right] \bfmath{\delta}^i = \frac{2m}{S_{n+1}} \bfP \left( \bfx_n \right).
\]
We can then write $\left\{ \bfx_n \right\}_{n \geq 0}$ as a stochastic approximation process:
\[
 \bfx_{n+1} - \bfx_n = \frac{1}{n} \left[ \bfP \left( \bfx_n \right) + \bfmath{\xi}_{n+1} + \bfmath{R}_n \right],
\]
where
\begin{equation}\label{eq:noise_term}
 \bfmath{\xi}_{n+1} = n \left\{ \bfx_{n+1} - \bfx_n - \E \left( \bfx_{n+1} - \bfx_n \, \middle| \, \mcF_n \right) \right\}
\end{equation}
is the martingale term, and the remainder term is
\[
 \bfmath{R}_n = - \frac{S_0 + 2m}{S_0 + 2m \left( n + 1 \right)} \bfP \left( \bfx_n \right).
\]
Clearly $\bfR_n \in \mcF_n$ and similarly as at the end of the proof of Lemma~\ref{lem:stoch_appx} one can show that $\left\| \bfR_n \right\| \leq c / n$ for some constant $c = c\left( N, S_0, m \right)$, which implies that $\sum_{n=1}^{\infty} n^{-1} \left\| \bfR_n \right\| < \infty$ a.s.

Finally, to check that $\left\| \bfmath{\xi}_n \right\|_1 \leq 2N$, note that
\[
 \left| x_{n+1}^i - x_n^i \right| = \left| \frac{X_{n+1}^i - X_n^i - 2m x_n^i}{S_{n+1}} \right| \leq \frac{2m}{2m\left( n + 1 \right)} = \frac{1}{n+1},
\]
and then use~\eqref{eq:noise_term}.
\end{proof}

As in the one-dimensional case, intuitively, trajectories of a stochastic approximation process $\left\{ \bfZ_n \right\}_{n \geq 0}$ should approximate the trajectories $\left\{ \bfZ \left( t \right) \right\}_{t \geq 0}$ of the corresponding ODE $d\bfZ / dt = \bfF \left( Z \right)$. Moreover, since $\left\{ \bfZ_n \right\}_{n \geq 0}$ is a stochastic system, we expect that stable trajectories of the ODE should appear, but unstable trajectories should not. 

The main concept in formalizing this intuition is that of an \emph{asymptotic pseudotrajectory}, introduced by Bena{\"\i}m and Hirsch~\cite{benaim1996asymptotic}. We omit the precise definition, and refer to Bena\"im's lecture notes on the topic for more details~\cite{benaim1999dynamics} (see also~\cite[Section 2.5]{pemantle2007survey} for a concise summary). There are many results that give sufficient conditions for a stochastic approximation process to be an asymptotic pseudotrajectory of the corresponding ODE. In particular,~\cite[Proposition 4.4 and Remark 4.5]{benaim1999dynamics} (see also~\cite[Theorem 2.13]{pemantle2007survey}), together with Lemma~\ref{lem:many_stoch_appx} and the fact that $\bfP$ is Lipschitz, imply the following.
\begin{corollary}\label{cor:stoch_appx_asymptotic_psuedotrajectory}
Let $\left\{ \bfx \left( t \right) \right\}_{t \geq 0}$ linearly interpolate $\left\{ \bfx_n \right\}_{n \geq 0}$ at nonintegral times. 
Then $\left\{ \bfx \left( t \right) \right\}_{t \geq 0}$ is a.s.\ an asymptotic pseudotrajectory for the flow induced by the vector field $\bfP$ via the ODE $d\bfy / dt = \bfP \left( \bfy \right)$.\hfill$\openbox$
\end{corollary}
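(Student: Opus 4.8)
The plan is to verify directly that the stochastic approximation process $\left\{ \bfx_n \right\}_{n \geq 0}$ satisfies the hypotheses of Bena\"im's general criterion~\cite[Proposition 4.4 and Remark 4.5]{benaim1999dynamics} (equivalently~\cite[Theorem 2.13]{pemantle2007survey}), which asserts that the interpolated trajectory of a Robbins--Monro scheme is almost surely an asymptotic pseudotrajectory of its mean ODE. Recall that in that framework one writes the recursion as $\bfx_{n+1} - \bfx_n = \gamma_{n+1} \left( \bfF \left( \bfx_n \right) + \bfmath{\xi}_{n+1} + \bfR_n \right)$, places the iterate $\bfx_n$ at continuous time $\tau_n := \sum_{k=1}^n \gamma_k$, and interpolates linearly on each $\left[ \tau_n, \tau_{n+1} \right]$; it is with respect to this natural time scale that the pseudotrajectory property is asserted. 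Lemma~\ref{lem:many_stoch_appx} already puts $\left\{ \bfx_n \right\}$ in exactly this form, with step size $\gamma_n = 1/n$, mean field $\bfF = \bfP$ as in~\eqref{eq:vec_field}, martingale noise $\bfmath{\xi}_{n+1}$, and remainder $\bfR_n$. Thus the entire task reduces to checking the (few, and here mild) regularity, step-size, and noise conditions that the cited criterion requires.

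I would check these ingredients in turn. First, the mean field must be bounded and Lipschitz: $\bfP$ is a fixed polynomial vector field, and it is only ever evaluated on the compact simplex $\Delta^N$ (since $\bfx_n \in \Delta^N$ for every $n$), on which it is $C^\infty$ and hence bounded and Lipschitz; if a globally defined field is preferred one replaces $\bfP$ by any bounded Lipschitz extension, which leaves the trajectory unchanged. Second, the step sizes must satisfy $\gamma_n \to 0$, $\sum_n \gamma_n = \infty$, and $\sum_n \gamma_n^2 < \infty$, all of which are immediate for $\gamma_n = 1/n$. Third, the noise must be controlled: by Lemma~\ref{lem:many_stoch_appx}, $\left\{ \bfmath{\xi}_{n+1} \right\}$ is a martingale difference sequence, $\E \left( \bfmath{\xi}_{n+1} \, \middle| \, \mcF_n \right) = \bfmath{0}$, that is uniformly bounded, $\left\| \bfmath{\xi}_n \right\|_1 \leq 2N$, so $\sup_n \E \left( \left\| \bfmath{\xi}_{n+1} \right\|^2 \, \middle| \, \mcF_n \right) < \infty$; together with $\sum_n \gamma_n^2 < \infty$ this is precisely the hypothesis under which Proposition~4.4 yields that the accumulated noise $\sum_{k=n}^{j-1} \gamma_{k+1} \bfmath{\xi}_{k+1}$ vanishes uniformly over bounded $\tau$-time windows, almost surely. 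Finally, the remainder $\bfR_n$ is handled by Remark~4.5: it is $\mcF_n$-measurable, tends to $\bfmath{0}$, and satisfies $\sum_n \gamma_n \left\| \bfR_n \right\| = \sum_n n^{-1} \left\| \bfR_n \right\| < \infty$ (established in the proof of Lemma~\ref{lem:many_stoch_appx}), so it makes no contribution to the limiting flow. Assembling these, the cited results give that the interpolated process is almost surely an asymptotic pseudotrajectory of $d\bfy / dt = \bfP \left( \bfy \right)$.

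The only step with any genuine content is the noise condition, namely that the accumulated fluctuations over each bounded stretch of $\tau$-time tend to zero almost surely; this is where asymptotic pseudotrajectory arguments usually have their teeth, but here it is defanged by the uniform bound $\left\| \bfmath{\xi}_n \right\|_1 \leq 2N$ combined with $\sum_n 1/n^2 < \infty$, so the verification is a direct quotation of Proposition~4.4 rather than a fresh estimate. The one bookkeeping point I would flag is the time change: since $\gamma_n = 1/n$, the natural continuous time is $\tau_n = \sum_{k=1}^n 1/k \sim \log n$ rather than $n$, so the phrase ``interpolate at nonintegral times'' is to be read as linear interpolation on this $\tau$-scale; this does not affect the conclusion, which concerns the flow of $\bfP$, and it is exactly this conclusion that, via Conjecture~\ref{conj:many_nonlin}'s reduction and Lemma~\ref{lem:many_degree}, drives the subsequent analysis of $\left\{ \bfa_n \right\}$.
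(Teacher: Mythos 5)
Your proposal is correct and takes essentially the same route as the paper, which also derives the corollary by combining Lemma~\ref{lem:many_stoch_appx} (the stochastic approximation form, the bounded noise, and the summable remainder) with the Lipschitz continuity of the polynomial field $\bfP$ and then quoting Bena\"im's Proposition~4.4 and Remark~4.5. Your additional bookkeeping on the step sizes and the $\tau_n = \sum_{k \leq n} 1/k$ time scale is a correct and slightly more explicit account of what the cited criterion requires.
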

There are further general results about asymptotic pseudotrajectories that apply to the stochastic approximation process $\left\{ \bfx_n \right\}_{n \geq 0}$, e.g., about convergence to attractors and nonconvergence to linearly unstable equilibria. 
However, we omit these, as we prefer to emphasise the main message of Corollary~\ref{cor:stoch_appx_asymptotic_psuedotrajectory}. The main point is that in order to understand the stochastic approximation process $\left\{ \bfx_n \right\}_{n \geq 0}$, we need to understand the vector field $\bfP$, and the corresponding ODE
\[
 \frac{d\bfy}{dt} = \bfP \left( \bfy \right).
\]

Unfortunately, understanding the behaviour of such nonlinear ODEs is a notoriously difficult subject (see, e.g., the book by Hirsch, Smale and Devaney~\cite{hirsch2004differential}). The most successful tool in this area is Lyapunov theory (see, e.g., the recent paper~\cite{benaim2013generalized}), and this can indeed be applied to our problem for special values of the parameters; however, it seems difficult to apply this theory to the vector field $\bfP$ for \emph{generic} values of the parameters $\left\{ p_{\bfu}^i \right\}_{\bfu,i}$.

For instance, if $\bfP$ is a gradient, i.e., $\bfP = - \nabla V$ for some $V : \R^N \to \R$, 
then Corollary~\ref{cor:stoch_appx_asymptotic_psuedotrajectory} and general results about asymptotic pseudotrajectories (see~\cite{benaim1999dynamics}) imply that 
$\bfx_n$ converges almost surely and the limit is a point in the zero set $Z_{\bfP}$, 
which, by Lemma~\ref{lem:many_degree}, implies that Conjecture~\ref{conj:many_nonlin} holds. 
An example of when $\bfP$ is a gradient is when the probability of the new node adopting type $i$ depends only on the number of type $i$ connections, 
i.e., $p_{\bfu}^i = \phi \left( u^i \right)$ for some function $\phi$ which does not depend on $i$. 
This implies that $\phi$ must be of the form $\phi \left( z \right) = \alpha \frac{1}{N} + \left( 1 - \alpha \right) \frac{z}{m}$ for some $0 \leq \alpha \leq 1$,\footnote{To see this, first observe that there are $m+1$ free parameters when constructing such a function: 
$\phi \left( 0 \right)$, $\phi \left( 1 \right)$, $\dots$, $\phi \left( m \right)$. 
These parameters also satisfy constraints, since the probabilities have to sum to $1$: $\sum_{i=1}^N p_{\bfu}^i = 1$ for every vector of types $\bfu$. 
In particular, the following $m$ equations are linearly independent: 
$\left( N - 3 \right) \phi \left( 0 \right) + \phi \left( 0 \right) + \phi \left( i \right) + \phi \left( m - i \right) = 1$, 
for $i = 0, 1, \dots, \left\lfloor m/2 \right\rfloor$, 
and  
$\left( N - 3 \right) \phi \left( 0 \right) + \phi \left( 1 \right) + \phi \left( i \right) + \phi \left( m - i - 1 \right) = 1$, 
for $i = 1, 2, \dots, \left\lfloor \left( m - 1 \right) / 2 \right\rfloor$. 
Consequently, there can be at most $2$ linearly independent solutions to these equations. 
The constant function $\phi \left( z \right) = 1/N$ and the linear function $\phi \left( z \right) = z / m$ are solutions, 
which means that any solution is a linear combination of these. 
The restriction of $\alpha$ to be in $\left[ 0, 1 \right]$ follows from the constraint that the probabilities be nonnegative.
} 
which corresponds to a mixture of the linear model and a uniformly random choice. 
In this case $\bfP \left( \bfy \right) = \frac{\alpha}{2} \left( \frac{1}{N} \mathbf{1} - \bfy \right)$, where $\mathbf{1} \in \R^N$ is the vector with all entries equal to $1$, and thus when $\alpha > 0$ then $\bfa_n$ converges a.s.\ to $\frac{1}{N} \mathbf{1}$.

However, for \emph{generic} parameter values $\bfP$ will not be a gradient. To see this, note that $\bfP$ being a gradient implies that
\begin{equation}\label{eq:grad}
 \pd{\left( \bfP \left( \bfy \right) \right)^i}{y^j} = \pd{\left( \bfP \left( \bfy \right) \right)^j}{y^i}
\end{equation}
for every $i \neq j$. Without any restrictions, there are $\left( N - 1 \right) \binom{m+N-1}{N-1}$ free parameters in $\bfP$. The gradient condition~\eqref{eq:grad} imposes an additional $\binom{N}{2}$ constraints, which will not be satisfied for generic parameter values.

\section{Open problems and future directions}\label{sec:future} 

Our paper leaves open several interesting problems. Two immediate open questions concerning the model are the following.
\begin{itemize}
 \item 
\textbf{Limiting distribution in the linear model for two types.} Our Theorem~\ref{thm:continuous_case} gives us information about the limiting behaviour of $\left\{ a_n \right\}_{n \geq 0}$, but it does not identify the distribution of $a := \lim_{n \to \infty} a_n$.

 For $m=1$ the process $\left\{x_n \right\}_{n \geq 0}$ corresponds to a P\'olya urn where whenever one draws a ball, one puts back two extra balls of the same colour. This is because when a new node joins the graph, its colour automatically becomes the colour of its initial neighbour. Thus the distribution of $x$---and by Lemma~\ref{lem:from_half_edges_to_vertices} the distribution of $a$ as well---is the Beta distribution with parameters $\frac{X_0}{2}$ and $\frac{Y_0}{2}$.

 However, for $m > 1$ we do not know what the limiting distribution is. Note that simulations show that the limiting distribution can be bimodal; see, e.g., Figure~\ref{fig:hist2}.

 \item
 \textbf{Understanding the vector field $\bfP$.} As discussed in Section~\ref{sec:many}, in order to understand the behaviour of the general nonlinear model in the case of multiple types---and in order to prove or disprove Conjecture~\ref{conj:many_nonlin}---a good understanding of the vector field $\bfP$ and the corresponding ODE $d\bfy / dt = \bfP \left( \bfy \right)$ is needed. We leave this as our second open problem.
\end{itemize}

A key property of the model is its \emph{simplicity}. 
However, this also means that certain aspects of real-world networks and processes influencing product adoption are simplified or not considered. 
It would be interesting to understand the following possible extensions of the model, and, in particular, whether anything can be said analytically in these extensions. 

\begin{itemize}
 \item 
 \textbf{Changing preferences.} In the model once a node receives a type, that type is then fixed and cannot change over time. A possible extension of the model is to allow the type of a node to change over time. This can model changing preferences of individuals, e.g., somebody moving from one mobile phone provider to another.

 \item
 \textbf{Allowing multiple types for a single individual.} In the model a node can only have a single type. This is reasonable in many situations (e.g., an individual typically has only one mobile phone provider), but modeling other situations might require allowing nodes to simultaneously have multiple types.

 \item 
 \textbf{Other network evolution models.} The preferential attachment model is a reasonable approximation of some real-world networks, and it has the advantageous property of being analytically tractable.  How does the model behave under other network evolution models? Can similar results be shown analytically/experimentally? Are the results robust to small changes in the network evolution model?

In particular, the affine preferential attachment model is discussed at the end of Section~\ref{sec:results}. 
This model is more involved, because when $c \neq 0$, 
then $\left\{ X_n \right\}_{n \geq 0}$ is not a Markov process, 
so one has to understand the joint evolution of 
$\left\{ \left( A_n, X_n \right) \right\}_{n \geq 0}$. 
A calculation shows that 
$\left\{ \left( a_n, x_n \right) \right\}_{n \geq 0}$ 
is a stochastic approximation process with vector field
\[
 \bfF \left( a, x \right) = 
\left(
 \tfrac{2m}{2m+c} \left( x - a \right) + 2 P \left( \tfrac{2m}{2m+c} x + \tfrac{c}{2m+c} a \right),
 \tfrac{c}{2m+c} \left( a - x \right) +  P \left( \tfrac{2m}{2m+c} x + \tfrac{c}{2m+c} a \right) 
\right).
\]
For $c \geq 0$, the zero set of $\bfF$ is exactly 
$\left\{ \left( a, x \right) : a = x, P \left( x \right) = 0 \right\}$, 
and so the same results as in Section~\ref{sec:results} apply. 
For $c \in \left( -1, 0 \right)$ the zero set of $\bfF$ could be larger; 
we leave the analysis of this as an exercise to the reader. 

The case of uniform attachment is even simpler than preferential attachment, 
since then $\left\{ A_n \right\}_{n \geq 0}$ is a Markov process, 
and it is unnecessary to consider the sum-of-degrees process $\left\{ X_n \right\}_{n \geq 0}$. 
The same results as in Section~\ref{sec:results} apply. 

We leave the study of further network evolution models for future work.

 \item
 \textbf{Other type adoption mechanisms.} The model incorporates a fairly general type adoption mechanism, but various modifications would be interesting to explore. For instance, in real life choices are often made based on the opinions of specific friends, not just based on aggregate information of one's friends. 

 \item
 \textbf{Marketing.} In essence, the model describes word-of-mouth recommendations, and does not consider marketing efforts by the competing companies, such as advertising. How does incorporating marketing affect the results?

\end{itemize}

In conclusion, through a simple model we have coupled network evolution and type adoption, leading to a possible explanation of coexistence in preferential attachment networks. Exploring various modifications and extensions of this model, such as those mentioned above, will be crucial in determining the robustness of this phenomenon, and will help elucidate our understanding of these processes.


\section*{Acknowledgments}

We thank Erik Bodzs{\'a}r, Gy{\"o}rgy Korniss, G{\'e}za Mesz{\'e}na, Jasmine Nirody, Nathan Ross, Allan Sly, and Isabelle Stanton for helpful discussions. 
We also thank Oliver Riordan and an anonymous referee for useful comments on the manuscript.


\bibliographystyle{plain}
\bibliography{coex}



\end{document}